\renewcommand\footnotetextcopyrightpermission[1]{} 
\title{Synthesis of Safety Specifications for Probabilistic Systems}
\author{Gaspard Ohlmann$^*$}
\affiliation{
    \city{Mulhouse}
    \country{France}
    }
\email{gaspard.ohlmann@outlook.com}
\author{Edwin Hamel-De le Court$^*$}
\affiliation{
    \institution{Imperial College}
    \city{London}
    \country{United Kingdom}
}
\email{e.hamel-de-le-court@imperial.ac.uk}
\author{Francesco Belardinelli}
\affiliation{
    \institution{Imperial College}
    \city{London}
    \country{United Kingdom}
}
\email{francesco.belardinelli@imperial.ac.uk}
\begin{abstract}
Ensuring that agents satisfy safety specifications can be crucial in safety-critical environments. While methods exist for controller synthesis with safe temporal specifications, most existing methods restrict safe temporal specifications to probabilistic-avoidance constraints. Formal methods typically offer more expressive ways to express safety in probabilistic systems, such as Probabilistic Computation Tree Logic (PCTL) formulas. Thus, in this paper, we develop a new approach that supports more general temporal properties expressed in PCTL. Our contribution is twofold. First, we develop a theoretical framework for the Synthesis of safe-PCTL specifications. We show how the reducing global specification satisfaction to local constraints, and define CPCTL, a fragment of safe-PCTL. We demonstrate how the expressiveness of CPCTL makes it a relevant fragment for the Synthesis Problem. Second, we leverage these results and propose a new Value Iteration-based algorithm to solve the synthesis problem for these more general temporal properties, and we prove the soundness and completeness of our method.

\end{abstract}
\keywords{Controller Synthesis, PCTL, Markov decision process, Value Iteration}
\newtheorem{definition}{Definition}
\newcommand{\alit}[1]{\mathtt{L}\left(#1\right)}
\newcommand{\last}[1]{\text{last}\left(#1\right)}
\newcommand{\paths}[1]{\text{paths}\left(#1\right)}
\newcommand{\dclosure}[1]{\uparrow\left(#1\right)}
\newcommand{\BibTeX}{\rm B\kern-.05em{\sc i\kern-.025em b}\kern-.08em\TeX}
\newcommand{\CPCTL}{\texttt{CPCTL}}
\newcommand{\MCpathSatisfiesIndex}[1]{(\mathcal{M}, #1)\models\,}
\newcommand{\MCstateSatisfies}[1]{(\mathcal{M}, {#1})\models\,}
\newcommand{\WW}{\,\mathbf{W}\,}
\newcommand{\GG}{\mathbf{G}\,}
\newcommand{\safePCTL}{$\texttt{PCTL}_{safe}$}
\newcommand{\cosafe}{$\texttt{PCTL}_{cosafe}$}
\newcommand{\LPCTL}{\texttt{L-PCTL}}
\newcommand{\livePCTL}{$\texttt{PCTL}_{live}$}
\newcommand{\MultiObjectives}{\texttt{MOA}}
\newcommand{\counterValuationSet}[1]{\{ 0,1\}^{sf(#1)}\times [0,1]^{pf(#1)}}
\newcommand{\SubformulaAndConvention}[2]{denote the set of state subformulas $\mathcal{SF}(#1)=\{#1_1,\dots,#1_{sf(#1)}\}$ and path formulas $\mathcal{PF}(#1)=\{#2_1,\dots,#2_{pf(#1)}\}$, with the added convention that for all $j\leq pf(#1)$, $#1_j=\mathbb P_{\geq p_j} (#2_j)$}
\newcommand{\aug}[1]{\hat{#1} }
\newtheorem{lemma}{Lemma}
\newtheorem{corollary}{Corollary}
\newtheorem{example}{Example}
\newtheorem{theorem}{Theorem}
\begin{document}

\newcommand{\PCTL}{\texttt{PCTL}}

\newcommand{\MDP}{\mathcal{M}} 
\newcommand{\St}{\mathcal{S}} 
\newcommand{\Act}{\mathcal{A}(s)} 

\newcommand{\AugMDP}{\aug{\mathcal{M}}} 
\newcommand{\AugSt}{\aug{\mathcal{S}}} 
\newcommand{\AugAct}{\aug{\mathcal{A}}(\aug{s})} 

\newcommand{\gac}[1]{\color{blue}{Gac: #1}}
\newcommand{\del}[1]{\color{orange}{\textbf{Will be deleted, do not remove yet}: #1} \color{black}}

\maketitle{}

$*$ \emph{These authors contributed equally to this work.}


\section{Introduction}

Synthesizing policies that provably satisfy rich temporal specifications, while optimizing reward, is a long-standing challenge at the interface of Formal Methods and Reinforcement Learning. In probabilistic settings, \emph{Probabilistic Computation-Tree Temporal Logic} (PCTL) \cite{Ciesinski2004,baier2008principles} provides a natural language to express safety and performance requirements over Markov decision processes (MDPs). However, general PCTL synthesis is computationally hard and, under the assumption of history-dependent strategies, undecidable \cite{brazdil2006stochastic,DBLP:conf/icalp/BrazdilFK08}. Moreover, randomness and memory are necessary for \PCTL$\,$ synthesis,
even for restricted fragments \cite{DBLP:conf/ifipTCS/BolligC04}. Therefore, further investigations 
are required to construct strategies that satisfy such specifications.

This paper develops a new   framework for \emph{Safe PCTL} (\safePCTL) that we apply to the synthesis of specifications in {\em Continuing \PCTL} (\CPCTL), a fragment of \safePCTL\ that we here introduce. Our starting point is a structural insight: the satisfaction of a broad class of PCTL safety properties can be enforced by local, per-transition inequalities in a suitably \emph{augmented MDP}. We introduce two local conditions -- \emph{state compatibility} and \emph{path compatibility} -- that together imply global satisfaction of the original specification through a coherence theorem. 
Building on these foundations, we identify a syntactic fragment -- 
CPCTL -- for which these local constraints yield constructive algorithms, while still allowing nesting of probabilistic operators and thus providing a new computable class of safety specifications.

On the algorithmic side, we propose \textsc{CPCTL-VI}, a value-iteration type \cite{10.1007/978-3-540-69850-0_7} algorithm that monotonically tightens inductive lower bounds on satisfaction probabilities. \textsc{CPCTL-VI} represents a constructive method for the synthesis problem of \CPCTL, representing complex nested probabilistic and temporal behaviors.
\paragraph{Contributions.} The key contributions of this paper can be summarized as follows:
\begin{enumerate}%
\item \textbf{An analysis of temporal and probabilistic specifications:} We introduce Continuing \PCTL\, (\CPCTL), a fragment of safe \PCTL$\,$ generalizing multi-objective avoidance specifications, allowing nesting of probabilistic operators. We show that there currently exists no decidability result for such specifications. Additionally, we establish structural results for \CPCTL$\,$ such as weak reduction to literal projections.
  \item \textbf{A key theoretical result:} A new \emph{augmented MDP} construction for Safe PCTL that encodes global satisfaction as local linear inequalities. We define two conditions, the \emph{state compatibility} and \emph{path compatibility}, and show that their satisfaction guarantees the satisfaction of the corresponding formula. 
  \item \textbf{An Algorithm for the \CPCTL\ Synthesis Problem:} \textsc{CPCTL-VI} is a value-iteration algorithm that computes lower bounds on satisfaction probabilities and certifies realizability. Moreover, the algorithm is optimal under a generalized version of Slater's assumption.
\end{enumerate}

The paper is organized as follows. 
In Sec.~\ref{sec:related_work} we review the results on the synthesis problem for \PCTL \ and some of its significant fragments. 
In Sec.~\ref{sec:preliminaries}
we provide background on MDPs, RL, and \PCTL. 
In Sec.~\ref{sec:CPCTL}
we define the Continuing \PCTL fragment and analyse its structural properties, including expressivity.
In Sec.~\ref{sec:augmented}
we introduce the augmented MDP construction, as well as the state/path-compatibility conditions, and prove the coherence theorem.  In Sec.~\ref{sec:VI} we present our value iteration  algorithm for \CPCTL \ synthesis -- including soundness and optimality results -- which is then evaluated empirically in Sec.~\ref{sec:experiments}.
Finally, we conclude in Sec.~\ref{sec:conclusions} pointing to future work.

\section{Related Work} \label{sec:related_work}

We present the inclusion relations between \PCTL \ and its significant fragments -- including \CPCTL -- in
Figure \ref{fig:exphier}.
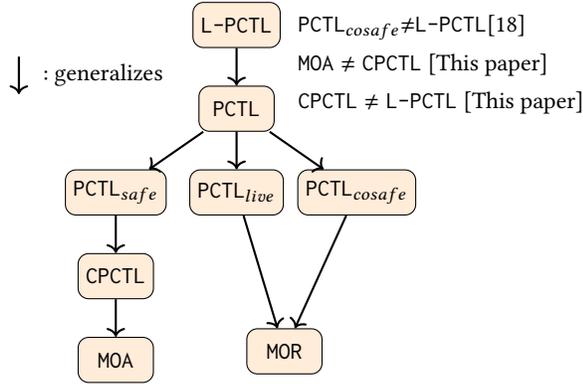
\begin{figure}[H]
\begin{tikzpicture}[
  node distance=0.5cm and 0.3cm,
  every node/.style={draw, rounded corners, align=center, minimum width=1cm, minimum height=0.6cm, fill=orange!15},
  arrow/.style={->, thick}
]

\node (lpctl) {\LPCTL};
\node (pctl) [below =of lpctl] {\PCTL};

\node (live) [below =of pctl] {\livePCTL};
\node (safe) [left=of live] {\safePCTL};
\node (cosafe) [below right=of pctl] {\cosafe};
\node (cpctl) [below=of safe] {\CPCTL};
\node (multi) [below=of cpctl] {\MultiObjectives};

\node (MOR) [below right=1.5cm and -0.5cm of live] {\texttt{MOR}};

\draw[arrow] (lpctl) -- (pctl);
\draw[arrow] (pctl) -- (safe);
\draw[arrow] (pctl) -- (live);
\draw[arrow] (pctl) -- (cosafe);
\draw[arrow] (safe) -- (cpctl);
\draw[arrow] (cpctl) -- (multi);
\draw[arrow] (live) -- (MOR);
\draw[arrow] (cosafe) -- (MOR);

\node[draw=none,fill=none] (ineg) [below right =-1.7cm and 0.2cm of pctl] {\cosafe$\neq $\LPCTL  \cite{DBLP:conf/ijcai/SongFZ15} };
\node[draw=none,fill=none] (ineg3) [below right=-1.2cm and 0.2cm of pctl] {$\MultiObjectives \neq \CPCTL$  [This paper]};
\node[draw=none,fill=none] (ineg4) [below right=-0.7cm and 0.2cm of pctl] {$\CPCTL \neq \LPCTL$  [This paper]};

\node[draw=none,fill=none] (test1) [below left= -0.5cm and 1.8cm of lpctl] {};
\node[draw=none,fill=none] (test2) [below=of test1] {};
\node[draw=none,fill=none] (test3) [above right= -0.05cm and  -0.3cm of test2] {: generalizes};

\draw[arrow] (test1) -- (test2);

\end{tikzpicture}
\caption{Expressivity of \CPCTL$\,$ in the PCTL hierarchy}\label{fig:exphier}
\end{figure}

\paragraph{Undecidability of \PCTL\ Synthesis for HR policies.}
For history-dependent (perfect-recall) stochastic controllers (\textbf{HR}), controller synthesis is undecidable for logics slightly more expressive than \PCTL, such as $L$-\PCTL, which allows linear combinations of probabilities~\cite{bordais2022strategy}, or $\texttt{PCTL}_{G\&L}$, which allows finite window operators such as $\mathbf{U}^{\leq n}$ or $\mathbf{W}^{\leq n}$ \cite{brazdil2006stochastic,DBLP:conf/icalp/BrazdilFK08}. While the proof of undecidability, such as reduction from Minsky machines, cannot be directly applied to \PCTL, it suggests that the \PCTL\ synthesis problem is either undecidable or highly complex. Moreover, memory and randomization are necessary for \PCTL\ synthesis, even when no nesting is allowed \cite{DBLP:conf/ifipTCS/BolligC04}. For the co-safe fragment of \PCTL -- \cosafe -- decidability can be recovered, although with very high complexity (3EXPTIME \cite{DBLP:conf/ijcai/SongFZ15}). 

\paragraph{Complexity for MR and MD policies.}
For memoryless randomized strategies (\textbf{MR}), 
\cite{kuvcera2005controller} provides a construction that yields a $\mathbf{PSPACE}$ upper bound for the general \PCTL\ synthesis problem. 
In the easier case of memoryless deterministic strategies (\textbf{MD}), the synthesis problem for both \PCTL\ and $L$-\PCTL\ objectives is already $\mathbf{NP}$-complete~\cite{brazdil2006stochastic}. Furthermore, 
the PCTL synthesis problem for \textbf{MD} policies is NP-complete \cite{DBLP:conf/ifipTCS/BolligC04} and is in \textsc{EXPTIME} for \textbf{MR} policies \cite{DBLP:journals/fuin/KuceraS08}.

\paragraph{Multi-Objective Reachability and Avoidance.}
Multi-objective reachability (\texttt{MOR}) and avoidance (\texttt{MOA}) queries have been shown to be solvable through Linear Programming \cite{DBLP:journals/lmcs/EtessamiKVY08}, and thus can be solved in polynomial time. A value iteration scheme for multiple objectives based on computing Pareto frontiers appears in \cite{WHITE82}, and a practical algorithm that generates successive approximations of the Pareto frontier for multi-objective reachability and avoidance appears in \cite{DBLP:conf/atva/ForejtKP12}. These results are summarized in Table \ref{fig:comp}.

\paragraph{Summary.} 
We summarize these results in Table~\ref{fig:complmerged}.
The multi-objective avoidance (resp. reach) fragment \MultiObjectives \ (resp.~\texttt{MOR}) is composed of all the formulas of the form $\bigwedge_{j=1}^n \mathbb{P}_{\geq p_j} (\mathbf{G}\neg a_j)$ (respectively $\bigwedge_{j=1}^n \mathbb{P}_{\geq p_j} (\mathbf{F} a_j)$) for $a_j\in AP$, where $\mathbf{G}$ is the globally operator and $\mathbf{F}$ is the reach operator. These two fragments have been studied individually, as they are of theoretical and practical relevance. The definition of the other fragments along with a study of their expressivity and/or complexity can be found in \cite{bordais2022strategy} for \LPCTL, \cite{bertrand2012bounded,brazdil2006stochastic} for $\PCTL_{G\&L}$, \cite{katoen2014probably,DBLP:conf/ijcai/SongFZ15} for \cosafe and \safePCTL , and \cite{DBLP:journals/lmcs/EtessamiKVY08} for \MultiObjectives.

\begin{table}[H]
\begin{center}
\begin{NiceTabular}{c c c}[hvlines]
\Hline
{\bf Class} & {\bf Strategies} & {\bf Complexity}\\
\Block{2-1}{\texttt{L-PCTL}} & \textbf{HR} & $\Sigma_{1}^1$ hard \cite{bordais2022strategy} \\
 & \textbf{MD} & $\mathbf{NP}$-complete \cite{DBLP:conf/ifipTCS/BolligC04} \\
\Hline
\Block{2-1}{$\texttt{PCTL}$}
 & \textbf{HR} & Unknown \\
 & \textbf{MD} & $\mathbf{NP}$-complete \cite{DBLP:conf/ifipTCS/BolligC04} \\
\Hline
\Block{1-1}{$\texttt{PCTL}_{G\&L}$}
 & \textbf{HR} & $\Sigma_{1}^1$ hard \cite{brazdil2006stochastic} \\
\Hline
\Block{1-1}{\safePCTL}
 & \textbf{HR} & Unknown \\
\Hline
\Block{1-1}{\cosafe}
 & \textbf{HR} & $\Delta_1^0$ \cite{DBLP:conf/ijcai/SongFZ15} \\
\Hline
\Block{1-1}{\MultiObjectives}
 & \textbf{HR} & \textbf{PTIME} \cite{DBLP:journals/lmcs/EtessamiKVY08} \\
\Hline
\Block{1-1}{\texttt{MOR}}
 & \textbf{HR} & \textbf{PTIME} \cite{DBLP:journals/lmcs/EtessamiKVY08} \\
\end{NiceTabular}
\end{center}
\vspace{0.2cm}
\caption{Complexity of \PCTL$\,$ Synthesis for \PCTL\ fragments under different strategy assumptions. }
\label{fig:complmerged}
\end{table}

\paragraph{Continuing \PCTL.}
We identify the fragment $\CPCTL$ of \safePCTL\ that strictly extends multi-objective avoidance specifications, and we provide a synthesis algorithm for this logic
that solves the decidability problem and outputs a safe policy whenever one exists under a generalized Slater's Assumption. This assumption intuitively means that the formula can be satisfied when $\geq$ are replaced by strict $>$. Precise definitions are provided in Sec.~\ref{sec:preliminaries}. In addition, \CPCTL \ contains formulas that do not belong to any of the known computable classes. Common classes for which an algorithm has been described are summarized in Table~\ref{fig:comp}. Although the decidability of the synthesis problem for the whole \safePCTL remains open, our contribution is a step towards
understanding the subtle boundaries of decidability in \PCTL\ synthesis. As no existing algorithm solves the synthesis problem for $\CPCTL$ or any larger class of properties to the best of our knowledge, performance comparisons are not available.

\begin{table}[H]
\begin{tabular}{|c|c|}
     \hline
     {\bf Class} & {\bf Safe \& Complete Algorithm} \\ \hline
     Avoid & Yes \cite{DBLP:journals/tcs/HaddadM18,DBLP:journals/lmcs/EtessamiKVY08} \\ \hline 
     \MultiObjectives & Yes \cite{DBLP:journals/lmcs/EtessamiKVY08}  \\ \hline
     \CPCTL & Yes [This paper] \\ \hline
     \safePCTL & No \\ \hline 
     \cosafe & Yes \cite{DBLP:conf/ijcai/SongFZ15} \\ \hline
     \texttt{MOR} & Yes \cite{DBLP:journals/lmcs/EtessamiKVY08} \\ \hline
\end{tabular}
\vspace{0.1cm}
\caption{Currently computable classes, \textbf{HR} strategies.}\label{fig:comp}
\end{table}
\vspace{-1cm}

\section{Preliminaries} \label{sec:preliminaries}

In this section we provide preliminaries on Markov decision chains and processes, and Probabilistic CTL. We refer the interested reader to \cite{baier2008principles} for an in-depth presentation.

\subsection{Stochastic Models}

\paragraph{Markov Decision Processes}

A (labeled) \textit{ Markov Decision Process} \cite{10.5555/528623} is a tuple $\mathcal
M=\langle S,A,P,s_{0}, \allowbreak AP, L, R\rangle$, where $S$ is a
set of \textit{states}; $A$ is a mapping that associates every state
$s\in S$ to a nonempty finite set $A(s)$ of \emph{actions}; $P: S \times A \to \mathcal{D}(S)$ is a
\textit{transition probability function} that maps every state-action
pair $(s,a)$ to a probability
measure $P(s, a)$ over $S$; $s_{0}\in S$ is the \textit{initial state}\footnote{This can be assumed wlog compared to a model with an \emph{initial probability distribution} since it is always possible to add a new initial state to such a model with an action from this initial state whose associated probability distribution is the aforementioned initial probability distribution.};
$AP$ is a set of \emph{atomic propositions} (or atoms); $L:S\mapsto2^{AP}$ is a \emph{labeling function}; and $R:S\mapsto \mathbb R$ is the \emph{reward function}. 

In the following, all MDPs are labeled so we will simply call them Markov Decision Processes (MDP). For the sake of simplicity, we will write
$P(s'|s, a)$ instead of $P(s,a)(s')$. An MDP is finite if the sets of
states and actions are finite. 

\paragraph{Paths.}
A finite (resp.~infinite) {\em path} (or history) in an MDP $\mathcal M$ is a finite
(resp.~infinite) word $\rho=s_0 \cdots s_{n-1}s_n$
(resp.~$\cdots$) such that  
for any 
$i\leq n-1$ (resp. for any 
$i \in \mathbb{N}$), $s_i$ is a state of $\mathcal M$ and there exists an action $a_{i}\in A(s_i)$ such that $s_{i+1}$ is in the support of $P(s_{i},a_{i})$. In addition, 
$\paths{\mathcal M}$ denotes the set
of {\em infinite} paths of $\mathcal M$.
We denote as $\mathbb{P}_{\mathcal{M},\pi}$ the usual probability measure on infinite paths $s_0s_1\cdots\in S^\omega$ induced by a policy $\pi$ on a $\mathcal{M}$ (c.f.~\cite{baier2008principles} for full details). Finally, for a path  $\rho=s_0s_1\cdots$ and integer $i\ge 0$, we let $\rho_i=\rho[i]=s_i$.

\paragraph{Policies.}

For any measurable space $E$ \cite{halmos2013measure}, we denote by $\mathcal D(E)$ the set of probability measures over $E$. A {\em policy} (also called \emph{controller} or \emph{strategy}) $\pi$ of $\mathcal M$ is a
mapping that associates any finite path $\rho$
of $\mathcal M$ to a probability measure over  $A(\last{\rho})$.
It is memoryless if $\pi(\rho)$ only depends on $\last{\rho}$, in which case we denote $\pi(\rho)=\pi(s)$, where $s=\last{\rho}$. It is deterministic if for any finite path $\rho$,
$\pi(\rho)$ is a Dirac measure. We let 
\textbf{HR} denote the \emph{history-dependent and randomized} policies.

\paragraph{Markov Chain Induced by a Policy on a MDP}
For any policy
$\pi$ of $\mathcal M$, 
we let $\mathcal
M_\pi = (S_{\pi},P_{\pi},s_{0}, AP, L )$ denote the (labelled) {\em Markov chain} induced by $\pi$ in $\mathcal M$ such that
$S_{\pi}$ is the set of finite histories $\rho$ of $\mathcal M$, $AP$ is the same as in $\mathcal{M}$, $L_{\pi}(\rho) = L(\last{\rho})$, 
and $$P_{\pi}(s'|\rho) = \sum_{a \in A(s)} \pi(a|\rho)P(s'| \last{\rho},a).$$  When the policy is memoryless, the states of $\MDP_{\pi}$ correspond to the states of $\MDP$ and will be denoted the same. When the policy is history-dependent, the states of $\MDP_{\pi}$ correspond to all possible histories.  
For more details on MDPs and induced Markov chains, see \cite{baier2008principles,BSSstochastic}.

\subsection{Probabilistic and Temporal Specifications}

In this section, we define the probabilistic temporal Logic \PCTL \cite{Ciesinski2004} and its safe fragment \safePCTL \cite{katoen2014probably}. 
These two logics distinguish between two kind of formulas: path and the state formulas, and allow for probabilistic nesting of temporal specifications.

\subsubsection{Probabilistic Temporal Logic (\PCTL)} 

Let $AP$ be a finite set of atomic propositions and $q\in[0,1]$.

A formula of \PCTL$\,$ is generated by the nonterminal $\Phi$ in the following grammar:

\begin{eqnarray*}
\Phi &::= & a \mid \Phi\wedge \Phi 
\mid \Phi\vee\Phi \mid \neg \Phi
\mid \mathbb{P}_{\geq q}(\varphi),\quad a \in AP\cup \{\bot,\top\},\\
\varphi &::= &
\mathbf{X}\,\Phi_1  \mid 
\Phi_1\,\mathbf{W}\,\Phi_2 \mid \Phi_1\, \mathbf{U}\, \Phi_2.
\end{eqnarray*}

We call the formulas generated by $\Phi$ the \emph{state formulas}, and the formulas generated by $\varphi$ the \emph{path formulas}. The satisfaction relation $\models$ is defined on both sets by induction as follows.

\begin{definition}[Semantics] Given a Markov chain $\mathcal{M}$, state $s$, and path $\xi$, the satisfaction relation $\models$ is defined inductively as follows: 
\begin{tabbing}
\textbf{State formulas}: \ \ \ \ \ \ \ \ \ \ \ \ \ \= \\
$\MCstateSatisfies{s} a$ 
\ \  \ \ \ \ 
\ \ \ \ \  \ \ \ \ \ \ 
\= iff \ \ \= $a\in L(s)$\\
$\MCstateSatisfies{s} \neg \Phi_1$ \> iff \> $ (\mathcal{M},s)\not\models  \Phi_1$\\
$\MCstateSatisfies{s} \Phi_1\wedge \Phi_2$ \> iff \> $\MCstateSatisfies{s}  \Phi_1$ and $\MCstateSatisfies{s} \Phi_2$\\
$\MCstateSatisfies{s} \Phi_1\vee \Phi_2$ \> iff \> $\MCstateSatisfies{s} \Phi_1$ or $\MCstateSatisfies{s} \Phi_2$\\ 
$\MCstateSatisfies{s} \mathbb{P}_{\geq q}(\varphi)$ \> iff \>  $\mathbb{P}_{\mathcal{M}^s}\big(\{\xi\in\text{Paths}(s),\, \MCpathSatisfiesIndex{\xi} \varphi\}\big)\geq q$.\\
\textbf{Path formulas}:\\
$\MCpathSatisfiesIndex{\xi} \mathbf{X} \Phi_1$ \> iff \> $\mathcal{M}^{\xi[1]}\models \, \Phi_1 $ \\
$\MCpathSatisfiesIndex{\xi} \Phi_1 \, \mathbf{U} \, \Phi_2$ \> iff \> $\exists \,j\in \mathbb{N}$, $(\mathcal{M},{\xi[j]})\models \, \Phi_2 $ \\
\> \> and $\forall\, i \leq j$, $\MCstateSatisfies{\xi[i]} \Phi_1, $\\
$\MCpathSatisfiesIndex{\xi} \Phi_1\,\mathbf{W} \, \Phi_2 $ \> iff \> \= $\MCpathSatisfiesIndex{\xi} \left( \Phi_1\,\mathbf{U}\, \Phi_2\right) \vee (\mathbf{G}\, \Phi_1)$,\\
where $\MCpathSatisfiesIndex{\xi} \mathbf{G}\,\Phi_1$ \>  iff \> $\forall j\in \mathbb{N}$, $\xi[j]\models \Phi_1$.
\end{tabbing}
\end{definition}

\paragraph{Satisfaction, Markov Chain}
$(\MDP,s)\models \Phi$ is alternatively denoted $\MDP^s \models \Phi$. When clear from the context, we write for state formulas (resp. path formulas) $s\models \Phi::=\MCstateSatisfies{s} \Phi$ (resp. $\xi \models \phi::=\MCpathSatisfiesIndex{\xi} \phi$). Moreover, for a Markov chain $\mathcal{M}$, we define its \emph{semantics} of a state formula $\Phi$ as  $\llbracket \Phi \rrbracket_{\mathcal{M}} = \{ s ,\, (\mathcal{M}, s) \models \Phi \}$ and its \emph{semantics along a path $\xi$} as $\llbracket \Phi \rrbracket_{\xi,\mathcal{M}} = \{ i\in \mathbb{N},\,\MCstateSatisfies{\xi[i]} \Phi \}$. For readability, we may write $\llbracket \Phi \rrbracket::=\llbracket \Phi \rrbracket_{\mathcal{M}}$ and $\llbracket \Phi \rrbracket_\xi::=\llbracket \Phi \rrbracket_{\xi,\mathcal{M}}$.

\paragraph{Satisfaction, Probabilities}
For a Markov Chain $\MDP$ and a path formula $\phi$, we write $\mathbb{P}(\phi|\MDP)$ (resp. $\mathbb{P}(\phi|s,\mathcal{M})$) to denote the probability of the set of paths starting from the initial state of $\MDP$ (resp. $s$) and satisfying $\phi$ in $\MDP$. When clear from the context, we denote $\mathbb{P}(\phi|\pi,s)::=\mathbb{P}(\phi|s,\MDP_\pi)$.

\paragraph{Satisfaction, MDP}
When the Markov chain is of the form $\mathcal{M}_\pi$ and is induced by policy $\pi$ on the MDP $\mathcal{M}$, the states of $\mathcal{M}_\pi$ are histories $\rho\in \mathcal{H}(\mathcal{S})$. Instead of $(\mathcal{M}_\pi, \xi) \models \phi$, where $\xi$ is a path of histories $\rho_i\in \mathcal{H}(\mathcal{M})$, for $i\in \mathbb{N}$, we may write $\xi\models_\pi \phi$ or simply $\xi \models \phi$. Similarly, instead of $\mathcal{M}_\pi^{\rho}\models \Phi$, we may write $\rho \models \Phi$. When the policy $\pi$ is memoryless, we write $s \models \Phi$ for $\mathcal{M}_\pi^{\rho}\models \Phi$, when $s = \last{\rho}$.

\subsubsection{Safe PCTL.}
The safe fragment of \PCTL, denoted \safePCTL, is the subset of \PCTL formulas, whose syntax is given by 
\begin{eqnarray*}
\Phi &::= & a \mid \neg a\mid \Phi\wedge \Phi 
\mid \Phi\vee\Phi 
\mid \mathbb{P}_{\geq q}(\varphi),\quad a \in AP\cup \{\bot,\top\},\\
\varphi &::= &
\mathbf{X}\,\Phi_1  \mid 
\Phi_1\,\mathbf{W}\,\Phi_2.
\end{eqnarray*}

The semantics of the relevant operators remains the same. As opposed to \PCTL, the negation is only allowed on atomic propositions, and the until operator $\mathbf{U}$ is not allowed. Note that the operator $\GG$ is in the safety fragment as $\GG \Phi = \Phi \WW \bot$.

The safety fragment of \PCTL$\,$ extends the multi-objective avoidance case, and corresponds to formulas whose
violation can be witnessed in finite time, but not their satisfaction. For example, the \safePCTL\, formula $\mathbb{P}_{\geq{1/2}}(\mathbf{G}\neg a)$ -- stating that at least $50\%$ of the paths should always avoid $a$ -- satisfies this property, as any path can always \emph{a priori} reach $a$ later. For more in-depth intuition and analysis of the safety fragment, we refer the reader to \cite{katoen2014probably}.


When a policy satisfies a safety formula, we will say that the policy is \emph{safe} (for this formula). Finally, the term \emph{safety specification} is synonymous of \emph{safety formula}.

\section{Continuing PCTL} \label{sec:CPCTL}


\subsection{Definition and Problem Statement}

%
 Continuing PCTL (CPCTL) is the fragment of \PCTL\, built on \emph{state} formulas $\Phi$ and \emph{path} formulas $\varphi$, according to the following BNF:
\begin{eqnarray*}
\Phi &::= & a \mid \neg a \mid \Phi\wedge \Phi
\mid \mathbb{P}_{\geq q}(\varphi),\quad a\in AP,\\
\varphi &::= &
\Phi_1\,\mathbf{W}\,(\Phi_1 \land \Phi_2).
\end{eqnarray*}

Continuing \PCTL$\,$, as opposed to \safePCTL, does not allow disjunctions or the next operator $\mathbf{X}$. Moreover, the syntax of the weak until $\mathbf{W}$ is restricted such that condition $\Phi_1$ appears as a conjunct in goal $\Phi_1 \land \Phi_2$.

Hereafter let $\mathcal M$ be an MDP and $\Phi$ a \CPCTL$\,$state formula, and \SubformulaAndConvention{\Phi}{\phi}.

\paragraph{The Synthesis Problem.} In this contribution we design an algorithm for the following problem: Let $\MDP$ be an MDP and $\Phi\in \CPCTL$. 
Then, find a \textbf{HR} policy $\pi$ such that $\MDP_\pi \models \Phi$.


We emphasize first that we adopt the standard semantics of \PCTL\ operators for the synthesis problem:
a single policy is used throughout the formula evaluation.
This differs from the semantics often used in the model-checking problem, where
each probabilistic operator in the formula might be associated with a different strategy in principle. On the other hand, in the synthesis problem as stated above, a unique strategy is 
used to evaluate all the probabilistic operators. 


\paragraph{Slater's Generalized Assumption}
We provide an analogue of Slater's assumption in the case where nesting are considered, that we call Slater's generalized assumption. Intuitively, a Markov Chain $\MDP$ and a formula $\Phi$ satisfy Slater's generalized assumption if $\MDP \models \Psi$ for any $\Psi$, where $\Psi$ is obtained by replacing all $p_i$ in the formula $\Phi$ by $q_i>p_i$.

\begin{definition}\textbf(Slater's generalized assumption)
    We consider $\MDP$ a MDP and $\Phi$ a \CPCTL$\,$formula. 
    For $\boldsymbol d = (d_1,\dots,d_{pf(\Phi)})$, we define $\mathtt{T}(\Phi,\boldsymbol d)\in CPTCL$ by induction as $\mathtt{T}(\Phi,\boldsymbol d) = \mathtt{T}(\Phi_{j_1},\boldsymbol d)\wedge \mathtt{T}(\Phi_{j_2},\boldsymbol d)$ for $\Phi=\Phi_{j_1}\wedge \Phi_{j_2}$, $$\mathbb{P}_{\geq d_j}[\mathtt{T}(\Phi_{j_1},\boldsymbol d)\mathbf{W}\mathtt{T}(\Phi_{j_1}\wedge \Phi_{j_2},\boldsymbol d)]$$ for $\Phi=\mathbb P_{\geq q_j}\left[\Phi_{j_1}\mathbf{W}\Phi_{j_2}\right]$, $a$ for $\Phi=a\in AP$ and $\neg a$ for $\Phi=\neg a,~a\in AP$.

    We say that $\MDP,\Phi$ satisfy the generalized Slater's Assumption $(gSA)$ if there exists $\boldsymbol p' = (p'_1,\dots,p_{pf(\Phi)}')$ satisfying $p'_i > p_i$ for all $i$, such that there exists $\pi$ satisfying $
    \MDP_{\pi} \models \mathtt{T}(\Phi,\boldsymbol p')$.
\end{definition}

\subsection{Structural \CPCTL$\,$ Properties}

The results presented hereafter are essential for the initialization of the value iteration algorithm to solve the synthesis problem, introduced in Sec.~\ref{sec:VI}. They describe the set $\mathbb{P}_{=1}(\phi)$, for path formula $\phi$, which corresponds to the starting point of the algorithm.

The intuition for the name "Continuing \PCTL" comes from the following lemma. It states that for $\Phi\in$\CPCTL, the satisfaction of $\Phi$ can be postponed, as long as we stay on the literal projection of $\Phi$, a boolean logic formula that we introduce now, and $\Phi$ is 
satisfied later on.

\begin{definition}[Literal projection]
    We define the literal projection $\alit{\phi}$ of a \CPCTL$\,$formula $\phi$ by induction on $\phi$.
    \begin{itemize}
        \item For any literal $a$, $\alit{a}=a$,
        \item for any \CPCTL$\,$state formulas $\Phi_1$ and $\Phi_2$, we have $\alit{\Phi_1\land\Phi_2}=\alit{\Phi_1}\land \alit{\Phi_2}$ and $$\alit{\mathbb P_{\geq p}\left[\Phi_1\mathbf{W}(\Phi_1\land \Phi_2)\right]}=\begin{cases}
            \top & \textbf{ if } p=0\\
            \alit{\Phi_1} & \textbf{ otherwise. }
        \end{cases}$$
    \end{itemize}
\end{definition}

The Literal Projection is the strongest necessary condition in boolean logic that a state has to satisfy to satisfy a state \CPCTL$\,$ formula. For example, for the following formulas
\[
\begin{aligned}
&\Phi_1 = \mathbb{P}_{\geq 0.7} \left( (a\wedge  \neg b) \mathbf{W}((a\wedge \neg b) \mathbf{W} c) \Rightarrow\alit{\Phi_1}  \right) = a \wedge \neg b, \\
&\Phi_2 = \mathbb{P}_{\geq 0.5} \left( \phi_2 \right) \wedge \neg b \Rightarrow \alit{\Phi_2} = a \wedge \neg b.
\end{aligned}
\]
Define $\phi_1,\phi_2$ such that $\Phi_1 = \mathbb{P}_{\geq p} (\phi_1)$ and $\Phi_2 = \mathbb{P}_{\geq 0.5} (\phi_2)$. For any state $s$ in a Markov Chain $\mathcal{M}$, we have $s \models \Phi_1 \Rightarrow a \wedge \neg b$, otherwise we would have $\mathbb{P}(\phi_1|s,\mathcal{M})=0$ since for every path $\xi$, we would have $\xi \not \models \phi_1$. Similarly, $s\models \Phi_2 \Rightarrow s \models \neg b$, and $s \models \mathbb{P}_{\geq 0.5} (\phi_2)$. 
Conversely, a quick induction shows that satisfying $\mathbb{P}_{\geq 1} (\mathbf{G}\alit{\Phi})$ implies the satisfaction of $\Phi$. For example, $\mathbb{P}_{=1}(\mathbf{G} (a \wedge \neg b))$ implies $\Phi_1$.
The following lemma precisely formalizes this intuition.

\begin{restatable}{lemma}{forsurecharacterization}\label{lemma-alit-W}
    For every MDP $\mathcal M$, path formula $\phi=\Phi_1\mathbf{W}(\Phi_1\land\Phi_2)$, and policy $\pi$,
    we have  $\mathcal M_{\pi}\models \mathbb P_{\geq 1}[\phi]$ iff $\mathcal M_{\pi}\models \mathbb P_{\geq 1}[\alit{\Phi_1}\mathbf{W}(\Phi_1\land\Phi_2)]$.
\end{restatable}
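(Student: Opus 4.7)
The plan is to prove both directions separately, with the harder direction by structural induction on $\Phi_1$ (treating $\Phi_2$ as a free parameter). For the forward direction ($\Rightarrow$), a routine induction on $\Phi_1$ shows that $s \models \Phi_1 \Rightarrow s \models \alit{\Phi_1}$ at every state $s$, so every path satisfying $\Phi_1 \mathbf{W}(\Phi_1 \wedge \Phi_2)$ also satisfies $\alit{\Phi_1} \mathbf{W}(\Phi_1 \wedge \Phi_2)$, and the $\mathbb{P}_{\geq 1}$ bound transfers.

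For the backward direction, suppose $\mathcal{M}_\pi \models \mathbb{P}_{\geq 1}[\alit{\Phi_1} \mathbf{W}(\Phi_1 \wedge \Phi_2)]$. Using the Markov property, I will reduce to showing $s \models \Phi_1$ at every state $s$ reachable from the initial state with positive probability through an $\alit{\Phi_1} \wedge \neg(\Phi_1 \wedge \Phi_2)$-prefix: conditioning on such a prefix propagates the hypothesis, so $\mathcal{M}_\pi^s \models \mathbb{P}_{\geq 1}[\alit{\Phi_1} \mathbf{W}(\Phi_1 \wedge \Phi_2)]$, and moreover $s \models \alit{\Phi_1}$. I then establish the pointwise claim by structural induction on $\Phi_1$. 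The literal case is immediate since $\alit{\Phi_1} = \Phi_1$. For $\Phi_1 = A \wedge B$, I weaken the hypothesis via $\alit{A} \wedge \alit{B} \Rightarrow \alit{A}$ and apply the inductive hypothesis to the strict subformula $A$ with new target $B \wedge \Phi_2$ (noting $A \wedge (B \wedge \Phi_2) = \Phi_1 \wedge \Phi_2$) to obtain $s \models A$; symmetrically $s \models B$, so $s \models \Phi_1$.

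The main obstacle is the probabilistic case $\Phi_1 = \mathbb{P}_{\geq q}[\psi]$ with $\psi = \Phi'_1 \mathbf{W}(\Phi'_1 \wedge \Phi'_2)$ (assume $q > 0$, since $q = 0$ gives $\alit{\Phi_1} = \top$ and the claim is trivial); here $\alit{\Phi_1} = \alit{\Phi'_1}$. A key preliminary observation is that $\mathbb{P}_{\geq q}[\psi]$ entails $\Phi'_1$ at the evaluated state, because any witness path of $\psi$ must have $\Phi'_1$ at time $0$; consequently $\Phi_1 \wedge \Phi_2 \Leftrightarrow \Phi'_1 \wedge (\Phi_1 \wedge \Phi_2)$. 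The hypothesis thus reads $\mathbb{P}_{\geq 1}[\alit{\Phi'_1} \mathbf{W}(\Phi'_1 \wedge (\Phi_1 \wedge \Phi_2))]$, and the inductive hypothesis applied to the strict subformula $\Phi'_1$ (with target $\Phi_1 \wedge \Phi_2$) yields $\mathcal{M}_\pi^s \models \mathbb{P}_{\geq 1}[\Phi'_1 \mathbf{W}(\Phi_1 \wedge \Phi_2)]$. I then lower-bound $\mathbb{P}(\psi \mid s)$ by a concatenation argument: almost surely from $s$, either $\mathbf{G}\Phi'_1$ holds (so $\psi$ is automatic with probability $1$), or some $s^* \models \Phi_1 \wedge \Phi_2$ is reached via $\Phi'_1$ (where $\mathbb{P}(\psi \mid s^*) \geq q$); splicing the $\Phi'_1$-prefix with the $\psi$-tail gives $\mathbb{P}(\psi \mid s) \geq q$, so $s \models \Phi_1$.

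The crux of the difficulty in the probabilistic case is that the inductive hypothesis only provides a $\mathbb{P}_{\geq 1}$-equivalence, not a transfer of an arbitrary probability bound $q$. The key insight that makes the induction close is that, after invoking the IH on the smaller formula $\Phi'_1$, the prefix condition is strengthened from $\alit{\Phi'_1}$ to $\Phi'_1$; this upgrade is precisely what enables the splicing with the $q$-probability $\psi$-continuation guaranteed at each reached $\Phi_1 \wedge \Phi_2$-state, avoiding the need for a stronger (and unavailable) transfer of the $q$-bound through the IH itself.
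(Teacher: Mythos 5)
Your proposal is correct and follows essentially the same route as the paper: the forward direction is the same induction showing that satisfaction implies satisfaction of the literal projection, and the backward direction is the same structural induction on $\Phi_1$ whose probabilistic case is resolved by the identical prefix-splicing bound $\mu(U_1)+q\,\mu(U_2)\geq q$. The only difference is organizational: the paper factors the induction through a standalone lemma ($\mathcal M_\pi\models\Phi$ iff $\mathcal M_\pi\models\mathbb P_{\geq 1}[\alit{\Phi}\,\mathbf{W}\,\Phi]$), weakening the target $\Phi_1\land\Phi_2$ to $\Phi_1$ at each intermediate history, whereas you carry a universally quantified target $\Psi$ through the induction.
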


\begin{corollary}
    For every MDP $\mathcal M$, $\Phi\in$\CPCTL$\,$ and policy $\pi$,
    we have $\mathcal M_{\pi}\models \mathbb P_{\geq 1}(\mathbf{G} (\alit{\Phi})\Rightarrow \mathcal{M}_\pi \models \Phi$. 
\end{corollary}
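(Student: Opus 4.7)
The plan is to proceed by structural induction on the \CPCTL\ state formula $\Phi$, establishing the apparently stronger statement that for \emph{every} state $s$ of $\mathcal{M}_\pi$, the implication $(\mathcal{M}_\pi,s)\models \mathbb{P}_{\geq 1}(\mathbf{G}\,\alit{\Phi})\Rightarrow (\mathcal{M}_\pi,s)\models \Phi$ holds. The corollary then follows by specialising $s$ to the initial state of $\mathcal{M}_\pi$.

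For the base cases $\Phi=a$ and $\Phi=\neg a$ with $a\in AP$, I note that $\alit{\Phi}=\Phi$ and that any path satisfying $\mathbf{G}\,\alit{\Phi}$ satisfies $\alit{\Phi}$ at index $0$; since index $0$ is $s$ itself, and the event $\mathbf{G}\,\alit{\Phi}$ has probability one, we get $s\models \Phi$ directly. For the conjunction $\Phi=\Phi_1\wedge\Phi_2$, the definition of the literal projection gives $\alit{\Phi}=\alit{\Phi_1}\wedge\alit{\Phi_2}$, so pathwise the event $\mathbf{G}\,\alit{\Phi}$ coincides with the intersection of $\mathbf{G}\,\alit{\Phi_1}$ and $\mathbf{G}\,\alit{\Phi_2}$; a probability-one intersection projects to probability-one events on each conjunct, and the induction hypothesis applied to $\Phi_1$ and $\Phi_2$ closes this case.

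The probabilistic case $\Phi=\mathbb{P}_{\geq p}[\Phi_1\,\mathbf{W}\,(\Phi_1\wedge\Phi_2)]$ is the heart of the argument. For $p=0$ the conclusion is vacuous. For $p>0$ one has $\alit{\Phi}=\alit{\Phi_1}$, and I would observe that any path satisfying $\mathbf{G}\,\alit{\Phi_1}$ automatically satisfies $\alit{\Phi_1}\,\mathbf{W}\,(\Phi_1\wedge\Phi_2)$ through the globally-disjunct in the semantics of weak-until. Hence $(\mathcal M_\pi,s)\models \mathbb{P}_{\geq 1}[\alit{\Phi_1}\,\mathbf{W}\,(\Phi_1\wedge\Phi_2)]$, and Lemma~\ref{lemma-alit-W} upgrades this to $(\mathcal M_\pi,s)\models \mathbb{P}_{\geq 1}[\Phi_1\,\mathbf{W}\,(\Phi_1\wedge\Phi_2)]$, which entails $(\mathcal M_\pi,s)\models \Phi$ because $p\leq 1$.

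The only non-routine ingredient is Lemma~\ref{lemma-alit-W}, which is precisely the result that lets us swap $\alit{\Phi_1}$ for $\Phi_1$ on the left of the weak-until operator under the probability-one threshold; once it is in hand, the induction itself is a direct syntactic traversal of the \CPCTL\ grammar, and notably the probabilistic case does not even invoke the induction hypothesis on $\Phi_1$ or $\Phi_2$.
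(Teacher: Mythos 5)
Your proof is correct and matches what the paper intends: the corollary is exactly the ``quick induction'' on the structure of $\Phi$ that the authors allude to, with the base and conjunction cases handled by the semantics of $\mathbf{G}$ and the literal projection, and the probabilistic case discharged by noting that $\mathbf{G}\,\alit{\Phi_1}$ implies $\alit{\Phi_1}\,\mathbf{W}\,(\Phi_1\wedge\Phi_2)$ and then invoking Lemma~\ref{lemma-alit-W}. The only point worth making explicit is that applying Lemma~\ref{lemma-alit-W} at an arbitrary state $s$ of $\mathcal M_\pi$ is licensed because $\mathcal M_\pi^{s}$ is itself the Markov chain induced by the shifted policy, so the lemma's universal quantification over MDPs and policies covers it.
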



The intuition of the results \textbf{cannot} be extended further. 
\begin{itemize}[leftmargin=4mm]
\item (Equivalence) 
For $\Phi=\mathbb{P}_{\geq p} [\Psi_1\mathbf{W}\Psi_2]$, one cannot write $\Phi = \mathbb{P}_{\geq 1} (\mathbf{G} \alit{\Phi})$ even assuming that $\Psi_2 = \bot$ so that $\Phi$ never "stops" and always continues.
\item (Paths) The result can not be extended to paths. For instance, if we consider the Markov Chain $\mathcal{M}$ with $\mathcal{S}=\{ s_1,s_2 \}$, $L(s_1)=\emptyset$, $L(s_2)=\{a\}$, and $\mathbb{P}(s_1|s_1) = \mathbb{P}(s_2|s_1)=1/2$, $\mathbb{P}(s_2|s_2)=1$ and the path formula $\phi= (\mathbb{P}_{\geq 2/3} \mathbf{G} a  ) \mathbf{W} \bot$, then the path $\xi =s_1,s_1,s_1,\dots$ satisfies $\left( \alit{\mathbb{P}_{\geq 1/2} \mathbf{G} a } \right) \mathbf{W} \bot $ but does not satisfy $\phi$.
\item (\safePCTL) This key structural property is not true for general safe formulas. For example, no boolean logic formula on a state $s$ can be deduced provided that $s\models \mathbb{P}_{\geq 1}(\mathbf{X} a )$.
\end{itemize}

\subsection{Expressivity of \CPCTL}

The restrictions imposed on \CPCTL \  formulas do not imply a one dimensional monotonicity, and maximizing several nested properties does not reduce to maximizing the last one, as shown in example \ref{ex:expressivityone}.

\begin{example}\textbf{(No collapse of nested objectives)}\label{ex:expressivityone}


We consider the formula
\[
\Phi = \mathbb{P}_{\geq p_1} [\phi],\quad \phi=\mathbf G\Phi_a,\quad \Phi_a = \mathbb{P}_{\geq p_1} [\mathbf G \neg a],\quad p_1=\frac{7}{12}.
\]

The path formula $\mathbf G [\Phi_a]$ means that we must stay on states such that the probability of reaching $a$ from those states is lower than $p_1$. We look for $\pi$ that maximizes the probability of those paths. One intuition may be that there is never a reason to visit $a$ more often. However, the safest policy that minimizes the probability of reaching $a$ from any state does not maximize $\Phi$.
We consider the $\MDP$ presented in Figure \ref{fig:exampleexpressivity1}.
The actions are represented by arrows labelled by the associated probabilities. The states without an outgoing arrow have only one action looping on themselves with probability $1$. Finally, $\llbracket a \rrbracket= \{s_5,s_7\}$.

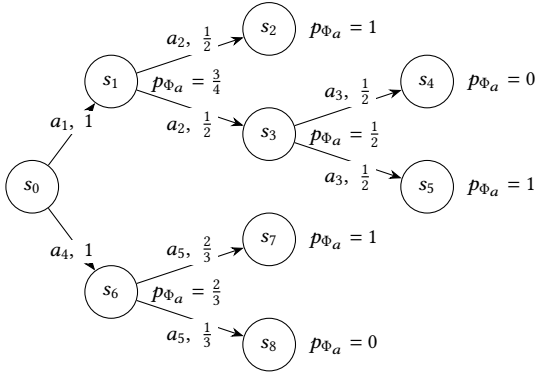
\begin{figure}[!h]
\begin{tikzpicture}[
  x=7mm, y=7mm, 
  >=Stealth,
  state/.style={circle, draw, minimum size=7mm, inner sep=0pt, font=\small},
  el/.style={midway, fill=white, inner sep=1.2pt, font=\small},
  plabel/.style={font=\small, fill=white, inner sep=0.8pt},
  every label/.style={plabel, xshift=1.5mm}
]

\node[state] (s0) at (0, 0) {$s_0$};

\node[state, label=right:{$p_{\Phi_a}=\tfrac{3}{4}$}] (s1) at (1.5,  2) {$s_1$};
\node[state, label=right:{$p_{\Phi_a}=\tfrac{2}{3}$}] (s6) at (1.5, -2) {$s_6$};

\node[state, label=right:{$p_{\Phi_a}=1$}]            (s2) at (4.5,  3) {$s_2$};
\node[state, label=right:{$p_{\Phi_a}=\tfrac{1}{2}$}] (s3) at (4.5,  1) {$s_3$};

\node[state, label=right:{$p_{\Phi_a}=1$}]            (s7) at (4.5, -1) {$s_7$};
\node[state, label=right:{$p_{\Phi_a}=0$}]            (s8) at (4.5, -3) {$s_8$};

\node[state, label=right:{$p_{\Phi_a}=0$}]            (s4) at (7.5,  2) {$s_4$};
\node[state, label=right:{$p_{\Phi_a}=1$}]            (s5) at (7.5,  0) {$s_5$};

\draw[->] (s0) -- (s1) node[el, above] {$a_1,\;1$};
\draw[->] (s0) -- (s6) node[el, below] {$a_4,\;1$};

\draw[->] (s1) -- (s2) node[el, above] {$a_2,\;\tfrac{1}{2}$};
\draw[->] (s1) -- (s3) node[el, below] {$a_2,\;\tfrac{1}{2}$};

\draw[->] (s3) -- (s4) node[el, above] {$a_3,\;\tfrac{1}{2}$};
\draw[->] (s3) -- (s5) node[el, below] {$a_3,\;\tfrac{1}{2}$};

\draw[->] (s6) -- (s7) node[el, above] {$a_5,\;\tfrac{2}{3}$};
\draw[->] (s6) -- (s8) node[el, below] {$a_5,\;\tfrac{1}{3}$};

\end{tikzpicture}
\caption{The Markov Decision Process $\mathcal{M}_1$}\label{fig:exampleexpressivity1}
\end{figure}


The policy that maximizes the probability of avoiding $a$ $\pi_1$ chooses $a_1$ with probability $1$. In this case, $
\mathbb{P}(\mathbf{G}\neg a|\pi,s_0) = 3/4$.
Now, the states that satisfy $\Phi_a = \mathbb{P}_{\geq 7/12} [\mathbf{G} \neg a]$ are $
\llbracket \Phi_a \rrbracket = \{ s_0,s_1,s_2,s_5,s_6\}$.
Hence, the probability of never reaching a state in $\mathcal{S}\setminus \llbracket \Phi_a \rrbracket$ is equal to $1/2$. If we consider $\pi_2$ however, the policy taking $a_1$ with probability $0$ and $a_4$ with probability $1$, since $s_6$ and $s_7$ are in $\llbracket \Phi_a \rrbracket $ and $s_8 \notin \llbracket \Phi_a \rrbracket $, we obtain
$\mathbb{P}(\phi|\pi_1,s_0) = \frac{1}{2} < \frac{2}{3} = \mathbb{P}(\phi|\pi_2,s_0)$. Thus, in this case, we have $s_0,\pi_1 \not\models \Phi$ while $s_0,\pi_2 \models \Phi$.

\end{example}

\subsection{A New Computable Class}

The following theorem shows that \CPCTL\ specifications cannot be reduced to flat \PCTL\ formulas and form a new class of computable specifications.

\begin{theorem}
    There exists formulas $\Phi\in$ \CPCTL\ such that for
    no flat formula (nesting depth one) $\Psi\in$ \PCTL, we have that for any Markov chain $\MDP$,
    $
    \MDP \models \Phi \Leftrightarrow \MDP \models \Psi.
    $
    \end{theorem}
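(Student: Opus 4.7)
The plan is to exhibit a single \CPCTL\ formula whose genuine nesting of probabilistic operators cannot be captured by any flat formula, by constructing two Markov chains that induce identical label-trace distributions from their initial states---hence agreeing on every flat \PCTL\ predicate---yet are distinguished by the nested formula. Concretely, I would take $\Phi = \mathbb{P}_{\geq 1/2}[\mathbf{G}\,\Psi_a]$ with $\Psi_a = \mathbb{P}_{\geq 1/2}[\mathbf{G}\,\neg a]$, both in \CPCTL\ via the abbreviation $\mathbf{G}\,\Phi_1 = \Phi_1\,\mathbf{W}\,(\Phi_1\wedge\bot)$.

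I then construct two chains over $AP = \{a\}$. In $\mathcal M_A$, the initial state $s_0$ (with $L(s_0)=\emptyset$) branches to intermediate states $s_1,s_2$ each with probability $1/2$; from $s_1$ the chain reaches an absorbing safe state $g_1$ with probability $3/4$ and the absorbing $a$-state $t$ with probability $1/4$, while $s_2$ reverses these weights. In $\mathcal M_B$, $s_0$ moves deterministically to a single intermediate $s_1$, which then reaches a safe state $g$ or $t$ with probability $1/2$ each. A direct computation shows that both chains induce, from $s_0$, the label-trace measure putting mass $1/2$ on $\neg a^\omega$ and mass $1/2$ on $\neg a\,\neg a\,a^\omega$.

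Evaluating $\Phi$, in $\mathcal M_A$ the state $s_2$ has $\mathbb{P}(\mathbf{G}\,\neg a\mid s_2) = 1/4 < 1/2$ and so fails $\Psi_a$; every path through $s_2$ thus falsifies $\mathbf{G}\,\Psi_a$ at step two, and only the paths $s_0\,s_1\,g_1^\omega$ witness $\mathbf{G}\,\Psi_a$, with total probability $3/8 < 1/2$, so $\mathcal M_A \not\models \Phi$. In $\mathcal M_B$, the single intermediate $s_1$ satisfies $\Psi_a$ (its probability of $\mathbf{G}\,\neg a$ is exactly $1/2$), and the probability-$1/2$ mass of paths reaching $g$ witnesses $\mathbf{G}\,\Psi_a$, so $\mathcal M_B \models \Phi$.

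To conclude, I argue that no flat \PCTL\ formula can distinguish $\mathcal M_A$ from $\mathcal M_B$. A flat $\Psi$ evaluated at $s_0$ is a boolean combination of atoms in $L(s_0)$ and of predicates $\mathbb{P}_{\geq p}[\phi]$, where $\phi$ uses only atoms and temporal operators. A routine induction on $\phi$ shows that $\mathbb{P}(\phi\mid s_0,\mathcal M)$ depends only on the distribution over label-traces induced by $\mathcal M$ at $s_0$; since this distribution together with $L(s_0)$ coincides in the two chains, any flat $\Psi$ has identical truth values on $\mathcal M_A$ and $\mathcal M_B$, contradicting a putative equivalence with $\Phi$. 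The main step to isolate cleanly is this trace-determinacy lemma for flat path formulas; the remaining evaluations are straightforward probability computations on finite chains.
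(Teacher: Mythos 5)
Your proof is correct, and it takes a genuinely different route from the paper's. The paper fixes the nested formula $\mathbb P_{\geq 1}\left(c\,\mathbf{W}\,\mathbb P_{\geq 1/2}(c\,\mathbf{W}(c\wedge a))\right)$ and a two-parameter family $\MDP_{\alpha,\varepsilon}$, computes the satisfaction region $\mathcal E_\Phi=[1/2,1]\times\{0\}$, and argues that the probability of any flat path formula is an affine function of $(\alpha,\varepsilon)$ determined by the three possible label traces, so no flat formula can carve out that region. Your argument isolates the underlying principle in its cleanest form: flat path formulas are trace properties, so $\mathbb P(\phi\mid s_0,\MDP)$ is determined by the pushforward of the path measure onto label traces, and two chains with identical label-trace distributions (your $\mathcal M_A$ and $\mathcal M_B$) are indistinguishable by any flat formula; your probability computations showing $3/8<1/2\leq 1/2$ are correct, so $\Phi$ separates them. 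What your approach buys is modularity and a sharper explanation of \emph{why} nesting adds power (it sees branching structure, not just the trace distribution); what the paper's approach buys is a description of the geometry of the satisfiable region, which connects to its later Pareto-frontier machinery. Two cosmetic points: the failure of $\mathbf{G}\,\Psi_a$ on paths through $s_2$ occurs at index $1$, not "step two"; and to place $\Phi$ literally inside the \CPCTL\ grammar you should write $\mathbf{G}\,\Phi_1$ as $\Phi_1\,\mathbf{W}\,(\Phi_1\wedge\bot)$ (or use a never-true atom), which matches the paper's own convention in its examples. Neither affects the validity of the argument, provided you spell out the trace-determinacy induction you flag as the main remaining step.
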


    Equivalently, for such formulas, there exists no flat formula $\Psi \in \PCTL$  such that for any MDP $\mathcal{M}$ and any policy $\pi$,
     we have $    \MDP_\pi \models \Phi \Leftrightarrow \MDP_\pi \models \Psi.
    $
    In particular, currently  there exists no algorithm for the synthesis of $\CPCTL$ specifications.

\begin{proof}
    We provide an example of a \CPCTL\ formula that is not equivalent to any flat formula. We consider the nested formula
$
\Phi = \mathbb P_{\geq 1}\left(c \mathbf{W} \mathbb P_{\geq 1/2}(c \mathbf{W} (c\wedge a)) \right) \in \CPCTL,
$
and assume the existence of a flat formula $\Psi$ of the form 
    $
    \Psi = \mathcal{B}(\Psi_1,\dots,\Psi_l),\quad \Psi_k = \mathbb P_{\geq p_k}(\psi_k)
    $
    where $\mathcal{B}$ is an operator that involves only a finite number of boolean operations and atomic propositions, and $\psi_k$ are formulas of the form $\psi_k = \Psi_{k,1} \mathbf{O}_k \Psi_{k,2}$ where $\mathbf{O}_k\in \left\{ \mathbf{U},~\mathbf{W} \right\}$ and $\Psi_{k,1},\Psi_{k,2}$ are boolean formulas. We introduce the Markov Chain $\mathcal{M}_{\alpha,\varepsilon}$ as defined in Figure \ref{fig:exampleexpressivity2}.

    \begin{figure}[h]
    \centering
\begin{tikzpicture}[
  state/.style={draw,circle,minimum size=6mm,
  node distance=0.7cm}, 
  >=Stealth
]
\node[state] (root) {$\{c\}$};

\node[state] (left1) [below left=of root] {$\{c\}$};
\node[state] (right1) [below right=of root] {$\emptyset$};

\node[state] (left2) [below left=of left1] {$\{c,a\}$};
\node[state] (right2) [below right=of left1] {$\{b\}$};

\draw[->] (root) -- node[above left] {$1-\varepsilon$} (left1);
\draw[->] (root) -- node[above right] {$\varepsilon$} (right1);
\draw[->] (left1) -- node[above left] {$\alpha$} (left2);
\draw[->] (left1) -- node[above right] {$1-\alpha$} (right2);
\draw[->] (left2) to [out=230,in=310,looseness=3] (left2);
\draw[->] (right2) to [out=230,in=310,looseness=3] (right2);
\draw[->] (right1) to [out=230,in=310,looseness=3] (right1);

\end{tikzpicture}
    \caption{The Markov Chain $\MDP_{\alpha,\varepsilon}$.}\label{fig:exampleexpressivity2}
    \end{figure}

For any $\varepsilon>0$, we have $\mathbb{P}(\phi|\MDP_{\alpha,\varepsilon})\leq 1-\varepsilon <1$, so $\MDP \not\models \Phi$. For $\varepsilon=0$, we denote $\xi$ the path obtained by going to the left once and to the right once. $\xi$ has label $\{c\}\{c\}\{b\}^*$ and does not satisfy the path formula $c \mathbf{W} [c \mathbf{W} (c\wedge a)]_{\geq 1/2}$ because $c \notin \{b\}$, except if there exists $i$ such that 
$
\xi_i \models [c \mathbf{W} (c\wedge a)]_{\geq 1/2}.
$
$i<2$ which means $i = 0$ or $i=1$. We have $\mathbb{P}(c \mathbf{W} (c\wedge a)|\xi_i)=\alpha$ for $i=1,2$, so $\xi \models \phi$ if and only if $\alpha \geq 1/2$.


Hence, $\mathcal{E}_\Phi = \{ (\alpha,\varepsilon),~\MDP_{\alpha,\varepsilon}\models \Phi \} = \left[ \frac{1}{2},1 \right]\times \{0\}$.

The following conditions are incompatible for a flat formula $\Psi$:
\[
(i)~\forall \varepsilon >0,~\forall \alpha \in [0,1], ~(\alpha,\varepsilon)\notin \mathcal{E}_{\Psi},~(ii)~\left( \frac{1}{2},0 \right)\in \mathcal{E}_\Psi \wedge \left( \frac{1}{3},0 \right)\notin \mathcal{E}_\Psi.
\]



To see it, note that the evaluation of path formulas $\psi_k$ on a path $\xi=(\xi_0\xi_1\dots)$ only depend on the sequence of labels $L(\xi_0)L(\xi_1)\dots$. With $\xi_{ll}, \xi_{lr}, \xi_{r}$, the paths respectively going to the left then left, to the left then right, and to the right, we have with $\mathbf{B}(True)=1$ and $\mathbf{B}(False)=0$,
\[
\begin{aligned}
\mathbb{P}(\psi_j|\MDP_{\alpha,\varepsilon}) = &(1-\varepsilon) \alpha \mathbf{B}( \xi_{ll}\models \psi_j) \\
&+(1-\varepsilon) (1-\alpha) \mathbf{B}(\xi_{lr} \models \psi_j) + \varepsilon \mathbf{B}(\xi_{r}\models \psi_j).
\end{aligned}
\]
This condition restricts expressivity and prevents the whole formula $\Psi$ from satisfying the previous condition. 

\end{proof}

\section{From safe-\PCTL$~$ satisfaction to local constraints}
\label{sec:augmented}


In this section we construct, for a safe-PCTL formula $\Phi$  and  an MDP $\MDP$, an augmented MDP $\MDP\times [\Phi]$, such that satisfying $\Phi$ in $\MDP$ is equivalent to satisfying local constraints provided in Def.~\ref{def:path-compatibility} in $\MDP \times [\Phi]$.


\begin{definition}[Augmented MDP]
We define $\MDP \times [\Phi]$ -- {\em the  MDP augmented by formula $\Phi$} -- as follows:
\begin{itemize}
    \item Set of States: Let $\aug{\St}_{\geq 0}$ be the set of all triples $(s,\boldsymbol \mu,\boldsymbol \nu)$ in $\St \times \{ 0,1 \}^{pf(\Phi)} \times [0,1]^{sf(\Phi)}$. Each (augmented) state $\aug{s} \in \aug{\St}_{\geq 0}$ is of the form $(s,\boldsymbol \mu,\boldsymbol \nu)$, with 
    \[
    \boldsymbol \mu=(\mu_1,\dots,\mu_{sf(\Phi)}),~\boldsymbol \nu = (\nu_1,\dots,\nu_{pf(\Phi)}),
    \]
    where 
    $\nu_i \in [ 0,1 ]$ (resp.~$\mu_i \in \{0, 1\}$) are 
    the counters of the path subformulas $\phi_i\in \mathcal{PF}(\Phi)$ of $\Phi$, (resp. the valuations of the state subformulas $\Phi_i\in \mathcal{SF}(\Phi)$ of $\Phi$).
    Finally, we define $\aug{\St}=\aug{\St}_{\geq 0} \cup \{\aug{s}_{-1} \}$, 
    with initial state $\aug s_{-1}$.

    \item Set of Actions: Let $\aug{s} = (s,\boldsymbol \mu,\boldsymbol \nu)\in \aug{S}$, $\Act$ the set of actions available at $s$ in $\MDP$, and $\{s_1,\dots,s_m\}$ the successors of $s$ in $\MDP$. Then, for any $\aug s\in\aug{\St}\setminus \{\aug s_{-1}\}$ we define the (augmented) set of actions $\aug{\mathcal{A}}(\hat s)$ as the set of $(a,\boldsymbol\mu^1,\ldots,\boldsymbol\mu^m,\boldsymbol\nu^1,\ldots,\boldsymbol\nu^m)$ such that $a\in\mathcal A(s)$, each $\boldsymbol \nu^i$ (resp. $\boldsymbol \mu^i$) is a set of counters (resp. valuations) for the successor $s_i$, and $(s_i,\boldsymbol \mu^i,\boldsymbol \nu^i)\in \aug\St$. 

    Moreover, the set $\aug{\mathcal{A}}(\aug s_{-1})$ is defined as the set of all $(\boldsymbol \mu,\boldsymbol \nu)$ such that $(s_0,\boldsymbol \mu,\boldsymbol \nu)\in \aug\St$, where $s_0$ is the initial state of $\mathcal M$. 
    
    \item Transitions: For each $\aug a = (a,\boldsymbol \mu^1,\ldots, \boldsymbol \mu^m, \boldsymbol \nu^1,\ldots, \boldsymbol \nu^m)\in \AugAct$, 
    $P(\aug s,\aug a)$ is the mixture of Dirac distributions $\sum_{i=1}^m P(s_i|s,a)\delta_i$ such that the support of $\delta_i$ is $\{(s_i,\boldsymbol \mu^i,\boldsymbol \nu^i)\}$.
    That is, action $\aug a$ performs similarly to action $a$ in $\MDP$, but the agent can choose the valuations and counters for the successors and ends up in the augmented state $(s_i,\boldsymbol \mu^i, \boldsymbol\nu^i)$ rather than $s_i$.

    Moreover, for any action $\aug a=(\boldsymbol\mu,\boldsymbol\nu)$ available from the initial state, $P(\aug s_{-1},\aug a)$ is the Dirac distribution with support $\{ (s_0,\boldsymbol \mu,\boldsymbol \nu) \}$ where $s_0$ is the initial state of $\MDP$.


\end{itemize}
\end{definition}

The augmented MDP $\MDP\times [\Phi]$ is infinite, as the set of states is the entire $\aug{ \mathcal{S}} = \mathcal{S}\times \{ 0,1 \}^{sf(\Phi)} \times [0,1]^{pf(\Phi)}$. However, in Section ~\ref{sec:VI} we show how to navigate a finite portion of states and actions to obtain a policy that satisfies $\Phi$ in $\MDP$.

We next define valued policies. Intuitively, these corresponds to policies $\aug \pi$ on $\MDP\times[\Phi]$ that are deterministic on the counters and valuations of the augmented actions.
%
\begin{definition}\textbf{(Valued Policy)}\label{def:valued}
%
    A policy $\aug \pi$ is {\em valued} iff there exist functions $\Theta$ and $\Delta$ s.t.~for $m(\hat s)=\#\text{succ}_\MDP(s)$ and $\aug s$ an augmented state in $\mathcal{M}\times[\Phi]$,
    \[
    \Theta:\aug s\mapsto ((\boldsymbol \mu^1,\boldsymbol  \nu^1),\dots,(\boldsymbol \mu^m,\boldsymbol \nu^m)),~\Delta:\aug s \mapsto \delta \in \mathcal{D}(\mathcal{A}(s)),
    \]
    such that for any $a \in \mathcal{A}(s)$, with $\delta=\Delta(\aug s)$, we have
    \begin{equation}\label{defvaluedpolicy}
    \aug \pi(\aug a|\aug s) = \delta(a),~\text{ for } \aug a= (a,\Theta(\aug s)),
    \end{equation}
    and $ \aug \pi(\aug a|\aug s) = 0 $ for any other $\aug a$. With an abuse of notation, we denote $\Theta(\hat s)( s_i) ::= (\boldsymbol\mu^i,\boldsymbol\nu^i) $.
    

\end{definition}

    Additionally, we denote by $\aug s_{0,\aug \pi}$ the state reached from $\aug s_{-1}$ following $\aug \pi$ after one step, and we write $\aug s_{0}$ when $\aug\pi$ is clear from the context. Since the policy is valued it chooses a unique set of valuations and counters for each state $s\in \mathcal{S}$, it is in particular deterministic at $\aug s_{-1}$ and $\hat s_0$ is well defined.
    
    Finally, we denote by $\aug{S}_{\aug \pi}$ the set of states that are reachable from $\aug s_{0,\aug \pi}$ following $\aug \pi$.

    We say that a state $\aug{s}$ of $\MDP\times [\Phi]$ is {\em realizable} if there exists a policy such that starting from this state, all constraints are satisfied. 

We now introduce the compatibility conditions.


\begin{definition}\label{def:state-compatibility}
    Let $\aug \pi$ be a valued policy on $\AugMDP = \MDP\times[\Phi]$, and denote the corresponding $\Theta$ and $\Delta$ as in Def.~\ref{def:valued}. 

    We say that $\aug \pi$ satisfies the \textbf{state compatibility} if for every $\aug s=(s,\boldsymbol\mu,\boldsymbol\nu)\in \aug{S}_{\aug \pi}$ with $\boldsymbol\nu=(\nu_1,\dots,\nu_{pf(\Phi)})$ and $\boldsymbol\mu=(\mu_1,\dots,\mu_{sf(\Phi)})$ the following are satisfied for any state formulas $\Phi_{j}$, $\Phi_{j_1}$, $\Phi_{j_2}$,   
    \begin{itemize}
        \item[(i)] 
        If $\Phi_{j}=\mathbb{P}_{\geq p_j}(\phi_j)$,      
          then  $(\mu_j = 1) \Rightarrow (\nu_j\geq p_j)$;
        \item[(ii)] 
        If $\Phi_j = \Phi_{j_1}\wedge \Phi_{j_2}$,
     then $(\mu_j =1) \Rightarrow (\mu_{j_1}=1)$ and $(\mu_{j_2}=1)$. 
        \item[(iii)] 
        If $\Phi_j = \Phi_{j_1}\vee \Phi_{j_2}$,
       then $(\mu_j =1) \Rightarrow (\mu_{j_1}=1)$ or $(\mu_{j_2}=1)$. 
        \item[(iv)] 
        If $\Phi_j = b$ for some $b\in AP$, 
   then $(\mu_j = 1) \Rightarrow s \models b$.
    \end{itemize}
    
\end{definition}

The state compatibility ensures that the valuations of the augmented states that $\aug \pi$ can reach are coherent. 
We introduce compatibility conditions for path formulas too.
\begin{definition}\label{def:path-compatibility}
    Let $\aug \pi$ be a valued policy on $\AugMDP = \MDP \times [\Phi]$, and denote the corresponding $\Theta$ and $\Delta$ as in Def.~\ref{def:valued}.

    The policy $\aug \pi$ satisfies the \textbf{path compatibility} iff, for every augmented state $\aug s =(s,\boldsymbol\mu,\boldsymbol\nu) \in \aug{S}_{\aug\pi}$, with 
    \[
    \begin{aligned}
    &\Theta(\aug s)(s_i) = (\boldsymbol\mu^i,\boldsymbol\nu^i),~\boldsymbol\nu^i=(\nu_1^i,\dots,\nu_{pf(\Phi)}^i),~\boldsymbol\mu^i = (\mu_1^i,\dots,\mu_{sf(\Phi)}^i),\\
    &\boldsymbol\nu=(\nu_1,\dots,\nu_{pf(\Phi)}),~\boldsymbol\mu=(\mu_1,\dots,\mu_{sf(\Phi)}),~\delta=\Delta(\aug s),
    \end{aligned}
    \]
    the following conditions are satisfied with $m=\#\text{succ}_\MDP(s)$:

    \begin{itemize}
        \item For every $\phi_j$ path formula of the form $\phi_j = \Phi_{j_1}\mathbf{W} (\Phi_{j_1}\wedge \Phi_{j_2})$ (the continuing-weak-until), we have 
        \[
        \nu_j \leq \max\left(\mu_{j_1} \left( \sum_{i=1}^m \sum_a P(s_i|s,a) \delta(a) \nu_j^i \right),\mu_{j_1}\mu_{j_2}\right). 
        \]
        \item For every $\phi_j$ path formula of the form $\phi_j = \Phi_{j_1}\mathbf{W} \Phi_{j_2}$, we have 
        \[
        \nu_j \leq \max\left(\mu_{j_1} \left( \sum_{i=1}^m \sum_a P(s_i|s,a) \delta(a) \nu_j^i \right),\mu_{j_2}\right). 
        \]
        \item For every $\phi_j$ path formula of the form $\phi_j = \mathbf{X}\Phi_{j_1}$, we have
        \[
        \nu_j \leq \left( \sum_{i=1}^m \sum_a P(s_i|s,a) \delta(a) \mu_{j_1}^i \right). 
        \]
    \end{itemize}
    
\end{definition}

The path compatibility ensures that the counters are coherent between an augmented state and its successors.
%
\begin{restatable}[Coherence]{theorem}{thcoherence}\label{th:coherence} 
    Let $\aug \pi$ be a valued policy that 
    also satisfies the state and path compatibility conditions.
    Then, for every augmented state $\aug s = (s,\boldsymbol \mu,\boldsymbol \nu)$  reachable from $\aug s_0$ through $\aug \pi$, we have 
    \begin{itemize}
        \item The counters are coherent: for every $j \in \{1,\dots,{pf(\Phi)}\}$ and path formula $\phi_j$,
        $(\AugMDP_{\aug \pi}, \aug s) \models \mathbb{P}_{\geq \nu_j }[\phi_j]$.

        \item The valuations are coherent: for every $j \in \{1,\dots,sf(\Phi) \}$ and state formula $\Phi_j$, 
        $
        (\mu_j = 1) \Rightarrow (\AugMDP_{\aug \pi},\aug s) \models \Phi_j.
        $
    \end{itemize}
\end{restatable}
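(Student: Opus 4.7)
The plan is to prove both coherence claims by a single mutual structural induction on the subformulas of $\Phi$, carried out simultaneously over the state subformulas in $\mathcal{SF}(\Phi)$ and the path subformulas in $\mathcal{PF}(\Phi)$, ordered by syntactic depth. At each step, the inductive hypothesis provides both counter coherence (for path subformulas) and valuation coherence (for state subformulas) for every strict subformula of the one currently under consideration. The induction is well-founded since $\phi_j$ is a strict subformula of $\mathbb{P}_{\geq p_j}[\phi_j]$, and the operands of $\wedge$, $\vee$, $\mathbf{W}$, and $\mathbf{X}$ are always strict subformulas of the enclosing expression.

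The valuation-coherence step is a routine case analysis on the top-level operator of $\Phi_j$. A literal $b$ follows directly from state compatibility (iv) together with the fact that the augmented labelling inherits from $\MDP$. Conjunction and disjunction are handled by combining state compatibility (ii) or (iii) with the inductive hypothesis on the two operands. For $\Phi_j = \mathbb{P}_{\geq p_j}[\phi_j]$, state compatibility (i) gives $\nu_j \geq p_j$ whenever $\mu_j = 1$, and counter coherence applied to the strict subformula $\phi_j$ yields $(\AugMDP_{\aug\pi}, \aug s) \models \mathbb{P}_{\geq \nu_j}[\phi_j]$, which, together with $\nu_j \geq p_j$, implies $(\AugMDP_{\aug\pi}, \aug s) \models \Phi_j$.

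The core of the proof is counter coherence, which I will establish by a greatest-fixed-point argument in the augmented Markov chain $\AugMDP_{\aug\pi}$. For a continuing weak-until $\phi_j = \Phi_{j_1} \mathbf{W} (\Phi_{j_1} \wedge \Phi_{j_2})$, the actual satisfaction probability $P_{W,j}$ in $\AugMDP_{\aug\pi}$ is the greatest fixed point of the monotone Bellman operator
\[
T_W f(\aug s) \;=\; \max\bigl(\mathbf{1}_{\aug s \models \Phi_{j_1} \wedge \Phi_{j_2}},\; \mathbf{1}_{\aug s \models \Phi_{j_1}} \cdot \mathbb{E}_{\aug\pi}[f(\aug s')]\bigr).
\]
From path compatibility I have $\nu_j \leq \max(\mu_{j_1} \cdot \mathbb{E}[\nu_j^i], \mu_{j_1}\mu_{j_2})$, and from valuation coherence already secured for the strictly smaller state subformulas $\Phi_{j_1}$ and $\Phi_{j_2}$ I obtain $\mu_{j_1} \leq \mathbf{1}_{\aug s \models \Phi_{j_1}}$ and $\mu_{j_2} \leq \mathbf{1}_{\aug s \models \Phi_{j_2}}$ at every reachable $\aug s$. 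Substituting these bounds shows that the map $\aug s \mapsto \nu_j(\aug s)$ is a post-fixed point of $T_W$, so by Knaster--Tarski it lies pointwise below the greatest fixed point $P_{W,j}$. The general weak-until $\Phi_{j_1}\mathbf{W}\Phi_{j_2}$ is treated identically after replacing the $\mu_{j_1}\mu_{j_2}$ term by $\mu_{j_2}$, and the next case $\phi_j = \mathbf{X}\Phi_{j_1}$ reduces to a one-step expectation bound using the inductive hypothesis on $\Phi_{j_1}$.

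The main obstacle I anticipate is the fixed-point step. It must be carried out in the augmented chain $\AugMDP_{\aug\pi}$, so that the indicators in $T_W$ are evaluated against the augmented valuations $\boldsymbol\mu$ rather than against raw labels of $\MDP$; and the post-fixed-point inequality must propagate through the whole reachable set $\aug S_{\aug\pi}$, which can be infinite because the counters range over $[0,1]$. A secondary subtlety is threading the two directions of the mutual induction in the correct order: the weak-until case relies on valuation coherence for its strict state subformulas, while the probabilistic-operator case consumes counter coherence for its embedded path subformula, so the alternation between the two claims must be explicit.
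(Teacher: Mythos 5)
Your proposal is correct and follows essentially the same route as the paper: a mutual structural induction in which valuation coherence is a direct case analysis on the state-compatibility conditions, and counter coherence for the weak-until is obtained by showing that the counters form a post-fixed point of the one-step operator in $\AugMDP_{\aug \pi}$ and hence lie below the true satisfaction probability. The only difference is presentational: where you invoke Knaster--Tarski and the greatest-fixed-point characterisation of weak-until probabilities, the paper unrolls that same argument by hand, writing $\phi_j$ as the decreasing limit of bounded approximations $\phi_j^k$ and proving $\nu_j \leq \mathbb{P}(\phi_j^k \mid \aug s)$ for all reachable $\aug s$ by induction on the horizon $k$.
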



\begin{proof}
    The detailed proof is deferred to Appendix~\ref{appendix:coherencethm} and illustrates how our formalism can be used to show how infinite inductive propagations of the conditions ensure the satisfaction of a global formula. 
\end{proof}

\section{Value Iteration for \CPCTL$\,$Synthesis}

\label{sec:VI}

Using the Coherence Theorem~\ref{th:coherence}, our goal is to construct iteratively 
augmented states $(s,\boldsymbol\mu,\boldsymbol\nu)$ that are \emph{realizable}, meaning that there exists a policy $\aug \pi$ on $\MDP[\Phi]$ such that 
for all $i$, $\mu_i = 1\Rightarrow \left(\MDP_{\aug \pi},(s,\boldsymbol \mu,\boldsymbol \nu)\right) \models \Phi_i$, and for all $j$, $\left(\MDP_{\aug \pi},(s,\boldsymbol \mu,\boldsymbol \nu)\right) \models \mathbb{P}_{\geq \nu_j} [\phi_j]$. Using the state and path conditions as the backbone of the Bellman operator that we will define, we compute augmented states 
that are provably realizable. For $\Phi_j=b\in AP$ an atomic proposition and $b\in L(s)$, we can immediately deduce that the state $(s,\boldsymbol \mu,\boldsymbol \nu)$ where $\mu_k=1$ for $j=k$, $\mu_k=0$ otherwise, and $\nu_k=0$ for all $k$ is realizable. This initialization, however, does not suffice for the Value-Iteration to reach all the realizable states. We know for instance from \cite{DBLP:journals/tcs/HaddadM18} that formulas involving the temporal operator $\mathbf G\Phi$ require the pre-computation of $\mathbb{P}_{=1}[\mathbf G\phi]$. We extend this idea to the continuing operator $\Phi_1 \mathbf{W} (\Phi_1\wedge \Phi_2)$ and construct a suitable initialization. We then combine these elements and design an algorithm, $\mathrm{CPCTL-VI}$, which solves the synthesis problem. In Theorem~\ref{thm:VI-optimality}, we prove its soundness and optimality under a generalized Slater's assumption.

\subsection{A Value Iteration Algorithm for \CPCTL}

For tuples $(\boldsymbol{\mu},\boldsymbol{\nu})$ in $\counterValuationSet{\Phi}$,
we define the partial order $\leq$ as
\begin{eqnarray*}
(\boldsymbol{\mu},\boldsymbol{\nu}) \leq (\boldsymbol{\mu}', \boldsymbol{\nu}') & \text{ iff} & \text{for all } i,j, \mu_i\leq \mu'_i \text{ and } \nu_j \leq \nu'_j.
\end{eqnarray*}

For any closed subset $E\subseteq \counterValuationSet{\Phi}$, we let $\dclosure{E}$ be the set of maximal elements of $E$ w.r.t.~$\leq$.

\begin{definition}[Bellman Operator for \CPCTL$\,$]\label{def:Bellman}

    The {\em extended Bellman operator} $\mathcal{B}_\Phi$ acts on elements of $$\mathcal{D}\left(\counterValuationSet{\Phi}\right)^{\#\mathcal{S}},$$ and is such that, for any $s\in S$, $\mathcal B_\Phi[V](s)=\dclosure{E}$, where $E$ is the set of all couple of tuples $(\boldsymbol\mu,\boldsymbol\nu)=((\mu_i)_{i\in \{1,\ldots,sf(\Phi)\}},(\nu_j)_{j\in\{1,\ldots,pf(\Phi)\}})$ such that there exists an action distribution $\delta$ over the actions $A(s)$ and there exists $(\boldsymbol\mu^{s'},\boldsymbol\nu^{s'})_{s'\in \mathcal{S}}$ with $(\boldsymbol\mu^{s'},\boldsymbol\nu^{s'}) \in V(s')$ for all $s'\in S$ such that all the following are satisfied for all $i,j$.

\noindent\textbf{Valuations for state formulas:}
    \begin{itemize}
    \item If $\Phi_i=a$, then $\mu_i=1$ if $a\in L(s)$ and $0$ otherwise
    \item If $\Phi_i=\neg a$, then $\mu_i=0$ if $a\in L(s)$ and $1$ otherwise
    \item If $\Phi_i=\Phi_{i_1}\land\Phi_{i_2}$, then $\mu_i = \mu_{i_1}\mu_{i_2}$
    \item If $\Phi_i = \mathbb P_{\geq p}\left(\phi_j\right)$, then $\mu_{i}=1$ if $\nu_{j}\geq p$ and $0$ otherwise
    \end{itemize}
    \textbf{Counters for path formulas:}
    \begin{itemize}
        \item If $\phi_j=\Phi_{j_1} \mathbf{W} (\Phi_{j_1} \wedge \Phi_{j_2})$, then \[
        \nu_j =
\begin{cases}
0 \text{ if } \mu_{j_1}=0 \\[1.2ex]
1 \text{ if } \mu_{j_1}=\mu_{j_2} = 1\\[1.2ex]
\sum\limits_{a \in A(s),\, s' \in S'} \delta(a)\, P(s'|s,a)\, \nu^{s'}_{j} \text{ otherwise.}
\end{cases}
        \]
    \end{itemize}

\end{definition}


\begin{definition}[Initial Value Vector]\label{def:initial}
We let $\mathcal I_{\mathcal M}$ be such that, for any $s\in S$,  $\mathcal I_{\mathcal M}(s)$ is the set of tuples $(\boldsymbol\mu,\boldsymbol\nu)\in \counterValuationSet{\Phi}$ satisfying: there exists a policy $\pi$ such that for all $\phi_j = \Phi_{j_1} \mathbf{W} (\Phi_{j_1}\wedge \Phi_{j_2})$
    \[ 
    \nu_{j}=1 \text{ when } \mathcal M_{\pi}^{s}\models \mathbb P_{\geq 1}[\mathbf{G}(\alit{\Phi_j})],\quad \nu_j=0\text{ otherwise.}
    \]
    Additionally, the valuations are maximal for the partial order $\geq$ among the set of valuations such that $\mathcal{I}_{\mathcal{M}}$ satisfies the state compatibility (as in Definition \ref{def:state-compatibility}).

\end{definition}

Our Value Iteration algorithm is given in Algorithm \ref{alg:CPCTL-VI}. 

\begin{algorithm}
\caption{\textsc{CPCTL-VI}}
\label{alg:CPCTL-VI}
\begin{algorithmic}[1]
\State \textbf{Input:} MDP $\mathcal M$ with initial state $s_0$, \CPCTL$\,$formula $\Phi=\land_{i=1}^k\mathbb{P}_{\geq p_i}[\phi_i]$
\State Initialize $V\leftarrow \mathcal I_{\mathcal M}$
\While{for any $\boldsymbol\mu,\boldsymbol\nu\in V(s_{0})$, there exists $i$ such that $\nu_{i}< p_i$}
    \State $V\leftarrow \mathcal B_{ \Phi}[V]$
\EndWhile
\State \textbf{return $V$}
\end{algorithmic}
\end{algorithm}

The algorithm constructs a set of tuples $V(s)$ for each state $s$, starting with the values provided by Definition \ref{def:initial} and iteratively applies the Bellman Operator defined in Definition \ref{def:Bellman}. The algorithm's goal is to maintain the following property: for each state $s$ and each tuple $(\boldsymbol \mu,\boldsymbol \nu)$, the augmented state $(s,\boldsymbol \mu,\boldsymbol \nu)$ is realizable. Moreover, for each such augmented state, the algorithm associates a policy that realizes this augmented state. This construction is described in more detail in the next section.

\subsection{CPCTL-VI: Soundness and Optimality}

This section is devoted to the formal soundness and optimality of the algorithm. 
We first introduce the following lemma, showing how policies of the augmented MDP can be "de-augmented" back to the original MDP.

\begin{restatable}[Projection onto original MDP]{lemma}{projection}\label{projection}
     Let $\MDP$ be a MDP, $\Phi$ a \CPCTL$\,$formula, $\AugMDP$ the augmented MDP, and $\aug \pi$ be a valued policy satisfying the state and path compatibilities. Let $\aug s_0 = (s_0,\boldsymbol\mu,\boldsymbol\nu)$ be the state chosen from $s_{-1}$ by $\aug \pi$. With $\boldsymbol\mu=(\mu_1,\dots,\mu_{sf(\Phi)})$, if $\mu_{j_0}=1$ with $j_0$ the index of $\Phi$ as its own subformula, there exists $\pi$ a policy on $\MDP$, such that $
    \mathcal{M}_\pi^{s_0} \models \Phi$.
    Moreover, for any path subformula $\phi_j$, $
    \mathcal{M}_\pi^{s_0} \models \mathbb{P}_{\geq \nu_j} (\phi_j)$.
\end{restatable}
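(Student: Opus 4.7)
The plan is to lift the valued augmented policy $\aug\pi$ to a history-dependent policy $\pi$ on $\MDP$ that imitates $\aug\pi$ along the natural projection, and then invoke the Coherence Theorem (Theorem~\ref{th:coherence}) to transfer satisfaction from $\AugMDP_{\aug\pi}$ back to $\MDP_\pi$. First, because $\aug\pi$ is valued, the function $\Theta$ deterministically assigns valuations and counters to every successor; once $\aug s_0=(s_0,\boldsymbol\mu,\boldsymbol\nu)$ is fixed, every finite history $\rho=s_0s_1\cdots s_n$ in $\MDP$ therefore admits a unique lift to an augmented history $\aug\rho=\aug s_0\aug s_1\cdots\aug s_n$ with $\aug s_i=(s_i,\Theta(\aug s_{i-1})(s_i))$. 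I then define $\pi(\rho):=\Delta(\aug s_n)$, which is an \textbf{HR} policy on $\MDP$.

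Next I would establish a probability coupling between $\MDP_\pi$ and $\AugMDP_{\aug\pi}$: by induction on $n$, for every cylinder $\mathrm{Cyl}(\rho)$ determined by $\rho=s_0\cdots s_n$, $\mathbb P_{\MDP_\pi}(\mathrm{Cyl}(\rho))=\mathbb P_{\AugMDP_{\aug\pi}}(\mathrm{Cyl}(\aug\rho))$, where the augmented measure is taken from $\aug s_0$ (after the deterministic initial step from $\aug s_{-1}$). The inductive step is immediate since $\Delta(\aug s_n)(a)=\pi(\rho)(a)$ and, given $a$, the augmented transition to a next state $s_{n+1}$ is a Dirac mass on the uniquely determined augmented state $\aug s_{n+1}=(s_{n+1},\Theta(\aug s_n)(s_{n+1}))$. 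Carath\'eodory's extension gives equality on all measurable sets of infinite paths, via the bijection $\aug\rho\mapsto\rho$. Since labels are inherited, $L(\aug s)=L(s)$ for every augmented state (an implicit requirement for condition~(iv) of Definition~\ref{def:state-compatibility} to be meaningful), I would then prove by structural induction on state subformulas $\Psi$ of $\Phi$ that, for every reachable augmented prefix $\aug\rho$ with projection $\rho$, $(\AugMDP_{\aug\pi},\aug\rho)\models\Psi$ iff $(\MDP_\pi,\rho)\models\Psi$. Atomic and boolean cases are immediate from label preservation; the probabilistic case $\Psi=\mathbb P_{\geq q}[\phi]$ follows from the coupling together with the induction hypothesis applied to the state subformulas appearing in $\phi$.

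Finally I would apply Theorem~\ref{th:coherence} at $\aug s_0$: from $\mu_{j_0}=1$ I obtain $(\AugMDP_{\aug\pi},\aug s_0)\models\Phi$, and counter coherence gives $(\AugMDP_{\aug\pi},\aug s_0)\models\mathbb P_{\geq\nu_j}[\phi_j]$ for every path subformula $\phi_j$; the equivalence established in the previous step transfers both to $(\MDP_\pi,s_0)$, yielding the two conclusions of the lemma. The main obstacle will be the threading of the structural induction of the second step: because probabilistic operators nested inside $\Phi$ are evaluated from arbitrary augmented states reachable along $\aug\pi$ (not just from $\aug s_0$), the coupling must be invoked uniformly over all reachable augmented histories, and the induction on $\Psi$ must be phrased as a statement quantified over such histories in order for the probabilistic clause to close correctly.
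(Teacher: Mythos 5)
Your construction of the projected policy is exactly the one the paper uses (its appendix defines the same deterministic lift of histories via $\Theta$ and sets $\pi(\rho)=\Delta(\aug s_n)$, then states the lemma "follows by construction"); your coupling of the path measures and the structural-induction transfer of satisfaction simply make explicit the verification the paper leaves implicit before invoking the Coherence Theorem. The proposal is correct and takes essentially the same route.
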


We now introduce the Soundness theorem which states that each value $(\mu,\nu)$ that we add to the set $E_n(s_0)$ corresponds to a realizable augmented state $\hat s$. In fact, Theorem 3 shows how the algorithm can be extended to output for each such added value a policy that realizes the corresponding augmented state. 

\begin{restatable}[Soundness]{theorem}{thsoundness}
    Let $\MDP$ be a MDP and $\Phi$ be a formula in \CPCTL$\,$.
    Let $s\in \mathcal{S}$, and $E_n(s)$ be the realisability set for $s$ obtained after $n$ steps using \textsc{CPCTL-VI}, then for every $(\boldsymbol \mu,\boldsymbol \nu) \in E_n(s)$, there exists a policy $\pi$ such that
    \[
    \left\{
    \begin{aligned}
    &\forall j\leq pf(\Phi),~ \mathbb{P}(\phi_j|\mathcal{M}_\pi,s)\geq \nu_j,\\
    &\forall j\leq sf(\Phi),~\mu_j = 1 \Rightarrow \MDP_\pi^s \models \Phi_j.
    \end{aligned}
    \right.
    \]
    Moreover, the policy can be taken as the memoryful policy $\pi$ with initial memory state $(\boldsymbol \mu,\boldsymbol \nu)$ and such that, if $(\boldsymbol \mu',\boldsymbol \nu')$ is its current memory state, then
    

    \begin{itemize}[leftmargin=4mm] 
        \item If $\forall j$ such that $\mu'_j \neq 0$, $\Phi_j=b_j$ or $\neg b_j$ for some $b_j\in AP$, then the policy $\pi$ takes an arbitrary action, and its next memory state is a pair of two null tuples.
        \item If for all $j$, either $\nu'_j=0$ or $\nu'_j=1$, then the policy $\pi$ follows indifinitely a memoryless policy $\pi_{\boldsymbol \nu'}$ such that $\mathcal{M}_{\pi_{\nu'}}^s \models \mathbb{P}_{\geq 1} (\mathbf{G} \alit{\Phi_j})$ for any $j$ such that $\nu'_j=1$ and its memory state does not change.
        \item Otherwise, there exists $\delta$, a distribution of actions of $\mathcal{A}(s)$ and tuples $(\boldsymbol \mu^i,\boldsymbol \nu^i)$ for each successor $s_i$ such that $(\delta,\boldsymbol\mu',\boldsymbol\nu',(\boldsymbol \mu^i)_i,(\boldsymbol \nu^i)_i)$ satisfies Bellman's inequality. The policy $\pi$ follows the distribution of action $\delta$, and when reaching the state $s_i$, its memory state is modified to the new value $m=(\boldsymbol \mu^i,\boldsymbol \nu^i)$.
    \end{itemize}  

\end{restatable}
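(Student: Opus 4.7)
The plan is to prove soundness by induction on $n$, the iteration index of \textsc{CPCTL-VI}, with the invariant that each pair $(\boldsymbol\mu,\boldsymbol\nu)\in E_n(s)$ arises as the $(\boldsymbol\mu,\boldsymbol\nu)$-component of some reachable augmented state of a valued policy $\aug\pi$ on $\MDP\times[\Phi]$ that satisfies both the state and path compatibility conditions (Definitions~\ref{def:state-compatibility} and~\ref{def:path-compatibility}). Once this invariant is established, Theorem~\ref{th:coherence} yields $(\AugMDP_{\aug\pi},\aug s)\models \mathbb P_{\geq\nu_j}[\phi_j]$ and $\mu_j=1\Rightarrow(\AugMDP_{\aug\pi},\aug s)\models \Phi_j$, and Lemma~\ref{projection} transfers these guarantees to a policy $\pi$ on $\MDP$. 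The memoryful policy described in the statement is exactly the projection of $\aug\pi$, with the memory state recording the current $(\boldsymbol\mu,\boldsymbol\nu)$.

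For the base case $n=0$, I use Definition~\ref{def:initial}: whenever $\nu_j=1$ there exists a policy witnessing $\MDP^s\models \mathbb P_{\geq 1}[\mathbf G\,\alit{\Phi_j}]$, which by Lemma~\ref{lemma-alit-W} (and its Corollary) already implies $\MDP^s\models \mathbb P_{\geq 1}[\phi_j]$. I then select a single memoryless policy $\pi_{\boldsymbol\nu'}$ jointly satisfying $\mathbb P_{\geq 1}[\mathbf G\,\alit{\Phi_j}]$ for every $j$ with $\nu_j=1$; this is the principal obstacle in the base case, and it is handled using the fact that the collection of actions kept safe by $\mathbf G$-sure constraints is closed under intersection, so one can restrict simultaneously to the admissible action support at each state. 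The maximality clause in Definition~\ref{def:initial} then matches precisely the state compatibility conditions for $\boldsymbol\mu$.

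For the inductive step, let $(\boldsymbol\mu,\boldsymbol\nu)\in E_{n+1}(s)\setminus E_n(s)$. Unfolding Definition~\ref{def:Bellman}, there exist an action distribution $\delta$ over $\mathcal A(s)$ and tuples $(\boldsymbol\mu^{s'},\boldsymbol\nu^{s'})\in E_n(s')$ for each successor $s'$ that realize the Bellman relations. By induction each $(\boldsymbol\mu^{s'},\boldsymbol\nu^{s'})$ is realized by a memoryful policy $\pi_{s'}$ with initial memory $(\boldsymbol\mu^{s'},\boldsymbol\nu^{s'})$, witnessed by a valued policy $\aug\pi_{s'}$ on $\MDP\times[\Phi]$. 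I assemble a new valued policy $\aug\pi$ at $\aug s=(s,\boldsymbol\mu,\boldsymbol\nu)$ by setting $\Delta(\aug s)=\delta$ and $\Theta(\aug s)(s')=(\boldsymbol\mu^{s'},\boldsymbol\nu^{s'})$, then extending $\aug\pi$ to continue with $\aug\pi_{s'}$ at each reached augmented successor. The state-compatibility conditions transcribe the valuation clauses of Definition~\ref{def:Bellman}, and the path-compatibility inequality for the continuing-weak-until is exactly the third case of the Bellman equation for $\nu_j$ (with the trivial cases $\mu_{j_1}=0$ and $\mu_{j_1}=\mu_{j_2}=1$ matching the other two cases). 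Projecting via Lemma~\ref{projection} produces the memoryful policy $\pi$ in the statement, whose three case-descriptions (terminal-atomic, fully-saturated, and generic Bellman step) mirror the three cases of Definition~\ref{def:Bellman} together with the base-case policy from $\mathcal I_{\MDP}$.

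I expect the main obstacle to be the careful verification that the Bellman clauses of Definition~\ref{def:Bellman} translate exactly into the state/path-compatibility of the assembled $\aug\pi$, particularly for the $\mathbf W$-case where one must argue that the successor tuples $(\boldsymbol\mu^{s'},\boldsymbol\nu^{s'})$ inherited from $E_n(s')$ -- each with its own local policy -- can be glued into a single valued policy whose reachable set remains compatible. The gluing is consistent because the memory update is driven solely by the observed successor state $s'$, so the inductive hypothesis applied at each $s'$ furnishes the required augmented sub-policy verbatim.
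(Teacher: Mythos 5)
Your proposal is correct and follows essentially the same route as the paper: both construct a valued policy on $\MDP\times[\Phi]$ from the $\delta$ and successor tuples witnessing the Bellman conditions, check that those conditions transcribe exactly into state/path compatibility, and then conclude via Theorem~\ref{th:coherence} and Lemma~\ref{projection}. The only cosmetic difference is that the paper defines the policy globally on the set $\mathcal S_n$ of augmented states (using $E_j(s)\subseteq E_{j+1}(s)$ to make it well defined), which sidesteps your explicit induction-with-gluing; your worry about jointly realizing the $\mathbb P_{\geq 1}[\mathbf G\,\alit{\Phi_j}]$ constraints in the base case is already discharged by Definition~\ref{def:initial}, which quantifies a single policy before the universal over $j$.
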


Finally, we show in the next theorem that under the generalized Slater's Assumption, the algorithm is optimal, meaning that it finds a policy satisfying the Synthesis Problem when such a policy exists.

\begin{restatable}[Optimality of $\mathrm{CPCTL-VI}$]{theorem}{optimality}\label{thm:VI-optimality}
    We consider $\MDP$ a MDP and $\Phi$ a \CPCTL$\,$formula. Denote by $V_n$ the value frontier obtained after $n$ steps of $\mathrm{CPCTL-VI}$, i.e.
    $V_n = \left(\mathcal B_{ \Phi}\right)^n[\mathcal I_{\mathcal M}]$.
    If $\MDP,\Phi$ satisfy the generalized Slater's Assumption, there exists $n_0\in \mathbb{N}$,$
    \exists (\boldsymbol \mu,\boldsymbol \nu)\in V_{n_0}(s_0),~\nu_i \geq p_i$ for any $i\in\{1,\ldots,k\}$.
\end{restatable}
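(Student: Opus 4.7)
The strategy is to exploit the strict margin afforded by \textbf{gSA} to show that the monotone value iteration cannot fail to reach a witnessing tuple within finitely many iterations. By \textbf{gSA}, fix a policy $\pi^\star$ and thresholds $\boldsymbol p' > \boldsymbol p$ component-wise with $\MDP_{\pi^\star} \models \mathtt T(\Phi,\boldsymbol p')$. For each path subformula $\phi_j$, the probability $\nu^\star_j := \mathbb P(\phi_j\mid \MDP_{\pi^\star})$ satisfies $\nu^\star_j \geq p'_j > p_j$, so the gap $\varepsilon := \min_j (p'_j - p_j) > 0$ is strictly positive. The policy $\pi^\star$ lifts to a valued policy $\hat\pi^\star$ on the augmented MDP $\MDP \times [\Phi]$ that satisfies the state and path compatibility, with counters on reachable augmented states coinciding with the true probabilities of the corresponding subformulas under $\pi^\star$.

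Next, I introduce \emph{depth-truncated} values. For $n \geq 0$ and an augmented state $\hat s = (s,\boldsymbol\mu,\boldsymbol\nu)$ reachable by $\hat\pi^\star$, let $Q_n(\phi_j, \hat s)$ be the $\hat\pi^\star$-probability, starting from $\hat s$, of paths that either reach an augmented state satisfying $\Phi_{j_1}\wedge\Phi_{j_2}$ within the first $n$ steps while staying in $\Phi_{j_1}$, or after $n$ steps are in a state witnessing $\mathbb P_{\geq 1}[\mathbf G\,\alit{\Phi_j}]$ with $\Phi_{j_1}$ holding throughout the first $n$ steps. Because $\MDP$ is finite, paths under $\pi^\star$ almost surely enter a bottom strongly connected component, and the BSCCs contributing to the $\mathbf G \Phi_{j_1}$ branch consist entirely of states at which $\mathbb P_{\geq 1}[\mathbf G\,\alit{\Phi_j}]$ is realisable (using Lemma~\ref{lemma-alit-W}). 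Therefore $Q_n(\phi_j, \hat s_0) \nearrow \nu^\star_j$ as $n\to\infty$, and I pick $N$ large enough so that $Q_N(\phi_j, \hat s_0) > p_j$ for every $j$, which is possible thanks to the margin $\varepsilon$.

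The core of the proof is then an induction on $n$ showing that $V_n(s)$ contains, for every reachable augmented state $\hat s = (s,\cdot,\cdot)$, a tuple $(\boldsymbol\mu, \boldsymbol\nu)$ whose counters dominate $(Q_n(\phi_j,\hat s))_j$ and whose valuations match those of $\hat s$. The base case $n=0$ follows from the definition of $\mathcal I_{\MDP}$ (Definition~\ref{def:initial}), which marks exactly the states realising $\mathbb P_{\geq 1}[\mathbf G\,\alit{\Phi_j}]$ with $\nu_j=1$. For the inductive step, apply $\mathcal B_\Phi$ at $s$ with action distribution $\delta = \hat\pi^\star(\hat s)$ and successor tuples provided by the induction hypothesis; the continuing-weak-until rule in Def.~\ref{def:Bellman} then unfolds to exactly the one-step recurrence defining $Q_n$, yielding the claimed tuple. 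Taking $n = N$ and $s = s_0$ produces $(\boldsymbol\mu,\boldsymbol\nu) \in V_N(s_0)$ with $\nu_j \geq p_j$ for every $j$; hence $n_0 := N$ witnesses the statement.

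The main obstacle is lifting a potentially \textbf{HR} policy $\pi^\star$ to a valued policy on $\MDP\times[\Phi]$: the Bellman operator $\mathcal B_\Phi$ selects a single action distribution per original state, so the value iteration only tracks behaviour that is memoryless on the augmented model. Fortunately, the augmented state already encodes enough memory for satisfying $\Phi$, so the values $\nu^\star_j$ used in the inductive argument depend only on the augmented state and not on the full history in $\MDP$. A secondary subtlety is that the Pareto closure $\uparrow(\cdot)$ appearing in $\mathcal B_\Phi$ may replace the constructed tuple by a dominating one, but any such dominating tuple still has all counters at least $p_j$, so the conclusion $\nu_j \geq p_j$ is preserved.
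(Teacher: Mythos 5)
Your overall architecture (use the Slater margin $\varepsilon$, replace the weak until by a truncated until that terminates either in $\Phi_{j_1}\wedge\Phi_{j_2}$ or in a state realising $\mathbb P_{\geq 1}[\mathbf G\,\alit{\cdot}]$, then induct on the truncation depth against the Bellman operator) is the same skeleton as the paper's completeness proof, which inducts on $\varphi_k=\Psi_1\,\mathbf U^{\leq k}\bigl(\mathbb P_{=1}[\mathbf G\Psi_1]\vee\Psi_2\bigr)$ and absorbs the per-level approximation loss into the Slater gap. However, there is a genuine gap at the step you dispatch in one sentence: the claim that ``paths under $\pi^\star$ almost surely enter a bottom strongly connected component, and the BSCCs contributing to the $\mathbf G\Phi_{j_1}$ branch consist entirely of states at which $\mathbb P_{\geq 1}[\mathbf G\,\alit{\Phi_j}]$ is realisable,'' hence $Q_n\nearrow\nu^\star_j$. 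The policy $\pi^\star$ provided by gSA is an arbitrary \textbf{HR} policy, so the induced Markov chain $\MDP_{\pi^\star}$ has histories as states and is infinite; it has no BSCC structure, and the satisfaction of a nested state formula $\Phi_{j_1}$ at a history depends on the entire future behaviour of $\pi^\star$ from that history. Consequently a positive-measure set of paths can satisfy $\mathbf G\Phi_{j_1}$ while never reaching a history from which the for-sure condition is triggered (the ``evaporating'' phenomenon), in which case $\lim_n Q_n$ is strictly below $\nu^\star_j$ and your choice of $N$ fails. This is exactly why the paper first proves that the Slater policy can be replaced, at arbitrarily small total-variation distance and arbitrarily small threshold loss, by a \emph{finite-memory} (hence non-evaporating) policy — via the finite tree decomposition, the weak upward-continuity lemma for the TV distance, and the induced-safe-policy construction — and only then runs the truncation induction. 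Your proposal contains no substitute for this reduction; the remark that ``the augmented state already encodes enough memory'' is not correct for HR policies, since the counters in an augmented state are chosen labels and do not determine the continuation of $\pi^\star$, so $Q_n(\phi_j,\hat s)$ is not even well defined as a function of the augmented state alone.

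A secondary point: your single induction on the step count $n$ does not address nesting. To certify an inner valuation $\mu_{j_i}=1$ the Bellman operator needs the approximate inner counter to clear the \emph{original} threshold $p_{j_i}$, which requires invoking the Slater margin at every nesting level, not only at the top; the paper handles this with an outer induction on the total depth of the formula combined with the truncation induction. This could likely be repaired within your framework, but it needs to be stated.
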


\section{Numerical experiments} \label{sec:experiments}

\noindent\textit{Presentation of the problem:} 
A robot moves over a surface where the ground becomes increasingly slippery toward the unsafe borders. 
On the right side, the slipperiness remains constant and moderate, while near the left edge the surface is highly polished, making the robot prone to sliding unpredictably in any direction. 
The robot must reach the goal region at the top while avoiding the unsafe edges and can not cross the central wall that divides the terrain. 

We consider the formula $\mathbb{P}_{\geq p_1} \left( \mathbf{G} (\mathbb{P}_{\geq p_2} (\neg d \mathbf{W} G)) \right)$,
where $p_2=0.6$ and our goal is to maximize $\phi=\mathbb{P} \left( \mathbf{G} (\mathbb{P}_{\geq p_2} (\neg d \mathbf{W} G)) |\pi, S \right)$. Compared to  flat formulas, this reduces the bias from starting at a given state, as the robot is asked to minimize the risk of reaching high risk states later on.
We implemented CPCTL-VI for $\phi$ in Python, and the experiments were carried on Windows 10 with an Intel(R) Core(TM) i7-12650H CPU.
\subsection{First Model: a Gridworld with a central Wall}

We introduce the first $7\times 7$ Gridworld model with a central wall in figure \ref{fig:model1}. The Pareto Curve outputted by the Algorithm for this model is presented in Figure \ref{fig:pareto1}. For any $a$ on the curve, there exists a policy such that $\MDP_\pi \models \mathbb{P}_{\geq a} (\mathbf{G} \mathbb{P}_{\geq p_2} (\neg d \mathbf{W} (\neg d \wedge G)))$, while $b$ is only used internally by the algorithm.
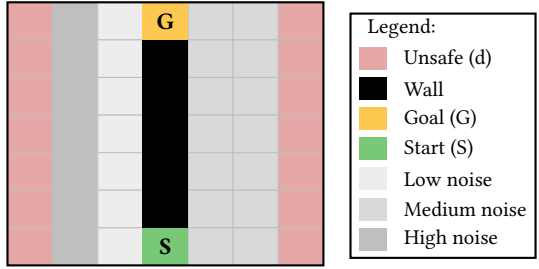
\begin{figure}[h]
\centering
\begin{tikzpicture}[x=0.6cm,y=0.5cm]
  \def\W{7}        
  \def\H{7}        
  \def\midc{3}     
  \def\startR{6}   
  \def\startC{3}   
  \def\goalR{0}    
  \def\goalC{3}    

  \definecolor{unsafe}{RGB}{200,60,60}
  \definecolor{startclr}{RGB}{120,200,120}
  \definecolor{goalclr}{RGB}{252,200,80}
  \colorlet{slipLow}{black!7}      
  \colorlet{slipMed}{black!15}     
  \colorlet{slipHigh}{black!25}    
  \colorlet{slipConst}{black!15}   

  \foreach \r in {0,...,\numexpr\H-1\relax}{
    \foreach \c in {0,...,\numexpr\W-1\relax}{
      \ifnum\c=0
        \fill[unsafe!45] (\c,-\r) rectangle ++(1,-1);
      \else\ifnum\c=\numexpr\W-1\relax
        \fill[unsafe!45] (\c,-\r) rectangle ++(1,-1);
      \else
        \ifnum\c=1
          \fill[slipHigh] (\c,-\r) rectangle ++(1,-1);
        \else\ifnum\c=2
          \fill[slipLow] (\c,-\r) rectangle ++(1,-1);
        \else
          \fill[slipConst] (\c,-\r) rectangle ++(1,-1);
        \fi\fi
      \fi\fi
      \draw[gray!50] (\c,-\r) rectangle ++(1,-1);
    }
  }

  \foreach \r in {1,...,\numexpr\H-2\relax}{
    \ifnum\r=\goalR\relax\else
    \ifnum\r=\startR\relax\else
      \fill[black] (\midc,-\r) rectangle ++(1,-1);
    \fi\fi
  }

  \fill[goalclr] (\goalC,-\goalR) rectangle ++(1,-1);
  \node[font=\bfseries] at (\goalC+0.5,-\goalR-0.5) {G};

  \fill[startclr] (\startC,-\startR) rectangle ++(1,-1);
  \node[font=\bfseries] at (\startC+0.5,-\startR-0.5) {S};

  \draw[thick] (0,0) rectangle (\W,-\H);

  \begin{scope}[shift={(\W+0.6, -0.3)}]
    \draw[thick] (0,0) rectangle (4.1,-6.5);
    \node[anchor=west] at (0.2,-0.4) {\small Legend:};
    \fill[unsafe!45] (0.2,-0.9) rectangle ++(0.6,-0.6);
    \node[anchor=west] at (1.0,-1.2) {\small Unsafe (d)};
    \fill[black] (0.2,-1.7) rectangle ++(0.6,-0.6);
    \node[anchor=west] at (1.0,-2.0) {\small Wall};
    \fill[goalclr] (0.2,-2.5) rectangle ++(0.6,-0.6);
    \node[anchor=west] at (1.0,-2.8) {\small Goal (G)};
    \fill[startclr] (0.2,-3.3) rectangle ++(0.6,-0.6);
    \node[anchor=west] at (1.0,-3.6) {\small Start (S)};
    \fill[slipLow] (0.2,-4.1) rectangle ++(0.6,-0.6);
    \node[anchor=west] at (1.0,-4.4) {\small Low noise};
    \fill[slipMed] (0.2,-4.9) rectangle ++(0.6,-0.6);
    \node[anchor=west] at (1.0,-5.2) {\small Medium noise};
    \fill[slipHigh] (0.2,-5.7) rectangle ++(0.6,-0.6);
    \node[anchor=west] at (1.0,-6.0) {\small High noise};
  \end{scope}
\end{tikzpicture}
\caption{First Grid World: $7\times7$ grid with a central wall and slip gradient represented as gray shading (darker means higher slip probability). The slip is a uniform stochastic noise in all the available directions. The goal (\textbf{G}) is safe and the start (\textbf{S}) is the initial state.}
\label{fig:model1}
\end{figure}

In this model, the safest strategy, i.e. the strategy $\pi$ that maximizes $\mathbb{P}(\phi|\pi,\text{Start})$ is to enter the right corridor and to try and stay in the fifth column. This strategy however corresponds to a higher risk to reach unsafe states in average, but decreases the probability of reaching high risk states compared to a strategy that uses the left corridor.

\begin{figure}[h]
\centering

\begin{subfigure}[t]{0.48\linewidth}
\centering
\includegraphics[width=\linewidth]{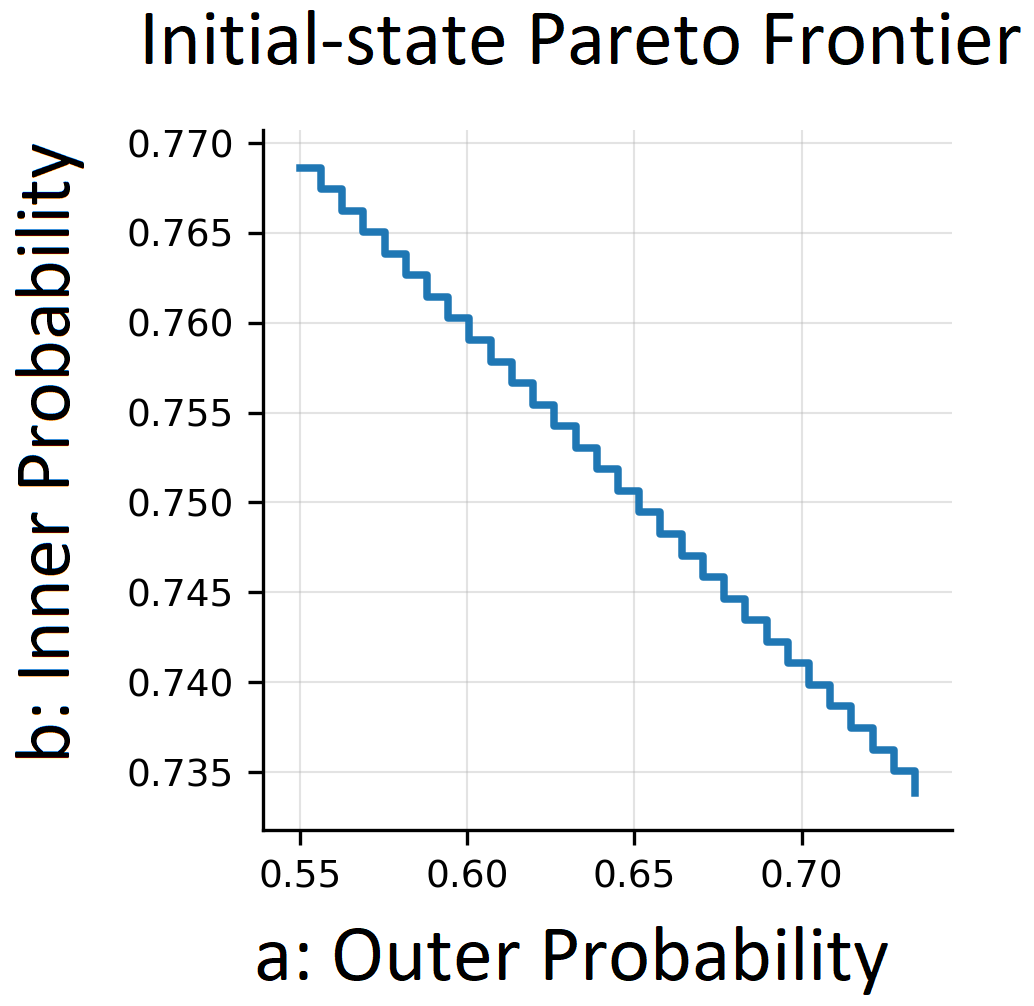}
\caption{First Model.}
\label{fig:pareto1}
\end{subfigure}
\hfill
\begin{subfigure}[t]{0.48\linewidth}
\centering
\includegraphics[width=\linewidth]{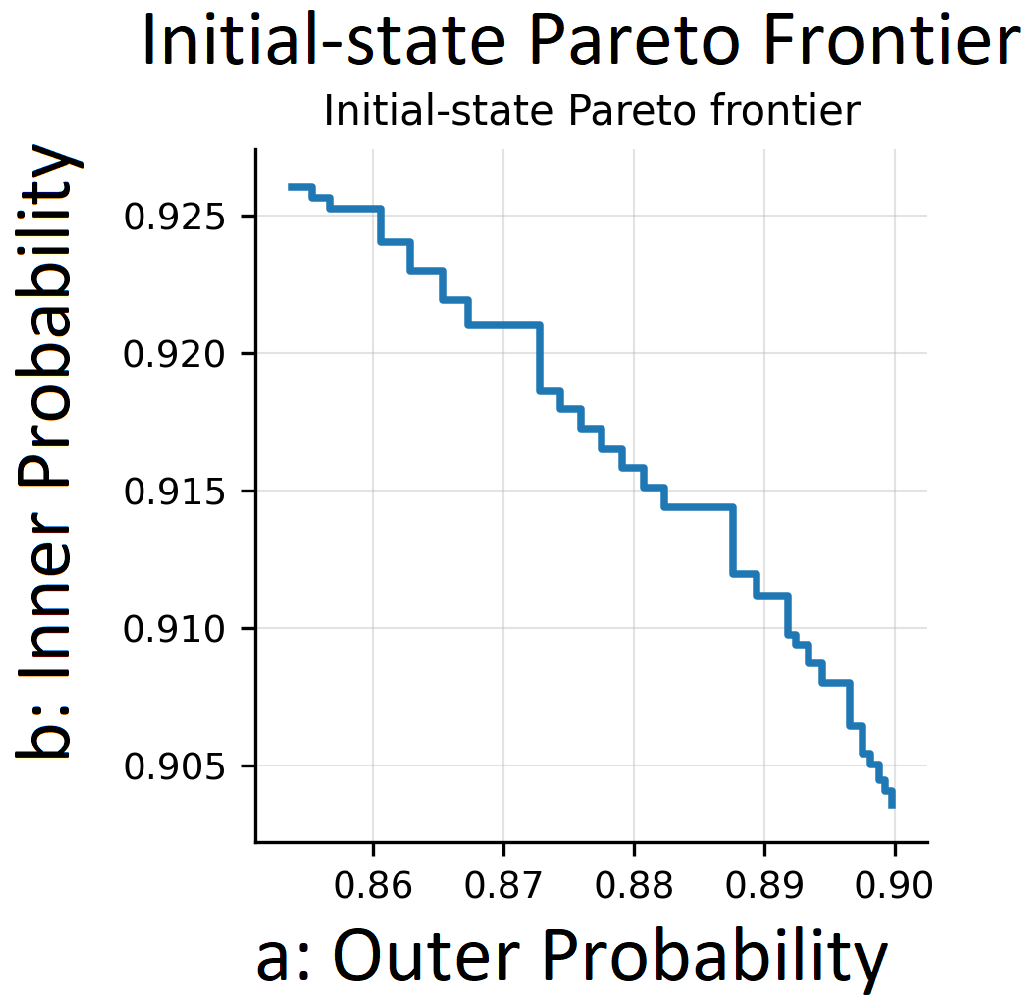}
\caption{Second Model.}
\label{fig:pareto2}
\end{subfigure}

\vspace{-0.2cm}
\caption{Results of the Algorithm \textsc{CPCTL-VI} for the two models. The curves are step functions because the convexity of the Pareto Frontier is not guaranteed. The points are not joined to ensure soundness and guarantee safety. The curve obtained in the second model is an example of non-convex Pareto Curve.}
\label{fig:frontier-compare}
\end{figure}

\subsection{Second Model: Gridworld with a hole}

We present the second model in figure \ref{fig:model2}. A hole is now present in the wall and the agent can move through it.
The optimal strategy is now to follow the right side of the wall (sixth column) or go in the direction of this column when the agent is moved away from it by the stochastic noise. When too far on the left and blocked by the wall, the agent follows the left side of the wall (fourth column). The Pareto Curves obtained by the algorithm are presented in Figure \ref{fig:pareto2}.

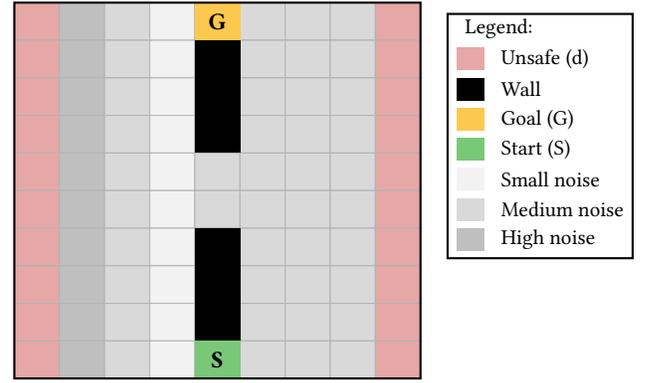
\begin{figure}[H]
\centering
\begin{tikzpicture}[x=0.6cm,y=0.5cm]
  \def\W{9}        
  \def\H{10}       
  \def\midc{4}     
  \def\startR{9}   
  \def\startC{4}   
  \def\goalR{0}    
  \def\goalC{4}    

  \definecolor{unsafe}{RGB}{200,60,60}
  \definecolor{startclr}{RGB}{120,200,120}
  \definecolor{goalclr}{RGB}{252,200,80}
  \colorlet{slipLow}{black!5}
  \colorlet{slipMed}{black!15}
  \colorlet{slipHigh}{black!25}
  \colorlet{slipConst}{black!15}

  \foreach \r in {0,...,\numexpr\H-1\relax}{
    \foreach \c in {0,...,\numexpr\W-1\relax}{
      \ifnum\c=0
        \fill[unsafe!45] (\c,-\r) rectangle ++(1,-1);
      \else\ifnum\c=\numexpr\W-1\relax
        \fill[unsafe!45] (\c,-\r) rectangle ++(1,-1);
      \else
        \ifnum\c=1
          \fill[slipHigh] (\c,-\r) rectangle ++(1,-1); 
        \else\ifnum\c=2
          \fill[slipMed] (\c,-\r) rectangle ++(1,-1);
        \else\ifnum\c=3
          \fill[slipLow] (\c,-\r) rectangle ++(1,-1); 
        \else
          \fill[slipConst] (\c,-\r) rectangle ++(1,-1); 
        \fi\fi\fi
      \fi\fi
      \draw[gray!60] (\c,-\r) rectangle ++(1,-1);
    }
  }

  \foreach \r in {6,7,8}{
    \fill[black] (\midc,-\r) rectangle ++(1,-1);
  }
  \foreach \r in {1,2,3}{
    \fill[black] (\midc,-\r) rectangle ++(1,-1);
  }

  \fill[goalclr] (\goalC,-\goalR) rectangle ++(1,-1);
  \node[font=\bfseries] at (\goalC+0.5,-\goalR-0.5) {G};

  \fill[startclr] (\startC,-\startR) rectangle ++(1,-1);
  \node[font=\bfseries] at (\startC+0.5,-\startR-0.5) {S};

  \draw[thick] (0,0) rectangle (\W,-\H);

  \begin{scope}[shift={(\W+0.6, -0.3)}]
    \draw[thick] (0,0) rectangle (4.1,-6.5);
    \node[anchor=west] at (0.2,-0.4) {\small Legend:};
    \fill[unsafe!45] (0.2,-0.9) rectangle ++(0.6,-0.6);
    \node[anchor=west] at (1.0,-1.2) {\small Unsafe (d)};
    \fill[black] (0.2,-1.7) rectangle ++(0.6,-0.6);
    \node[anchor=west] at (1.0,-2.0) {\small Wall};
    \fill[goalclr] (0.2,-2.5) rectangle ++(0.6,-0.6);
    \node[anchor=west] at (1.0,-2.8) {\small Goal (G)};
    \fill[startclr] (0.2,-3.3) rectangle ++(0.6,-0.6);
    \node[anchor=west] at (1.0,-3.6) {\small Start (S)};
    \fill[slipLow] (0.2,-4.1) rectangle ++(0.6,-0.6);
    \node[anchor=west] at (1.0,-4.4) {\small Small noise};
    \fill[slipMed] (0.2,-4.9) rectangle ++(0.6,-0.6);
    \node[anchor=west] at (1.0,-5.2) {\small Medium noise};
    \fill[slipHigh] (0.2,-5.7) rectangle ++(0.6,-0.6);
    \node[anchor=west] at (1.0,-6) {\small High noise};
  \end{scope}
\end{tikzpicture}
\caption{Second Grid World: $9\times 10$. The agent starts at the initial state $S$ and must reach the safe goal $\mathbf{G}$ while avoiding the unsafe borders. The central wall is impassable but possesses a hole in the middle that behaves like any right-side state.  Darker gray indicates higher slip probability. The slip is a stochastic noise uniform in all the available directions.}
\label{fig:model2}
\end{figure}

\section{Conclusions} \label{sec:conclusions}

In this paper we made the following key contributions to the synthesis problem for probabilistic temporal specifications. 
%
First, we introduced Continuing \PCTL\, (\CPCTL), an expressive fragment of \PCTL$\,$, which allows the nesting of probabilistic operators. thus generalizing multi-objective avoidance specifications.
Then, we provided a novel \emph{augmented MDP} construction for Safe PCTL that encodes global satisfaction as local linear inequalities. In particular, we proved the coherence result Theorem \ref{th:coherence}.
Further, we presented \textsc{CPCTL-VI}, a value-iteration algorithm for \CPCTL\ synthesis, and proved its optimality  under a generalized version of Slater's assumption.
Finally, we evaluated \textsc{CPCTL-VI} experimentally, thus showing the feasibility of our method in practice.

\paragraph{Acknowledgments.} The research described in this paper was partially supported by the EPSRC (grant number EP/X015823/1).



            


\bibliographystyle{ACM-Reference-Format}
\bibliography{sample}

@inproceedings{brazdil2006stochastic, 
  title={Stochastic games with branching-time winning objectives},
  author={Br{\'a}zdil, Tom{\'a}s and Brozek, V{\'a}clav and Forejt, Vojtech and Kucera, Anton{\'\i}n},
  booktitle={21st Annual IEEE Symposium on Logic in Computer Science (LICS'06)},
  pages={349--358},
  year={2006},
  organization={IEEE}
}

@inproceedings{katoen2014probably,
  title={Probably safe or live},
  author={Katoen, Joost-Pieter and Song, Lei and Zhang, Lijun},
  booktitle={Proceedings of the Joint Meeting of the Twenty-Third EACSL Annual Conference on Computer Science Logic (CSL) and the Twenty-Ninth Annual ACM/IEEE Symposium on Logic in Computer Science (LICS)},
  pages={1--10},
  year={2014}
}

@article{bordais2022strategy,
  title={Strategy Synthesis for Global Window PCTL},
  author={Bordais, Benjamin and Busatto-Gaston, Damien and Guha, Shibashis and Raskin, Jean-Fran{\c{c}}ois},
  journal={arXiv preprint arXiv:2204.14107},
  year={2022}
}

@article{bertrand2012bounded,
  title={Bounded satisfiability for PCTL},
  author={Bertrand, Nathalie and Fearnley, John and Schewe, Sven},
  journal={arXiv preprint arXiv:1204.0469},
  year={2012}
}

@inproceedings{kuvcera2005controller,
  title={On the controller synthesis for finite-state Markov decision processes},
  author={Ku{\v{c}}era, Anton{\'\i}n and Stra{\v{z}}ovsk{\`y}, Old{\v{r}}ich},
  booktitle={International Conference on Foundations of Software Technology and Theoretical Computer Science},
  pages={541--552},
  year={2005},
  organization={Springer}
}

@article{WHITE82,
title = {Multi-objective infinite-horizon discounted Markov decision processes},
journal = {Journal of Mathematical Analysis and Applications},
volume = {89},
number = {2},
pages = {639-647},
year = {1982},
issn = {0022-247X},
doi = {https://doi.org/10.1016/0022-247X(82)90122-6},
url = {https://www.sciencedirect.com/science/article/pii/0022247X82901226},
author = {D.J White}
}

@inproceedings{DBLP:conf/icalp/BrazdilFK08,
  author       = {Tom{\'{a}}s Br{\'{a}}zdil and
                  Vojtech Forejt and
                  Anton{\'{\i}}n Kucera},
  editor       = {Luca Aceto and
                  Ivan Damg{\aa}rd and
                  Leslie Ann Goldberg and
                  Magn{\'{u}}s M. Halld{\'{o}}rsson and
                  Anna Ing{\'{o}}lfsd{\'{o}}ttir and
                  Igor Walukiewicz},
  title        = {Controller Synthesis and Verification for Markov Decision Processes
                  with Qualitative Branching Time Objectives},
  booktitle    = {Automata, Languages and Programming, 35th International Colloquium,
                  {ICALP} 2008, Reykjavik, Iceland, July 7-11, 2008, Proceedings, Part
                  {II} - Track {B:} Logic, Semantics, and Theory of Programming {\&}
                  Track {C:} Security and Cryptography Foundations},
  series       = {Lecture Notes in Computer Science},
  volume       = {5126},
  pages        = {148--159},
  publisher    = {Springer},
  year         = {2008},
  url          = {https://doi.org/10.1007/978-3-540-70583-3\_13},
  doi          = {10.1007/978-3-540-70583-3\_13},
  timestamp    = {Fri, 27 Mar 2020 09:02:59 +0100},
  biburl       = {https://dblp.org/rec/conf/icalp/BrazdilFK08.bib},
  bibsource    = {dblp computer science bibliography, https://dblp.org}
}

@inproceedings{DBLP:conf/ifipTCS/BolligC04,
  author       = {Christel Baier and
                  Marcus Gr{\"{o}}{\ss}er and
                  Martin Leucker and
                  Benedikt Bollig and
                  Frank Ciesinski},
  editor       = {Jean{-}Jacques L{\'{e}}vy and
                  Ernst W. Mayr and
                  John C. Mitchell},
  title        = {Controller Synthesis for Probabilistic Systems},
  booktitle    = {Exploring New Frontiers of Theoretical Informatics, {IFIP} 18th World
                  Computer Congress, {TC1} 3rd International Conference on Theoretical
                  Computer Science (TCS2004), 22-27 August 2004, Toulouse, France},
  series       = {{IFIP}},
  volume       = {155},
  pages        = {493--506},
  publisher    = {Kluwer/Springer},
  year         = {2004},
  url          = {https://doi.org/10.1007/1-4020-8141-3\_38},
  doi          = {10.1007/1-4020-8141-3\_38},
  timestamp    = {Thu, 23 Jun 2022 19:56:36 +0200},
  biburl       = {https://dblp.org/rec/conf/ifipTCS/BolligC04.bib},
  bibsource    = {dblp computer science bibliography, https://dblp.org}
}

@article{DBLP:journals/fuin/KuceraS08,
  author       = {Anton{\'{\i}}n Kucera and
                  Oldrich Strazovsk{\'{y}}},
  title        = {On the Controller Synthesis for Finite-State Markov Decision Processes},
  journal      = {Fundam. Informaticae},
  volume       = {82},
  number       = {1-2},
  pages        = {141--153},
  year         = {2008},
  url          = {http://content.iospress.com/articles/fundamenta-informaticae/fi82-1-2-10},
  timestamp    = {Fri, 18 Sep 2020 11:15:46 +0200},
  biburl       = {https://dblp.org/rec/journals/fuin/KuceraS08.bib},
  bibsource    = {dblp computer science bibliography, https://dblp.org}
}

@inproceedings{DBLP:conf/ijcai/SongFZ15,
  author       = {Lei Song and
                  Yuan Feng and
                  Lijun Zhang},
  editor       = {Qiang Yang and
                  Michael J. Wooldridge},
  title        = {Planning for Stochastic Games with Co-Safe Objectives},
  booktitle    = {Proceedings of the Twenty-Fourth International Joint Conference on
                  Artificial Intelligence, {IJCAI} 2015, Buenos Aires, Argentina, July
                  25-31, 2015},
  pages        = {1682--1688},
  publisher    = {{AAAI} Press},
  year         = {2015},
  url          = {http://ijcai.org/Abstract/15/240},
  timestamp    = {Wed, 07 Aug 2024 08:01:05 +0200},
  biburl       = {https://dblp.org/rec/conf/ijcai/SongFZ15.bib},
  bibsource    = {dblp computer science bibliography, https://dblp.org}
}

@article{DBLP:journals/lmcs/EtessamiKVY08,
  author       = {Kousha Etessami and
                  Marta Z. Kwiatkowska and
                  Moshe Y. Vardi and
                  Mihalis Yannakakis},
  title        = {Multi-Objective Model Checking of Markov Decision Processes},
  journal      = {Log. Methods Comput. Sci.},
  volume       = {4},
  number       = {4},
  year         = {2008},
  url          = {https://doi.org/10.2168/LMCS-4(4:8)2008},
  doi          = {10.2168/LMCS-4(4:8)2008},
  timestamp    = {Thu, 25 Jun 2020 21:32:11 +0200},
  biburl       = {https://dblp.org/rec/journals/lmcs/EtessamiKVY08.bib},
  bibsource    = {dblp computer science bibliography, https://dblp.org}
}

@inproceedings{DBLP:conf/atva/ForejtKP12,
  author       = {Vojtech Forejt and
                  Marta Z. Kwiatkowska and
                  David Parker},
  editor       = {Supratik Chakraborty and
                  Madhavan Mukund},
  title        = {Pareto Curves for Probabilistic Model Checking},
  booktitle    = {Automated Technology for Verification and Analysis - 10th International
                  Symposium, {ATVA} 2012, Thiruvananthapuram, India, October 3-6, 2012.
                  Proceedings},
  series       = {Lecture Notes in Computer Science},
  volume       = {7561},
  pages        = {317--332},
  publisher    = {Springer},
  year         = {2012},
  url          = {https://doi.org/10.1007/978-3-642-33386-6\_25},
  doi          = {10.1007/978-3-642-33386-6\_25},
  timestamp    = {Tue, 14 May 2019 10:00:49 +0200},
  biburl       = {https://dblp.org/rec/conf/atva/ForejtKP12.bib},
  bibsource    = {dblp computer science bibliography, https://dblp.org}
}

@book{baier2008principles,
  title={Principles of model checking},
  author={Baier, Christel and Katoen, Joost-Pieter},
  year={2008},
  publisher={MIT press}
}

@book{BSSstochastic,
author = {Bertsekas, Dimitri P. and Shreve, Steven E.},
title = {Stochastic Optimal Control: The Discrete-Time Case},
year = {2007},
isbn = {1886529035},
publisher = {Athena Scientific},
abstract = {This research monograph is the authoritative and comprehensive treatment of the mathematical foundations of stochastic optimal control of discrete-time systems, including the treatment of the intricate measure-theoretic issues.}
}

@article{DBLP:journals/tcs/HaddadM18,
  author       = {Serge Haddad and
                  Benjamin Monmege},
  title        = {Interval iteration algorithm for MDPs and IMDPs},
  journal      = {Theor. Comput. Sci.},
  volume       = {735},
  pages        = {111--131},
  year         = {2018},
  url          = {https://doi.org/10.1016/j.tcs.2016.12.003},
  doi          = {10.1016/J.TCS.2016.12.003},
  timestamp    = {Tue, 10 Jul 2018 09:45:25 +0200},
  biburl       = {https://dblp.org/rec/journals/tcs/HaddadM18.bib},
  bibsource    = {dblp computer science bibliography, https://dblp.org}
}

@Inbook{Ciesinski2004,
author="Ciesinski, Frank
and Gr{\"o}{\ss}er, Marcus",
editor="Baier, Christel
and Haverkort, Boudewijn R.
and Hermanns, Holger
and Katoen, Joost-Pieter
and Siegle, Markus",
title="On Probabilistic Computation Tree Logic",
bookTitle="Validation of Stochastic Systems: A Guide to Current Research",
year="2004",
publisher="Springer Berlin Heidelberg",
address="Berlin, Heidelberg",
pages="147--188",
abstract="In this survey we motivate, define and explain model checking of probabilistic deterministic and nondeterministic systems using the probabilistic computation tree logics PCTL and PCTL*. Juxtapositions to non-deterministic computation tree logic are made and algorithms are presented.",
isbn="978-3-540-24611-4",
doi="10.1007/978-3-540-24611-4_5",
url="https://doi.org/10.1007/978-3-540-24611-4_5"
}

@inbook{10.1007/978-3-540-69850-0_7,
author = {Chatterjee, Krishnendu and Henzinger, Thomas A.},
title = {Value Iteration},
year = {2008},
isbn = {9783540698494},
publisher = {Springer-Verlag},
address = {Berlin, Heidelberg},
url = {https://doi.org/10.1007/978-3-540-69850-0_7},
abstract = {We survey value iteration algorithms on graphs. Such algorithms can be used for determining the existence of certain paths (<em>model checking</em>), the existence of certain strategies (<em>game solving</em>), and the probabilities of certain events (<em>performance analysis</em>). We classify the algorithms according to the value domain (boolean, probabilistic, or quantitative); according to the graph structure (nondeterministic, probabilistic, or multi-player); according to the desired property of paths (Borel level 1, 2, or 3); and according to the alternation depth and convergence rate of fixpoint computations.},
booktitle = {25 Years of Model Checking: History, Achievements, Perspectives},
pages = {107–138},
numpages = {32}
}

@book{10.5555/528623,
author = {Puterman, Martin L.},
title = {Markov Decision Processes: Discrete Stochastic Dynamic Programming},
year = {1994},
isbn = {0471619779},
publisher = {John Wiley \& Sons, Inc.},
address = {USA},
edition = {1st},
abstract = {From the Publisher:The past decade has seen considerable theoretical and applied research on Markov decision processes, as well as the growing use of these models in ecology, economics, communications engineering, and other fields where outcomes are uncertain and sequential decision-making processes are needed. A timely response to this increased activity, Martin L. Puterman's new work provides a uniquely up-to-date, unified, and rigorous treatment of the theoretical, computational, and applied research on Markov decision process models. It discusses all major research directions in the field, highlights many significant applications of Markov decision processes models, and explores numerous important topics that have previously been neglected or given cursory coverage in the literature. Markov Decision Processes focuses primarily on infinite horizon discrete time models and models with discrete time spaces while also examining models with arbitrary state spaces, finite horizon models, and continuous-time discrete state models. The book is organized around optimality criteria, using a common framework centered on the optimality (Bellman) equation for presenting results. The results are presented in a "theorem-proof" format and elaborated on through both discussion and examples, including results that are not available in any other book. A two-state Markov decision process model, presented in Chapter 3, is analyzed repeatedly throughout the book and demonstrates many results and algorithms. Markov Decision Processes covers recent research advances in such areas as countable state space models with average reward criterion, constrained models, and models with risk sensitive optimality criteria. It also explores several topics that have received little or no attention in other books, including modified policy iteration, multichain models with average reward criterion, and sensitive optimality. In addition, a Bibliographic Remarks section in each chapter comments on relevant historic}
}

@book{halmos2013measure,
  title={Measure theory},
  author={Halmos, Paul R},
  volume={18},
  year={2013},
  publisher={Springer}
}

\newpage

\clearpage
\newpage
\onecolumn

\appendix 

\section{Auxiliary Definitions}

\begin{definition} Let $\Phi \in $\PCTL. We write $path(\Phi)=\top$ if and only if there exists $\phi$, $\Phi=\mathbb{P}_{\geq p}(\phi)$. We define:

\noindent \textbf{Nesting depth:} The Nesting Depth $\mathcal{D}_\mathcal{N}$ is defined by induction as
    \[
    \mathcal{D}_\mathcal{N} :\left\{
    \begin{aligned}
        &\mathbb{P}_{\geq p}(\mathbf X \Phi_1) \mapsto 1 +\mathcal{D}_\mathcal{N} (\Phi_1),\\
        &\mathbb{P}_{\geq p}(\Phi_1 \mathbf{W} (\Phi_1 \wedge \Phi_2)) \mapsto 1 +\max ( \mathcal{D}_\mathcal{N} (\Phi_1), \mathcal{D}_\mathcal{N}(\Phi_2)),\\
        &\mathbb{P}_{\geq p}(\Phi_1 \mathbf{W} \Phi_2) \mapsto 1 +\max ( \mathcal{D}_\mathcal{N} (\Phi_1), \mathcal{D}_\mathcal{N}(\Phi_2)),\\
        &\Phi_1 \wedge \Phi_2 \mapsto \max ( \mathcal{D}_\mathcal{N} (\Phi_1), \mathcal{D}_\mathcal{N}(\Phi_2)),\\
        &\Phi_1 \vee \Phi_2 \mapsto \max ( \mathcal{D}_\mathcal{N} (\Phi_1), \mathcal{D}_\mathcal{N}(\Phi_2)),\\
        &a,\neg a,~a\in AP \mapsto 0.
    \end{aligned}
    \right.
    \]
\noindent \textbf{Total depth:} The Total Depth $\mathcal{D}_t$ is defined by induction as
    \[
    \mathcal{D}_t :\left\{
    \begin{aligned}
        &\mathbb{P}_{\geq p}(\mathbf X \Phi_1) \mapsto 1 +\mathcal{D}_\mathcal{t} (\Phi_1),\\
        &\mathbb{P}_{\geq p}(\Phi_1 \mathbf{W} (\Phi_1 \wedge \Phi_2)) \mapsto 1 +\max ( \mathcal{D}_t (\Phi_1), \mathcal{D}_t(\Phi_2)),\\
        &\mathbb{P}_{\geq p}(\Phi_1 \mathbf{W} \Phi_2) \mapsto 1 +\max ( \mathcal{D}_\mathcal{t} (\Phi_1), \mathcal{D}_t(\Phi_2)),\\
        &\Phi_1 \wedge \Phi_2,~\mathcal{D}_t(\Phi_1)+ \mathcal{D}_t(\Phi_2)\neq 0 \mapsto 1 +\max ( \mathcal{D}_t (\Phi_1), \mathcal{D}_t(\Phi_2)),\\
        &\Phi_1 \wedge \Phi_2,~\mathcal{D}_t(\Phi_1) = \mathcal{D}_t(\Phi_2) = 0\mapsto 0,\\
        &\Phi_1 \vee \Phi_2,~\mathcal{D}_t(\Phi_1)+ \mathcal{D}_t(\Phi_2)\neq 0 \mapsto 1 +\max ( \mathcal{D}_t (\Phi_1), \mathcal{D}_t(\Phi_2)),\\
        &\Phi_1 \vee \Phi_2,~\mathcal{D}_t(\Phi_1) = \mathcal{D}_t(\Phi_2) = 0\mapsto 0,\\
        &a,\neg a,~a\in AP \mapsto 0.
    \end{aligned}
    \right.
    \]

\end{definition}

\begin{definition}
For a formula $\Phi$, a labelled state $s$, and an associated set of counters $\nu$, we define the canonical associated set of valuations $\mu=\Xi(\Phi,s)(\nu)$ to be $\mu=(\mu_1,\dots\mu_{SF(\Phi)})$ where each $\mu_j$ is defined by induction on the state formula $\Phi_j$ as 
\[
\begin{aligned}
&\Phi_j = \mathbb{P}_{\geq p_j}(\phi_j) \mapsto 1 \text{ if } \nu_j \geq p_j,~0\text{ otherwise},\\ 
&\Phi_j = \Phi_{j,1} \wedge \Phi_{j,2} \mapsto \mu_{j,1}\cdot \mu_{j,2},\\
&\Phi_j = b_j \in AP \mapsto 1 \text{ if }b_j\in L(s),~0\text{ otherwise},\\
&\Phi_j = \neg b_j,~b_j \in AP \mapsto 0 \text{ if }b_j\in L(s),~1\text{ otherwise}.\\ 
\end{aligned}
\]
When clear from the context, we simply write "let $(\mu,\nu)$ be the canonical profile associated to $\nu$ at $s$".
    
\end{definition}

\begin{lemma}
    Let $\MDP$ be a MDP and $\Phi\in$\CPCTL.
    Let $V_n(s)$ be the value-frontier sets obtained after $n$ iterations of the Algorithm \textsc{CPCTL-VI}. For any $s\in \mathcal{S}$, for any profile $(\boldsymbol{\mu,\nu})\in V_n(s)$, we have
    \[
    \boldsymbol{\mu} = \Xi(\Phi,s)(\boldsymbol{\nu}).
    \]
\end{lemma}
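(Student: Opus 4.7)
The plan is to proceed by a double induction: an outer induction on the number $n$ of iterations of \textsc{CPCTL-VI}, and an inner structural induction on the state subformulas $\Phi_j$ of $\Phi$, matching the inductive definition of $\Xi$.

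For the base case $n=0$, we have $V_0 = \mathcal{I}_{\MDP}$. By Def.~\ref{def:initial}, the counters $\boldsymbol{\nu}$ are fixed (each $\nu_j\in\{0,1\}$ according to whether a witness policy $\pi$ satisfies $\mathcal M_\pi^s\models \mathbb P_{\geq 1}[\mathbf{G}(\alit{\Phi_j})]$) and $\boldsymbol{\mu}$ is taken maximal w.r.t.\ $\leq$ among valuations satisfying state compatibility. I would then perform a structural induction on $\Phi_j$: for $\Phi_j=b\in AP$, state compatibility gives $\mu_j=1\Rightarrow b\in L(s)$ and maximality forces the converse, matching $\Xi$; for $\Phi_j=\mathbb P_{\geq p_j}(\phi_j)$, state compatibility gives $\mu_j=1\Rightarrow \nu_j\geq p_j$ and maximality again forces the converse; for $\Phi_j=\Phi_{j_1}\land\Phi_{j_2}$, state compatibility yields $\mu_j\leq \mu_{j_1}\mu_{j_2}$ and maximality yields equality. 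Since these are exactly the clauses defining $\Xi$, we get $\boldsymbol{\mu}=\Xi(\Phi,s)(\boldsymbol{\nu})$.

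For the inductive step, assume the property holds for all profiles in $V_n(s')$ for every $s'$, and let $(\boldsymbol{\mu},\boldsymbol{\nu})\in V_{n+1}(s)=\mathcal B_\Phi[V_n](s)$. The key observation is that the ``Valuations for state formulas'' clauses of Def.~\ref{def:Bellman} compute $\boldsymbol{\mu}$ deterministically from $\boldsymbol{\nu}$ via the very same recursion used to define $\Xi$: atoms and negated atoms are read off from $L(s)$, conjunctions take the product $\mu_i=\mu_{i_1}\mu_{i_2}$, and probabilistic state subformulas $\Phi_i=\mathbb P_{\geq p}(\phi_j)$ set $\mu_i=\mathbf 1_{\nu_j\geq p}$. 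A short structural induction on the subformula tree of $\Phi$ then gives $\boldsymbol{\mu}=\Xi(\Phi,s)(\boldsymbol{\nu})$, closing the induction.

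The main obstacle is really only in the base case: one must verify that maximality in Def.~\ref{def:initial} singles out a \emph{unique} valuation, namely the $\Xi$-one. This hinges on the fact that all state-compatibility constraints take the form $\mu_j=1\Rightarrow (\text{some condition on } s,\boldsymbol{\nu},\boldsymbol{\mu}_{<j})$, i.e.\ they only forbid setting $\mu_j=1$, never force it to $0$ once lower indices are fixed. Consequently the componentwise maximum is well-defined and coincides with $\Xi$. The inductive step, by contrast, is essentially bookkeeping, since the Bellman operator is explicitly written with the $\Xi$-rule baked into its definition.
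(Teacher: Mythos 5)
Your proof is correct and follows essentially the same route as the paper, whose own argument is exactly the observation that the Bellman operator's valuation clauses are the $\Xi$-recursion, combined with a structural induction on $\Phi$. Your explicit treatment of the base case $V_0=\mathcal I_{\mathcal M}$ — showing that maximality under the state-compatibility constraints (which only ever forbid, never force, $\mu_j=1$) pins down the unique canonical valuation — is a detail the paper leaves implicit, and is a worthwhile addition rather than a divergence.
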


\begin{proof}
    The proof follows from an induction of the structure of $\Phi$, after noticing that the Bellman's operator only adds profile whose valuations satisfy the canonical valuation property.
\end{proof}

\begin{definition}
    Let $\MDP$ be an MDP and $\Phi$ be a \CPCTL\ formula. Let $s$ be a state of $\MDP$ and $\pi$ a policy on $\MDP^s$. We denote $\mathtt{P}(\pi|\mathcal{M}^s,\Phi)$, the tuple
    \[
    \mathtt{P}(\pi|\mathcal{M}^s,\Phi)=\boldsymbol{p,q}=(p_1,\dots,p_{sf(\Phi)},q_1,\dots,q_{pf(\Phi)})
    \]
    defined as the following. For any $j\leq pf(\Phi)$,
    \[
    q_j = \mathbb{P}(\phi_j | \mathcal{M}_\pi^s),
    \]
    and 
    \[
    \boldsymbol{p}= \Xi(\Phi,s)(\boldsymbol{q}).
    \]
    Intuitively, $\boldsymbol{p,q}$ correspond to the profile that is realized by $\pi$ in $\MDP$ when starting at $s$.
    
\end{definition}

\begin{definition}\textbf{(Probability Thresholds, Strict satisfaction)}
    We define the \emph{Probability Thresholds} of a \CPCTL\ formula $\Phi$ as the tuple denoted $prob(\Phi)$ such that for each state formula $\Phi_j$ of the form $\Phi_j = \mathbb{P}_{\geq p_j}(\phi_j)$, $prob(\Phi)=p_j$.

    Let $\MDP$ be a MDP, $\Phi$ be a \CPCTL\ formula and $\pi$ be a policy. We say that $\pi$ \emph{strictly satisfies} $\Phi$, denoted $\pi\models_S \Phi$, if with $\boldsymbol p=prob(\Phi)$, there exists $\boldsymbol q > \boldsymbol p$ satisfying
    \[
    \MDP_
    {\pi}\models \Psi,\quad \Psi = \mathtt{T}(\Phi,\boldsymbol q).
    \]
\end{definition}

\section{The coherence theorem} \label{appendix:coherencethm}

\thcoherence*

\begin{proof}
    We consider a MDP $\MDP$, a formula $\Phi$ and the corresponding augmented MDP $\AugMDP$. Let $\aug \pi$ be a valued policy satisfying both state compatibility and path compatibility. We will show by induction on the structure of $\Phi$ that $\aug \pi$ is coherent. 

    \textbf{Initialisation:}

    We consider $\Phi_j$ a state formula that is of the form $\Phi_j=b$ for $b \in AP$. There is no counter associated to $\Phi_j$ because there is no path formula associated. We consider any augmented state $\aug s = (s,\boldsymbol \mu,\boldsymbol \nu) \in S_{\pi,\neq -1}$, the set of augmented states reachable following $\aug \pi$ after one step. Then, 
    \begin{itemize}
        \item If $\mu_j=0$, then coherence is satisfied automatically.
        \item If $\mu_j=1$, then by state compatibility, $s\models b$, so we indeed have $\AugMDP_{\aug \pi} (\aug s) \models \Phi_j$.
    \end{itemize}

    \textbf{Induction: Logical operators}
    We consider $\Phi_j= \Phi_{j_1} \wedge \Phi_{j_2}$ and a state $\aug s\in \mathcal{S}_r$. By state compatibility, we deduce that $\mu_{j_1}=1$ and $\mu_{j_2}=1$. Using the induction hypothesis, we deduce that $\aug s\models \Phi_{j_1}$ and $\aug s\models \Phi_{j_2}$. The proof is similar for $\Phi_j=\Phi_{j_1}\wedge \Phi_{j_2}$.

    \textbf{Induction: Temporal operators}
    We start with $\phi_j = \mathbf X\Phi_{j_1}$. We have by path compatibility
    \[
        \nu_j \leq \left( \sum_{i=1}^m P(s_i|s,a) \delta_{\aug \pi,\aug s}(a) \mu_{j_1}^i \right). 
    \]
    Using the induction hypothesis for $\Phi_{j_1}$, we have that 
    \[
    \mu_{j_1}^i \leq \mathbb{P}(\Phi_{j_1}|\MDP_{\aug\pi}^{\aug s_i}),
    \]
    which means 
    \[
    \mathbb{P}(X\Phi_{j_1}|\aug \pi,\aug s) = \sum_{i=1}^m P(s_i|s,a) \delta_{\aug \pi,\aug s}(a) \mathbb{P}(\Phi_{j_1}|\MDP_{\aug\pi}^{\aug s_i}) \geq \nu_j.
    \]
    
    We now consider $\phi_j = \Phi_{j_1} \mathbf{W} \Phi_{j_2}$. By induction, we have that for any $\aug s \in \aug{S}_*$,
    \[
    (\mu_{j_1} =1) \Rightarrow (\AugMDP_{\aug \pi}^{\aug s} \models \Phi_{j_1}),\quad(\mu_{j_2} =1) \Rightarrow (\AugMDP_{\aug \pi}^{\aug s} \models \Phi_{j_2}).
    \]
    Let $\phi_{j_1}$ and $\phi_{j_2}$ the path formulas such that 
    \[
    \Phi_{j_1} = \mathbb{P}_{\geq p_1}[ \phi_{j_1}],~\Phi_{j_1} = \mathbb{P}_{\geq p_1} [\phi_{j_1}].
    \]
    By induction, we also have 
    \[
    \AugMDP_{\aug \pi}^{\aug s }\models \mathbb{P}_{\geq \eta_{j_1}}(\phi_{j_1}),~\AugMDP_{\aug \pi}^{\aug s} \models \mathbb{P}_{\geq \eta_{j_2}}(\phi_{j_2}).
    \]

    Now, with $\Theta(\aug s)$ the associated counters and valuations for the successors $\{ s_1,\dots,s_m \}$ of $s$ and $\delta$ the distribution on $\Act$, we also have by induction that for each $s_i$,
    \[
    \AugMDP_{\aug \pi},(s_i,\boldsymbol \mu^i,\boldsymbol \nu^i) \models \mathbb{P}_{\geq \nu_{j_1}^i} (\phi_{j_1}),~\AugMDP_{\aug \pi},(s_i,\boldsymbol \mu^i,\boldsymbol \nu^i) \models \mathbb{P}_{\nu_{j_2}^i} (\phi_{j_2}),
    \]
    in other words, 
    \[
    \mathbb{P}(\phi_{j_1}|(s_i,\boldsymbol \mu^i,\boldsymbol \nu^i),\AugMDP_{\aug \pi}) \geq \nu_{j_1}^i,~\mathbb{P}(\phi_{j_2}|(s_i,\boldsymbol \mu^i,\boldsymbol \nu^i),\AugMDP_{\aug \pi}) \geq \nu_{j_2}^i.
    \]
    Now, we again consider the formula $\phi_j=\Phi_{j_1} \mathbf W (\Phi_{j_1} \wedge \Phi_{j_2})$, and distinguish three cases.
    \begin{itemize}
        \item[(i)] If $\Phi_{j_1}$ is not satisfied at $\aug s$, then $\mathbb{P}(\Phi,\aug s) =0$. But if $\Phi_{j_1}$ is not satisfied at $\aug s$, then necessarily, by induction hypothesis, $\mu_{j_1}=0$. The path compatibility condition
        \[
        \nu_j \leq \mu_{j_1} \left( \sum_{i=1}^m P(s_i|s,a) \delta_{\aug \pi,\aug s}(a) \max(\nu_j^i,\mu_{j_1}^i\mu_{j_2}^i) \right)
        \]
        implies that $\nu_j=0$. The coherence condition 
        \[
        \mathbb{P} (\phi_j|\aug s,\AugMDP_{\aug \pi}) \geq \nu_j
        \]
        is thus satisfied. The state compatibility also provides $\mu_j=0$, so the coherence condition is satisfied.
        \item[(ii)] If $\Phi_{j_2}$ is satisfied, then $\mu_{j_2}$ can be either $0$ or $1$. In that case, the coherence condition is automatically satisfied since 
        \[
        \mathbb{P}(\phi_j|\aug s,\aug M_{\aug \pi}) =1 \geq \nu_j.
        \]
        \item[(iii)] If $\Phi_{j_1}$ is satisfied and $\Phi_{j_2}$ is not satisfied, then we have by state compatibility that $\mu_{j_2}=0$. At the state $\aug s$, the policy $\aug \pi$ chooses the distribution $\delta$ and the counters and valuations $(\boldsymbol \mu^i, \boldsymbol \nu^i)$, $\boldsymbol \mu^i = (\mu_1^i,\dots ,\mu_m^i)$, $\boldsymbol \nu^i = (\nu_1^i,\dots,\nu_m^i)$. By path compatibility, 
        \[
        \nu_j \leq \mu_{j_1} \left( \sum_{i=1}^m P(s_i|s,a) \delta_{\aug \pi,\aug s}(a) \max(\nu_j^i,\mu_{j_2}) \right).
        \]
        In this case, we have $\mu_{j_2} = 0$ so
        \[
        \nu_j \leq \mu_{j_1} \left( \sum_{i=1}^m P(s_i|s,a) \delta_{\aug \pi,\aug s}(a) \nu_j^i \right).
        \]
        If $\mu_{j_1} = 0$, then $\mu_j=0$ so the coherence condition is satisfied. If $\mu_{j_1}=1$, we get 
        \[
        \nu_j \leq \left( \sum_{i=1}^m P(s_i|s,a) \delta_{\aug \pi,\aug s}(a) \nu_j^i \right).
        \]
        For a path $\xi=\xi_0\xi_1\dots$, we have that $\xi\models \phi_j$ if and only if 
        \[
        (\xi\models \Phi_{j_1}\wedge \Phi_{j_2}) \text{ or } \left(\xi \models \Phi_{j_1} \wedge (\xi_{[\geq 1]}) \models \phi_j \right).
        \]
        Here, we assumed that $\Phi_{j_2}$ is not satisfied at $\aug s$ and that $\Phi_{j_1}$ is satisfied at $\aug s$, so
        \[
        \mathbb{P}_{\aug \pi}(\phi_j,\aug s) = \sum_{s_i} \sum_{a\in \Act} P(s_i|s,a) \delta(a) \mathbb{P}(\phi_j|(s_i,\boldsymbol \mu^i, \boldsymbol \nu^i),\aug \MDP_{\aug \pi}).
        \]

        Moreover, $\nu_j$ satisfies

        \[
        \nu_j \leq \sum_{s_i} \sum_{a\in \Act} P(s_i|s,a) \delta(a) \nu_j^i.
        \]

        We denote by $D_{\Phi_j}=\aug{S}_{\phi_j,\leq}$ the set of augmented states $\aug s=(s,\boldsymbol \mu, \boldsymbol \nu)$ such that 
        \[
        \nu_j \leq \mathbb{P}(\phi_j | \aug \pi,\aug s,\aug \MDP_{\aug \pi}).
        \]

        Note that for any $\aug s$ such that $\phi_{j_1}\wedge \phi_{j_2}$ is satisfied, $\aug s \in \aug{S}_{\phi_j,\leq}$. Also, for any $\aug s$ such that $\phi_{j_1}$ is not satisfied, $\aug s \in \aug{S}_{\phi_j,\leq}$. Finally, for every $\aug s$ such that $succ(\aug s)\subseteq \aug{S}_{\phi_j,\leq}$, then $\aug s \in \aug{S}_{\phi_j,\leq}$.

    \end{itemize}
    
    We consider the following sets. 
    \[
    \left\{
    \begin{aligned}
        &A_{\Phi_{j}} = \{ \aug s\in \aug{S}_*,~\nu(j_1)=1 \},\\
        &\tilde{A}_{\Phi_{j}} = \{ \aug s\in \aug{S}_*,~\AugMDP_{\aug \pi}(\aug s)\models \Phi_{j_1} \},\\
        &B_{\Phi_{j}} = \{ \aug s\in \aug{S}_*,~(\nu(j_1)=1) \wedge (\nu(j_2)=1)  \},\\
        &\tilde{B}_{\Phi_j} = \{ \aug s\in \aug{S}_*,~\AugMDP_{\aug \pi}(\aug s)\models \Phi_{j_1} \wedge \Phi_{j_2} \},\\
        &C_{\Phi_j} = \aug{S}_* \setminus ( A_{\phi_j} \cup B_{\Phi_j} ).\\
    \end{aligned}
    \right.
    \]

    The induction hypothesis gives 
    \[
    A_{\Phi_j} \subseteq \tilde{A}_{\Phi_j},\quad B_{\Phi_j}\subseteq \tilde{B}_{\Phi_j},
    \]
    and we proved that $\tilde B_{\Phi_j} \subseteq D_{\Phi_j}$ and $C_{\Phi_j} \subseteq D_{\Phi_j}$.

    For a path $w=w_0w_1\dots$, we have 
    \[
    \begin{aligned}
         w\models \Phi_j\Leftrightarrow &(\forall i,~w_i\in \tilde A_{\Phi_j}) \\
        &\vee \left( \exists i_0,~w_{i_0}\in \tilde{B}_{\Phi_j} \wedge \forall i< i_0,~w_i\in \tilde{A}_{\Phi_j}\right).
    \end{aligned}
    \]

    We consider
    \[
    \begin{aligned}
    &\Phi_j^k(w) = \left( \forall i\leq k,~w_i\in \tilde{A}_{\Phi_j} \right) \\
    &\vee \left( \exists i_0 \leq k,~w_{i_0} \in \tilde{B}_{\Phi_j} \wedge \forall i\leq i_0,w_i\in \tilde{A}_{\Phi_j} \right).
    \end{aligned}
    \]
    Then,
    \[
    \Phi_j \Leftrightarrow \bigwedge_{k=0}^\infty \Phi_j^k.
    \]
    We also remark that $\Phi_{j}^{k_2} \Rightarrow \Phi_{j}^{k_1}$ for $k_2\geq k_1$. We now look at the state 
    \[
    D_{\Phi_j}^k = \{ \aug s\in \aug{S}_*,~\eta_j \leq \mathbb{P}(\phi_j^k|\aug s,\aug \pi) \}.
    \]

    We will show by induction that $D_{\Phi_j}^k = \aug{S}_*$ for any $k$. For $k=0$, we have
    \[
    \phi_j^0(w) = (w_0 \in \tilde{A}_{\Phi_j}).
    \]
    Then, $\mathbb{P}(\phi_j^0|\aug s,\aug \pi) =1$ on $\tilde{A}_{\Phi_j}$ and $0$ on the rest. By state compatibility and the induction hypothesis, $\eta_j=0$ on the complementary of $\tilde{A}_{\Phi_j}$ so $P_0$ is satisfied and $D_{\Phi_j}^0 = \aug{S}_*$.

    We now consider $P_k\Rightarrow P_{k+1}$. We hence assume that $D_{\Phi_j}^k=\aug{S}_*$ which means that for any $\aug s \in \aug{S}_*$,
    \[
    \eta_j \leq \mathbb{P}(\phi_j^k|\aug s, \aug \MDP_{\aug \pi}).
    \]
    We now consider $\aug s\in \aug{S}_*$. Note that for any $w$ emerging from $\aug s$, we have 
    \[
    w \models \phi_j^{k+1} \Leftrightarrow \left( w_0 \in \tilde{A}_{\Phi_j} \wedge (w_1\dots) \models \phi_j^k \right) \vee w_0 \in \tilde{B}_{\Phi_j}.
    \]

    We distinguish the cases.
    \begin{itemize}
        \item $\aug s \in \tilde{B}_{\Phi_j}$, then the property is automatically satisfied and $\aug s \in D_{\Phi_j}^{k+1}$.
        \item $\aug s \in \tilde{C}_{\Phi_j}$, then by state compatibility and induction hypothesis, $\eta_j=0$ so the property is satisfied and $\aug s \in D_{\Phi_j}^{k+1} $.
        \item $\aug s \in \tilde{A}_{\Phi_j}$. Then, $\nu_j$ satisfies
        \[
        \nu_j \leq \sum_{i} \sum_a \delta(a) P(\aug s_i|s,a) \nu_j^i.
        \]
        Since $\aug{S}_*\subseteq D_{\phi_j}^k$ by induction, we have 
        \[
        \nu_j^i \leq \mathbb{P}(\phi_j^k|\aug s_i,\aug \pi)
        \]
        for each successor $\aug s_i$. Since when $\aug s \in \tilde{A}_{\Phi_j}$, we have 
        \[
        \mathbb{P}(\phi_j^{k+1}|\aug s,\aug \pi) = \sum_{i} \sum_a \delta(a) P(\aug s_i|s,a) \mathbb{P}(\phi_j^{k}|\aug s_i,\aug \MDP_{\aug \pi}),
        \]
        we deduce that 
        \[
        \nu_j \leq \mathbb{P}(\phi_j^{k+1}|\aug s,\aug \MDP_{\aug \pi}).
        \]
        
    \end{itemize}

\end{proof}

\section{Properties of \CPCTL$\,$formulas}

We first show Lemma \ref{lemma-alit-W}. To that end, we show that if $\Phi$ is satisfied, its literal projection has to be also satisfied.
\begin{lemma}\label{lemma-avoid}
  For any \CPCTL$\,$state formula $\Phi$, any MDP $\mathcal M$, and any policy $\pi$ of $\mathcal M$, if $\mathcal M_\pi\models \Phi$ then $\mathcal M_\pi\models \alit{\Phi}$.
\end{lemma}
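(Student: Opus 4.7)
The plan is to proceed by structural induction on the \CPCTL\ state formula $\Phi$, establishing the slightly stronger claim that for every state $s$ of $\mathcal M_\pi$, $s \models \Phi \Rightarrow s \models \alit{\Phi}$. The lemma as stated is the special case where $s$ is the initial state of $\mathcal M_\pi$. The induction follows the three clauses in the definition of $\alit{\cdot}$.

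The two easy cases dispatch most of the work. When $\Phi = a$ or $\Phi = \neg a$ for some $a \in AP$, we have $\alit{\Phi} = \Phi$ and the implication is immediate. When $\Phi = \Phi_1 \land \Phi_2$, the hypothesis $s \models \Phi$ gives $s \models \Phi_i$ for $i \in \{1,2\}$, so the induction hypothesis yields $s \models \alit{\Phi_i}$ for each $i$, and therefore $s \models \alit{\Phi_1} \land \alit{\Phi_2} = \alit{\Phi}$.

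The content is in the probabilistic case $\Phi = \mathbb P_{\geq p}[\Phi_1 \mathbf{W}(\Phi_1 \land \Phi_2)]$. When $p = 0$, $\alit{\Phi} = \top$ and there is nothing to prove. When $p > 0$, the assumption $s \models \Phi$ means that the $\pi$-measure of the set of paths $\xi$ from $s$ satisfying the weak-until is at least $p$, hence strictly positive, so at least one such path $\xi$ exists. By the semantics of $\mathbf{W}$, either $\xi \models \mathbf{G}\Phi_1$ or $\xi \models \Phi_1 \mathbf{U}(\Phi_1 \land \Phi_2)$; in both subcases the first state $\xi[0] = s$ satisfies $\Phi_1$ (for the until branch, the clause $\forall i \leq j,\ \xi[i] \models \Phi_1$ covers $i = 0$). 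Applying the induction hypothesis to $\Phi_1$ at $s$ gives $s \models \alit{\Phi_1}$, which by definition equals $\alit{\Phi}$ when $p > 0$.

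The one mildly delicate step — and the reason the lemma is aligned with \CPCTL\ rather than a broader fragment — is the passage from a probabilistic lower bound $\mathbb P_{\geq p}$ with $p > 0$ to a deterministic conclusion about the current state. This works because the \CPCTL\ syntax forces the invariant of the weak-until and a conjunct of its goal to coincide as $\Phi_1$, which ensures that any single witnessing path carries $\Phi_1$ at its initial state. Without this restriction (e.g., for a \safePCTL\ formula like $\mathbb P_{\geq 1}(\mathbf{X} a)$) no such projection onto the present state exists, which is consistent with the bullet-point remark immediately after Lemma~\ref{lemma-alit-W} in the excerpt.
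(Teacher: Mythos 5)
Your proof is correct and follows essentially the same route as the paper's: structural induction on $\Phi$, with the literal and conjunction cases immediate, and the probabilistic case handled by observing that $p>0$ forces a witnessing path whose first state satisfies $\Phi_1$ (the paper compresses this to ``$\mathcal M_\pi\models\Phi_1$ by definition of PCTL semantics''), after which the induction hypothesis applies. Your extra remarks about quantifying over all states and about why the argument fails outside \CPCTL\ are consistent with, but not needed beyond, the paper's argument.
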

\begin{proof}
    We show the lemma by induction on the state 
    \CPCTL$\,$formula $\Phi$. If $\Phi$ is a literal, then the lemma is immediate. If $\Phi=\Phi_1\land\Phi_2$, then $\mathcal M_\pi \models \alit{\Phi_1}$ and $\mathcal M_\pi\models \alit{\Phi_2}$, so $\mathcal M_\pi \models \alit{\Phi_1}\land \alit{\Phi_2}=\alit{\Phi_1\land\Phi_2}$. If $\Phi=\mathbb P_{\geq p}\left[\Phi_1\mathbf{W}(\Phi_1\land \Phi_2)\right]$, then either $p=0$ and $\mathcal M_\pi\models \top=\alit{\Phi}$ or $p>0$ and $\mathcal M_\pi\models \Phi_1$ by definition of PCTL semantics, which implies by induction that $\mathcal M_\pi\models \alit{\Phi_1}=\alit{\Phi}$.
\end{proof}

We then show that, intuitively, if we can either go to states satisfying $\Phi$ through states satisfy its literal projection, or stay on on those states forever, then $\Phi$ is satisfied.
\begin{lemma}\label{lemma-tree-avoid}
    For any state \CPCTL$\,$formula $\Phi$, any MDP $\mathcal M$, and any policy $\pi$ of $\mathcal M$, we have $\mathcal M_\pi\models \Phi$ if and only if $\mathcal M_\pi\models \mathbb P_{\geq 1}[\alit{\Phi}\mathbf{W}\Phi]$.
\end{lemma}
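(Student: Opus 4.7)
The plan is to prove both directions of the equivalence by structural induction on the \CPCTL{} state formula $\Phi$. The forward direction is immediate: if $\mathcal M_\pi\models\Phi$, then $s_0\models\Phi$, and by Lemma~\ref{lemma-avoid} also $s_0\models\alit{\Phi}$, so every path $\xi$ starting at $s_0$ witnesses $\alit{\Phi}\UU\Phi$ at $j=0$ and hence $\alit{\Phi}\WW\Phi$, giving probability $1$.

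For the backward direction I would proceed by induction on $\Phi$. The literal case is immediate since $\alit{\Phi}=\Phi$ and any $\xi$ satisfying $\Phi\WW\Phi$ forces $\xi[0]\models\Phi$. For the conjunction $\Phi=\Phi_1\wedge\Phi_2$, observe $\alit{\Phi_1\wedge\Phi_2}=\alit{\Phi_1}\wedge\alit{\Phi_2}$; any path satisfying $(\alit{\Phi_1}\wedge\alit{\Phi_2})\WW(\Phi_1\wedge\Phi_2)$ also satisfies $\alit{\Phi_1}\WW\Phi_1$ (and symmetrically for $\Phi_2$), so the IH applies to each conjunct.

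The crux is $\Phi=\mathbb P_{\geq p}[\Phi_1\WW(\Phi_1\wedge\Phi_2)]$ with $p>0$, where $\alit{\Phi}=\alit{\Phi_1}$ (the case $p=0$ being trivial). My first step is to establish two structural facts: \textbf{(i)} $\llbracket\Phi\rrbracket\subseteq\llbracket\Phi_1\rrbracket$, which follows from the Bellman-style characterisation of $f(s)=\mathbb P(\phi\mid s)$ giving $f(s)=0$ whenever $s\not\models\Phi_1$, so $f(s)\geq p>0$ forces $s\models\Phi_1$; and \textbf{(ii)} the pathwise implication $\alit{\Phi_1}\WW\Phi\Rightarrow\alit{\Phi_1}\WW\Phi_1$, obtained from \textbf{(i)}: if $\xi$ realises $\alit{\Phi_1}\UU\Phi$ at time $\tau$, then $\xi[\tau]\models\Phi$ yields $\xi[\tau]\models\Phi_1$, producing $\alit{\Phi_1}\UU\Phi_1$ at $\tau$. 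Letting $\tau=\inf\{j:\xi[j]\in\llbracket\Phi\rrbracket\}$ and exploiting the Markov property of $\mathcal M_\pi$, for any (history-)state $\rho$ of $\mathcal M_\pi$ reachable from $s_0$ without having yet hit $\llbracket\Phi\rrbracket$, conditioning on reaching $\rho$ transfers the assumption into $\mathbb P(\alit{\Phi_1}\WW\Phi\mid\rho)=1$; by \textbf{(ii)}, $\mathbb P(\alit{\Phi_1}\WW\Phi_1\mid\rho)=1$; and by the IH applied to $\Phi_1$, $\rho\models\Phi_1$.

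To conclude, on $\{\tau=\infty\}$ the previous step yields $\GG\Phi_1$ almost surely, so $\phi$ holds almost surely via $\GG\Phi_1$. On $\{\tau<\infty\}$ the prefix satisfies $\Phi_1$ at every index up to and including $\tau$ (the index $\tau$ itself is covered by \textbf{(i)}), and the strong Markov property at $\tau$ delivers tail satisfaction of $\phi$ from $\xi[\tau]\in\llbracket\Phi\rrbracket$ with conditional probability at least $p$; gluing prefix and tail yields the full path in $\phi$. Integrating,
\[
\mathbb P(\phi\mid s_0)\;\geq\; p\,\mathbb P(\tau<\infty)+\mathbb P(\tau=\infty)\;=\;1-(1-p)\,\mathbb P(\tau<\infty)\;\geq\; p,
\]
and hence $\mathcal M_\pi\models\Phi$. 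The most delicate step will be the Markov-property transfer to reachable $\rho$ in the history-dependent setting, where states of $\mathcal M_\pi$ are histories and one must argue carefully that conditioning on reaching $\rho$ without having hit $\llbracket\Phi\rrbracket$ forces the tail to inherit the event $\alit{\Phi_1}\WW\Phi$; the other crucial ingredient is the pathwise fact \textbf{(ii)}, which is what bridges the ambient weak-until and the $\Phi_1$-centric weak-until demanded by the inductive hypothesis.
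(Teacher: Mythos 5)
Your proof is correct and follows essentially the same route as the paper's: structural induction with the forward direction via Lemma~\ref{lemma-avoid}, and, for the weak-until case, a decomposition of the path space by the first hitting time of a goal set, using the Markov property of $\mathcal M_\pi$ together with the inductive hypothesis on $\Phi_1$ to show that every history visited before that time satisfies $\Phi_1$. The only cosmetic difference is that you stop at the first visit to $\llbracket\Phi\rrbracket$ and invoke the threshold $p$ there, whereas the paper stops at the first visit to $\llbracket\Phi_1\wedge\Phi_2\rrbracket$; your write-up is in fact more explicit about the history-conditioning transfer that the paper compresses into ``by induction''.
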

\begin{proof}
    If $\mathcal M_\pi=\Phi$, then $\mathcal M_\pi\models \mathbb P_{\geq 1}[\alit{\Phi}\mathbf{W}\Phi]$ by the definition of PCTL semantics. We now suppose that $\mathcal M_\pi\models \mathbb P_{\geq 1}[\alit{\Phi}\mathbf{W}\Phi]$, and we show that $\mathcal M_\pi\models \Phi$ by induction on $\Phi$. The cases where $\Phi$ is a literal and where $\Phi=\Phi_1\land\Phi_2$ are immediate.

    We now do the case where $\Phi=\mathbb P_{\geq p}\left[\Phi_1\mathbf{W}(\Phi_1\land \Phi_2)\right]$, then if $p=0$, we have immediately $\mathcal M_\pi\models \Phi$. We thus now suppose that $p>0$. Recall that the states of the MDP induced by $\mathcal M$ are all finite paths $s_0\cdots s_n$ in $\mathcal M$ that can be obtained following $\pi$. For any finite path $s_0\cdots s_n$ in $\mathcal M_\pi$, such that $\mathcal M_\pi^{s_0\cdots s_n}\models\Phi$, we have $\mathcal M_\pi^{s_0\cdots s_n}\models \Phi_1$ by PCTL semantics. As a consequence, by induction, for any path $(s_n)_{n\in\mathbb N}$ of $\mathcal M_\pi$, if, for any $j$, $\mathcal M_\pi^{s_0\cdots s_i}\not\models \Phi_1\land\Phi_2$ for any $i<j$, then $\mathcal M_\pi^{s_0\cdots s_i}\models \Phi_1$ for any $i<j$.
    
    Now, let $U_1$ be the set of paths $(s_n)_{n\in\mathbb N}$ of $\mathcal M_\pi$ such that $\mathcal M_\pi^{s_0\cdots s_i}$ does not satisfy $\Phi_1\land\Phi_2$ for all $i\in\mathbb N$, and let $U_2$ be the set of paths of $\mathcal M_\pi$ that are not in $U_1$. Every path in $U_1$ satisfies $\mathbf{G}\Phi_1$ and thus $\Phi_1\mathbf{W}(\Phi_1\land\Phi_2)$. Furthermore, for any path $(s_n)_{n\in\mathbb N}$ in $U_2$, if $j$ is the minimum of the set of integers $i$ such that $\mathcal M_\pi^{s_0\cdots s_i}\models \Phi_1\land \Phi_2$, then since $\mathcal M_\pi^{s_0\cdots s_i}\models \Phi_1$ for any $i<j$ the measure of the paths with prefix $s_0\cdots s_j$ that satisfy $\Phi_1\mathbf{W}(\Phi_1\land\Phi_2)$ is at least the measure of the paths satisfying $\Phi_1\mathbf{W}(\Phi_1\land\Phi_2)$ in $\mathcal M_\pi^{s_0\cdots s_j}$, which is itself more than $p$. As a consequence, the measure of all paths satisfying $\Phi_1\mathbf{W}(\Phi_1\land\Phi_2)$ in $\mathcal M_\pi$ is more than the measure of all paths in $U_1$ plus $p$ times the measure of all paths in $U_2$, which is always more than $p$.
\end{proof}

We now use the above lemmas to characterize the states from which afor-sure PCTL formula is satisfiable.

\forsurecharacterization*
\begin{proof}
    If $\mathcal M_{\pi}\models \mathbb P_{\geq 1}[\phi]$, we have $\mathcal M_{\pi}\models \mathbb P_{\geq 1}[\alit{\Phi_1}\mathbf{W}(\Phi_1\land\Phi_2)]$ by Lemma \ref{lemma-avoid}. Conversely, suppose $\mathcal M_{\pi}\models \mathbb P_{\geq 1}[\alit{\Phi_1}\mathbf{W}(\Phi_1\land\Phi_2)]$, and let $(s_n)_{n\in\mathbb N}$ be a path in $\mathcal M_{\pi}$. Then, for any $j$ such that, for all $i<j$, $M_\pi^{s_1\cdots s_j}\not\models \Phi_1\land\Phi_2$, since $M_\pi^{s_1\cdots s_j}\models \mathbb P_{\geq 1}[\alit{\Phi_1}\mathbf{W}\Phi_1]$ by definition of PCTL semantics, we have by Lemma \ref{lemma-tree-avoid} that $M_\pi^{ s_1\cdots s_j}\models \Phi_1$. The lemma follows by definition of PCTL semantics.
\end{proof}

\section{The Algorithm \textsc{CPCTL-VI}}

\subsection{Preliminaries}

We first show Lemma \ref{projection}. To that end, we define from a valued policy $\aug\pi$ of the augmented MDP a projection $\aug\pi$ on the original MDP.
\begin{definition}[Projected policy]
    Let $\mathcal M$ be an MDP, $\Phi$ be a Safe PCTL policy, and $\aug\pi$ be a valued policy of $\mathcal M[\pi]$. We define the projection of $\aug\pi$ on $\mathcal M$ as the policy $\aug{\pi}$ of $\mathcal M$ such that, if we let $\Theta$, $\Delta$ and $\aug s_{0,\aug\pi} $ be as in Definition \ref{def:valued},
    \begin{itemize}
        \item $(\boldsymbol\mu_0,\boldsymbol\nu_0)$ are such that $s_{0,\aug\pi}=(\aug s_{0},\boldsymbol\mu_0,\boldsymbol\nu_0)$
        \item $\aug\pi(s_1\cdots s_n)=\Delta(s,\boldsymbol\mu_n,\boldsymbol\nu_n)$
        \item $(\boldsymbol\mu_{n+1},\boldsymbol\nu_{n+1})=(\boldsymbol\mu'_{s_{n+1}},\boldsymbol\nu'_{s_{n+1}})$, where $\Theta(s_n,\boldsymbol\mu_n,\boldsymbol\nu_n)=(\boldsymbol\mu'_{s'},\boldsymbol\nu'_{s'})_{s'\in S'}$ and $S'$ is the support of $\Delta(s,\boldsymbol\mu_n,\boldsymbol\nu_n)$.
    \end{itemize}
\end{definition}

Lemma \ref{projection} follows by construction of the projection.
\projection*

\begin{definition}\label{def:subformulasandvaluefrontier}
    Let $E \subseteq [0,1]^N$. We define the lower closure of $E$, denoted $E_\leq$, as 
    \[
    E_\leq = \left\{ p \in [0,1]^N,~\exists q \in E, ~p\leq q \right\}.
    \]
    With $\MDP$ a MDP, $\Phi$ a \CPCTL\ formula, and $V$ a family of subsets of $[0,1]^N$ indexed by $s\in \mathcal{S}$, we denote $V_{\leq}$ the family of sets defined for each $s\in \mathcal{S}$ as 
    \[
    V_{\leq}(s) = \left( V(s) \right)_{\leq}.
    \]
    Moreover, for $A_n\subseteq [0,1]^{N}$ and $B \subseteq [0,1]^{N}$, we write
    $A_n \rightarrow_n^* B$, or simply $A_n\rightarrow B$, if and only if, for every $b\in B$, for every $\varepsilon>0$, there exists $n_0$ such that for $n\geq n_0$, there exists $a\in A_n$ such that 
    \[
    \max(|a_i-b_i|) \leq \varepsilon.
    \]
    Then, we denote $V^*$ the \emph{Value Frontier}, as
    \[
    V^*(s)=\{(q_1,\dots,q_{pf(\Phi)}),~\exists \pi,\forall i,~~\MDP_{\pi}(s)\models \mathbb{P}_{\geq q_i} [\phi_i]\}.
    \] 
\end{definition}

\begin{definition}\textbf{(Finitely Decisive, non-evaporating policies).} We define $\mathtt{FD}(\Phi)$ by induction on both state and path formulas as:
\[
\mathtt{FD}:
\left\{
\begin{aligned}
    &b\in AP\mapsto b,\\
    &\neg b\in AP\mapsto \neg b,\\
    &\Phi_1 \wedge \Phi_2 \mapsto \mathtt{FD}(\Phi_1)\wedge \mathtt{FD}(\Phi_2),\\
    &\Phi_1 \mathbf W (\Phi_1 \wedge \Phi_2 ) \mapsto \mathtt{FD}(\Phi_1) \mathbf{U} \big[ \mathbb{P}_{=1}(\mathbf G \Phi_1) \vee (\mathtt{FD}(\Phi_1)  \wedge \mathtt{FD} (\Phi_2)) \big],\\
    &\mathbb{P}_{\geq p} (\phi) \mapsto \mathbb{P}_{\geq p} (\mathtt{FD}(\phi)).
\end{aligned}
\right.
\]

    Let $\MDP$ be a MDP and $\pi$ a policy. We say that $\pi$ is non-evaporating on a set of formulas $F$, if for all state formulas $\Phi \in F$, for any history $\rho$,
    \[
    \rho \models \Phi \Leftrightarrow \rho\models \mathtt{FD}(\Phi),
    \]
    and for any path formula $\phi\in F$, for all histories $\rho$ and almost every paths $\xi$ emerging from $\rho$,
    \[
    \rho,\xi \models \phi \Leftrightarrow \rho,\xi \models \mathtt{FD}(\phi).
    \]

    When considering a MDP $\MDP$ and a formula $\Phi$, we sometimes say that $\pi$ is non-evaporating without specifying $F$, which is implicitly chosen to be $SF(\Phi)\cup PF(\Phi)$.

\end{definition}

    


    

\begin{lemma}
    For any profile $\boldsymbol{x,y}\in V_n(s)$ generated by the algorithm, the corresponding projected policy $\pi$ is non-evaporating. 
\end{lemma}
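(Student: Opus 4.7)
The plan is to proceed by structural induction on the subformulas of $\Phi$. The base cases (literals $a$, $\neg a$) and the conjunctive case $\Phi_1 \wedge \Phi_2$ are immediate since $\mathtt{FD}$ acts as the identity on their structure, and the case $\Phi_j = \mathbb{P}_{\geq p}(\phi_j)$ reduces to the path-formula case. Hence the only substantive case is $\phi_j = \Phi_{j_1} \mathbf{W}(\Phi_{j_1} \wedge \Phi_{j_2})$, for which
\[
\mathtt{FD}(\phi_j) = \mathtt{FD}(\Phi_{j_1}) \UU \bigl[\, \mathbb{P}_{=1}(\GG \Phi_{j_1}) \vee \bigl(\mathtt{FD}(\Phi_{j_1}) \wedge \mathtt{FD}(\Phi_{j_2})\bigr)\, \bigr].
\]

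For the easy direction, any $\pi$-path $\xi$ satisfying $\mathtt{FD}(\phi_j)$ either reaches $\Phi_{j_1} \wedge \Phi_{j_2}$ in finite time (which immediately witnesses $\phi_j$) or reaches a state $s'$ with $\MDP_\pi^{s'} \models \mathbb{P}_{=1}(\GG \Phi_{j_1})$, whence $\mathbf{G}\Phi_{j_1}$ holds $\mathbb{P}_{\pi}$-almost surely from $s'$; by the induction hypothesis applied to $\Phi_{j_1}$, this is equivalent to satisfaction of $\phi_j$. For the harder direction, consider a $\pi$-path $\xi$ satisfying $\phi_j$. If $\xi \models \top \UU (\Phi_{j_1} \wedge \Phi_{j_2})$ the implication holds trivially, so we may assume $\xi \models \GG \Phi_{j_1}$ and $\xi \not\models \top \UU (\Phi_{j_1} \wedge \Phi_{j_2})$. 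We must show that, conditional on this event, almost surely $\xi$ visits in finite time a state $s'$ with $\MDP_\pi^{s'} \models \mathbb{P}_{=1}(\GG \Phi_{j_1})$.

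Here I would invoke the memoryful structure of $\pi$ from the Soundness Theorem, whose memory states are profiles $(\boldsymbol\mu', \boldsymbol\nu')$ produced by the Bellman iteration, and observe that only finitely many such profiles arise up to step $n$. The initialization $\mathcal I_{\MDP}$ assigns $\nu_j = 1$ exactly on states $s$ with $\MDP_{\pi}^{s} \models \mathbb{P}_{\geq 1}[\GG(\alit{\Phi_j})]$, which by Lemma~\ref{lemma-alit-W} coincide with states satisfying $\mathbb{P}_{=1}[\phi_j]$ under $\pi$. Along a $\pi$-trajectory, the memory evolution falls into exactly one of the three cases of the Soundness Theorem: case 1 has no remaining nontrivial counter, case 2 fixes $\nu'_j \in \{0,1\}$ and the memoryless sub-policy $\pi_{\boldsymbol\nu'}$ enforces $\mathbb{P}_{=1}(\GG \alit{\Phi_{j_1}})$ (hence $\mathbb{P}_{=1}(\GG \Phi_{j_1})$ by the induction hypothesis on $\Phi_{j_1}$), and case 3 uses a genuine Bellman step with $0 < \nu'_j < 1$.

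The main obstacle, and the crux of the proof, is ruling out persistence in case 3 forever. The argument will be as follows: the Bellman inequality $\nu_j \leq \sum_{s',a} \delta(a)\, P(s'|s,a)\, \nu_j^{s'}$ applied with the $\dclosure{\cdot}$ selection forces at least one successor-profile $(\boldsymbol\mu^{s'}, \boldsymbol\nu^{s'})$ with counter value $\nu_j^{s'}$ either equal to $1$ (case 2) or strictly greater than $\nu_j$ (strict monotonicity along a branch), reached with probability bounded below uniformly over the finite set of memory-state transitions. A standard Borel--Cantelli/finite Markov chain argument on the memory-augmented product chain then shows that the set of trajectories staying in case 3 indefinitely has probability zero. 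Conditioning on $\xi \models \GG \Phi_{j_1}$ and $\xi \not\models \top \UU (\Phi_{j_1} \wedge \Phi_{j_2})$ rules out case 1 and the goal-reaching branch of case 3, so the remaining positive-probability event is entry into case 2, i.e. into a state where $\mathbb{P}_{=1}(\GG \Phi_{j_1})$ holds, which concludes the inductive step.
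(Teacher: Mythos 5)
Your overall direction is reasonable, but there is a concrete gap at the step you yourself identify as the crux. To rule out a trajectory remaining forever in the ``intermediate'' Bellman regime, you claim that the inequality $\nu_j \leq \sum_{s',a}\delta(a)P(s'|s,a)\nu_j^{s'}$ together with the $\dclosure{\cdot}$ selection forces some successor profile with $\nu_j^{s'}$ \emph{strictly} greater than $\nu_j$. This is false: a convex combination bounded below by $\nu_j$ only forces some $\nu_j^{s'}\geq \nu_j$, and nothing in the maximality of the frontier excludes equality throughout, so no strict monotonicity along a branch is available. The Borel--Cantelli argument you then want to run on the product chain therefore has no engine. The fact that actually closes this case is different and much simpler: a profile in $V_n(s)$ is produced by $n$ applications of the Bellman operator to the initialization, so each ``case 3'' memory transition moves to a profile drawn from $V_{n-1}$; the memory evolution of the projected policy is thus a DAG of depth at most $n$, and the intermediate regime is left \emph{deterministically} within $n$ steps, after which the policy is memoryless. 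This is exactly what the paper means by ``the memory size of a projected policy corresponding to a profile in $V_n(s)$ is at most $n$.''

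The paper then does not redo the structural induction inline as you do: it observes that a finite-memory policy induces a finite Markov chain and invokes a separate lemma stating that on any \emph{finite} Markov chain every \CPCTL\ formula is equivalent to its $\mathtt{FD}$-transform (the proof of that lemma is where the genuinely probabilistic step lives: since the chain is finite, $\sup\{\mathbb{P}(\mathbf{G}\Phi_1\mid s) : \mathbb{P}(\mathbf{G}\Phi_1\mid s)\neq 1\}=\alpha<1$, and the set of paths satisfying $\mathbf{G}\Phi_1$ while visiting only states with value $\leq\alpha$ has measure zero). Your proposal would also need an analogue of that argument for the tails governed by the terminal memoryless sub-policies $\pi_{\boldsymbol\nu'}$, and for all nested subformulas simultaneously along those tails; you handle the top-level $\mathbf{W}$ via the $\mathbb{P}_{=1}(\mathbf{G}\alit{\Phi_{j_1}})$ guarantee of case 2, but the non-evaporation of the \emph{inner} path subformulas on that tail is not addressed. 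Factoring the argument through ``finite memory $\Rightarrow$ finite induced chain $\Rightarrow$ non-evaporation for all subformulas at once'' avoids both problems.
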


\begin{proof}
    This is due to the fact that $\pi$ has finite memory, which implies that $\pi$ is non-evaporating; see Lemma \ref{lem:finitemem}. Indeed, quick induction shows that the memory size of a projected policy corresponding to a profile $\boldsymbol{x,y}\in V_n(s)$ is at most $n$.
\end{proof}

\subsection{Proof of Soundness} 

\thsoundness*
\begin{proof}
    We consider $\MDP$ and the corresponding $\Phi$-augmented MDP $\AugMDP=\mathcal M[\Phi]$. The state $E_n$ provides us with counters. In order to embed our MDP in $\AugMDP$, we define a valuation function $\xi:\mathcal{S}\times \mathcal{SF}(\Phi) \times [0,1]^{pf(\Phi)}\to \{0,1\}$ by induction as follows. For any $s\in \mathcal{S}$ and $(\eta_1,\dots,\eta_{pf(\Phi)})\in E_n(s)$, we set (and ommit the third parameter for readability)
    \[
    \xi( \mathbb{P}_{\geq p_j }\phi_j,s) = \left\{
    \begin{aligned}
        &0,~\text{ for }\eta_j<p_j,\\
        &1,~\text{ for }\eta_j\geq p_j,
    \end{aligned} 
    \right.
    \]
    \[
    \xi( \Phi_1 \wedge \Phi_2,s) = \xi(\Phi_1,s) \xi(\Phi_2,s), 
    \]
    \[
    \xi(a,s) = \left\{
    \begin{aligned}
        &0,~\text{ for }a\in L(s),\\
        &1,~\text{ for }a\notin L(s),
    \end{aligned} 
    \right.
    \]
    \[
    \xi(\neg a,s) = 1-\mu(a,s).
    \]

    We denote $\boldsymbol\mu=\xi(s,\Phi,(\eta_1,\dots,\eta_{pf(\Phi)}))$ and $\mu_i=\xi(s,\Phi_i,(\eta_1,\dots,\eta_{pf(\Phi)}))$. Since $sf(\Phi) \geq pf(\Phi)$, we can reorder so that for $j\leq pf(\Phi)$, $\Phi_j = \mathbb{P}_{\geq p_j} (\phi_j)$. Finally, for $\theta\in [0,1]^{pf(\Phi)}$ and $s\in \mathcal{S}$, we denote this whole operation $(\mu_1,\dots,\mu_{sf(\Phi)}) = \Xi(s,\Phi,\boldsymbol\theta)$, or simply $\Xi(s,\boldsymbol\theta)$. The goal is now to show that there exists a policy $\aug \pi$ such that $\aug \pi$ satisfy both the path-compatibility and the state-compatibility.

    \textbf{Definition of the policy}
    We consider $\mathcal{S}_n$ the set of augmented states $\aug s = (s,\boldsymbol\mu,\boldsymbol\nu)$ such that $\theta \in E_n(s)$ and $\mu = \Xi(s,\Phi,\theta)$. By definition of $E_n(s)$, there exists an action distribution $\delta$ and for every successor $s_i$, there exist counters $\boldsymbol\nu^i\in E_{n-1}(s)$ such that the conditions are satisfied. 
    We define $\aug \pi$ as a valued policy. For $\aug s$, we set for the choice function and for the distribution
    \[
    \Theta(\aug s) = (\boldsymbol\mu^i,\boldsymbol\nu^i)_i,\quad \Delta(\aug s) = \delta,\quad \boldsymbol\mu^i=\Xi(s_i,\boldsymbol\theta^i).
    \]

    Note that we only defined $\aug \pi$ on $\mathcal{S}_n$. However, since for any $j$, $E_j(s)\subseteq E_{j+1}(s)$, we obtain in particular that $\boldsymbol\nu^i \in E_{n}(s)$. Hence, $\aug \pi$ is well defined from any starting state $s\in \mathcal{S}_n$. 
    
    \textbf{$\aug \pi$ is path-compatible and state-compatible}

    Let $\aug s = (s,\boldsymbol\mu,\boldsymbol \nu) \in \mathcal{S}_n$, and let $\Phi_j$ be a path formula.
    \begin{itemize}
        \item[(i)] If $\Phi_j= \mathbb{P}_{\geq p_j} (\phi_j)$, then $\mu_j=\Xi(s,\Phi_j,\boldsymbol\nu)$, so by definition of $\Xi$, $\mu_j=1$ implies $\nu_j\geq p_j$.
        \item[(ii)] If $\Phi_j= \Phi_{j_1}\wedge \Phi_{j_2}$, then $\mu_j=\Xi(s,\Phi_j,\boldsymbol \nu)$, so by definition of $\Xi$, $\mu_j=\Xi(s,\Phi_{j_1},\boldsymbol\nu)\Xi(s,\Phi_{j_2},\boldsymbol\nu)=\mu_{j_1}\mu_{j_2}$. 
        \item[(iii)] The cases $\Phi_j=\Phi_{j_1}\wedge \Phi_{j_2}$ and $\Phi_j=b\in AP$ are also obtained using the definition of $\Xi$.
    \end{itemize}
    
    Hence, $\aug \pi$ is state-compatible. 
    
    We now go on with the path-compatibility. We consider $\phi_j=\Phi_{j_1}\mathbf W(\Phi_{j_1}\wedge \Phi_{j_2})\in \mathcal{PF}(\Phi)$, where  
    \[
    \Phi_{j_1} = \bigwedge_{i=1}^{n_1} \Phi_{l_i},\quad \Phi_{j_2} = \bigwedge_{i=1}^{n_2} \Phi_{k_i}.
    \]
    
    Using the definition of $\Xi$, it is easy to verify that
    \begin{equation}\label{comp1}
    \mu_{j_1} = \prod_{i=1}^{n_1} \mu_{l_i},\quad \mu_{j_2} = \prod_{i=1}^{n_2} \mu_{k_i}.
    \end{equation}
    Hence, $\mu_{j_1} = 0$ implies the existence $i_0$ such that $\mu_{l_{i_0}}=0$. By definition of $\mu_{l_{i_0}}$ and $\Xi$, this implies $\nu_{l_{i_0}} < p_{l_{i_0}}$. The Bellman operator then imposes that $\nu_{j_1}=0.$

    In the case where $\mu_{j_1}=1$, we must verify 
    \[
    \nu_{j} \leq \max \left( \sum_{i=1}^m P(s_i|s,a) \delta(a) \nu_j^i, \mu_{j_2} \right)
    \]
    We first assume $\mu_{j_2}=0$. In that case, with $\aug s_i = (s_i,\theta_i,\mu_i)$ the successors of $\aug s$, by construction, we have 
    \[
    \mu_j = \sum_{s'}\sum_a \delta(a) P(s'|s,a) \mu_j^i,
    \]
    so the state compatibility is verified.
    Finally, if $\mu_{j_2}=1$ and $\mu_{j_1}=1$, the condition is automatically satisfied, as it reads $\nu_j\leq 1$. 

    We show $\nu_j=1$, to illustrate that the policy $\aug \pi$ in fact satisfies the limit case of the path compatibility.
    
    Using \eqref{comp1}, we get for all $i$ that $\mu_{l_i}=\mu_{k_i}=1$ by \eqref{comp1}. Since $\mu_{l_i}=\xi(s,\Phi_{l_i},\boldsymbol\nu)$ and $\mu_{k_i}=\xi(s,\Phi_{k_i},\boldsymbol\nu)$, we have by definition of $\xi$ 
    \[
    \forall i\leq n_1,~\nu_{l_i}\geq p_{l_i},\quad \forall i \leq n_2, ~\nu_{k_i}\geq p_{k_i}.
    \]
    Hence, by the definition of the Bellman operator, $\nu_j=1$.
    Since the policy satisfies both path compatibility and state compatibility, we obtain that the policy is coherent starting for any $\aug s \in S_n$. In particular, for any $\boldsymbol\nu = (\nu_1,\dots,\nu_{pf(\Phi)}) \in E_n(s_0)$, 
    \[
    \aug s_{\Phi} = (s_0,\Xi(s,\Phi,\theta),\boldsymbol\nu) \in S_n,
    \]
    so when $\Phi$ has the form $\Phi=\mathbb{P}_{\geq p} (\phi)$,
    \[
    (\aug s_{\Phi},\aug \pi) \models \mathbb{P}_{\geq \eta} (\phi).
    \]
    and in all cases,
    \[
    \mu_{j_0} = 1 \Rightarrow (\aug s_{\Phi},\aug \pi) \models \Phi.
    \]
\end{proof}

\subsection{Proof of Completeness}

We recall \cref{thm:VI-optimality}:
\optimality*


\newenvironment{indhyp}
  {\par\smallskip\begin{leftbar}}  
  {\end{leftbar}\par\smallskip}    

\begin{lemma}



    
    
    We consider $\mathcal{M}$ a MDP and $\Phi$ a \CPCTL\ formula, $\boldsymbol p$ the probability thresholds of $\Phi$. We assume the existence of $\boldsymbol q > \boldsymbol p$, such that the formula $\Psi=\mathtt{T}(\Phi,\boldsymbol q)$ is satisfied by a non-evaporating policy $\pi_s$. Denote $V_n$ the value frontier obtained after running \textsc{CPCTL-VI} for $n$ steps on the formula $\Phi$ and MDP $\MDP$.
    With $\boldsymbol{r,t} = \mathtt{P}(\pi_s|\mathcal{M},\Psi)$, for any $\varepsilon >0$, there exists $n_0\in \mathbb{N}$ such that for $n\geq n_0$, there exists $\boldsymbol{x,y}\in V_n(s)$ such that 
    \[
    \left\{
    \begin{aligned}
        &\boldsymbol{x} \geq \boldsymbol{ r},\\
        &\boldsymbol{y} \geq \boldsymbol{t} - \varepsilon \boldsymbol{1}.
    \end{aligned}
    \right.
    \]

\end{lemma}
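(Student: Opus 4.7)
The plan is to prove the lemma by strong induction on the nesting depth of $\Phi$, leveraging the non-evaporation assumption on $\pi_s$ together with the construction of the initial value vector $\mathcal{I}_{\mathcal{M}}$. For the base case, where $\Phi$ is a boolean combination of literals, the initial set $\mathcal{I}_{\mathcal{M}}(s)$ already contains the exact profile $(\boldsymbol r,\boldsymbol t)$ produced by the canonical valuation map $\Xi$ (with $\boldsymbol t$ empty), so $n_0=0$ suffices. Conjunctions $\Phi=\Phi_{j_1}\wedge\Phi_{j_2}$ follow from applying the inductive hypothesis to each conjunct and using the componentwise product rule for valuations baked into $\mathcal{B}_\Phi$. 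The critical case is therefore $\Phi=\mathbb{P}_{\geq p}[\phi]$ with $\phi=\Phi_1\WW(\Phi_1\wedge \Phi_2)$.

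For this case, fix $\varepsilon>0$ smaller than $\min_i (q_i-p_i)$. Since $\pi_s$ is non-evaporating for $\Psi=\mathtt{T}(\Phi,\boldsymbol q)$, almost every $\pi_s$-path that witnesses the path formula of $\Psi_j$ does so for one of two decided reasons: either it reaches a state satisfying $\mathtt{T}(\Phi_1\wedge\Phi_2,\boldsymbol q)$ in finite time while remaining in $\alit{\Phi_1}$-states, or it enters a state from which $\mathbb{P}_{=1}(\GG\alit{\Phi_1})$ holds. Using the finiteness of $\mathcal{M}$, choose $N\in\mathbb{N}$ such that, uniformly over every reachable state $s'$, the $\pi_s$-probability of being still undecided after $N$ steps is at most $\varepsilon/2$. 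Next, invoke the inductive hypothesis on the strictly smaller subformulas appearing under $\mathtt{T}(\Phi_1,\boldsymbol q)$ and $\mathtt{T}(\Phi_2,\boldsymbol q)$ to obtain an index $n_1$ such that, for every reachable $s'$, the frontier $V_{n_1}(s')$ contains a profile whose valuations on $\Phi_1$ and $\Phi_2$ agree with those realised by $\pi_s$ at $s'$.

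Starting from $V_{n_1}$, I would then run $\mathcal{B}_\Phi$ for $N$ further iterations. At every iteration, for each reachable $s$ I would witness the Bellman inequality with the action distribution $\pi_s(\cdot\mid s)$ and, for each successor $s'$, with the profile previously placed in $V$ at $s'$; this is a valid choice by the inductive choice of $n_1$ and the fact that $\mathcal{I}_{\mathcal{M}}$ has already set $\nu_j=1$ at all states where $\mathbb{P}_{=1}(\GG\alit{\Phi_j})$ holds. Unwinding the weak-until clause $N$ times shows that the $\nu_j$-coordinate at $s$ after step $n_1+N$ is at least the $\pi_s$-probability of deciding within $N$ steps, hence at least $t_j-\varepsilon$. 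Finally, the valuation part $\boldsymbol x$ is recovered from $\boldsymbol y$ via $\Xi$ by thresholding at $\boldsymbol p$: whenever $r_j=1$, one has $t_j\geq q_j$, so $y_j\geq q_j-\varepsilon>p_j$, hence $x_j=1$, and therefore $\boldsymbol x\geq \boldsymbol r$.

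The main obstacle will be juggling three coupled approximations consistently: truncation of the weak-until horizon to $N$ steps, the inductive approximation on nested probabilistic subformulas, and the $\{0,1\}$-rounding that converts counter lower bounds into valuations. The positive slack $\boldsymbol q-\boldsymbol p$ granted by the generalized Slater assumption is precisely what absorbs the small errors during the final rounding; without strict inequality the valuation recovery could collapse. A further subtlety, requiring care in the write-up, is that the inductive witnesses $(\boldsymbol\mu^{s'},\boldsymbol\nu^{s'})$ chosen at the successors during the Bellman step must be pulled from the inductive frontier coherently with the memory update of $\pi_s$, mirroring (in reverse) the construction used in the Projection Lemma~\ref{projection}; making this compatibility explicit is where the bookkeeping is heaviest.
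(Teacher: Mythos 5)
Your proposal is correct and follows essentially the same route as the paper's proof: an outer induction on formula depth, with the weak-until case handled by a finite unrolling of the bounded until toward the "decided" states (either $\Phi_1\wedge\Phi_2$ reached, or a state where $\mathbb{P}_{=1}(\mathbf{G}\,\alit{\Phi_1})$ holds, which is exactly what the initialization $\mathcal I_{\mathcal M}$ seeds), using non-evaporation to identify the weak-until probability with that of the decidable until, and the Slater slack $\boldsymbol q-\boldsymbol p$ to absorb the $\varepsilon$-errors when thresholding counters back into valuations. The paper organizes this as three nested inductions ($\mathcal I_1$ on total depth, $\mathcal I_2$ on the truncation length $k$ of $\Psi_1\,\mathbf{U}^{\leq k}(\mathbb{P}_{=1}[\mathbf{G}\Psi_1]\vee\Psi_2)$, and $\mathcal I_3$ for the Bellman witness choices), which matches your $n_1$, your $N$-step unrolling, and your coherence-of-successor-profiles bookkeeping, respectively.
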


\begin{proof}

We prove the lemma using the following strategy, using nested inductions. 
\begin{itemize}
    \item Induction $\mathcal{I}_1$ shows the result by induction on the total depth of the formula, assuming that the algorithm computes paths of finite length.
    \item Induction $\mathcal{I}_2$ shows by induction (on length of paths) that the algorithm computes finite paths of growing length.
    \item Induction $\mathcal{I}_3$ is mostly a technicality and shows, by iterating over the depth of formulas, that some policies can be chosen by the Bellman operator.
\end{itemize}

\begin{indhyp}
\noindent\textbf{Induction on the total depth of $\Phi$, $\mathcal{I}_1(k)$:}

\emph{There exists $n_0$ such that for $n\geq n_0$:}

\emph{
Let $s \in \mathcal{S}$ and $\mathcal{Q}(s)$ be the value frontier for the formula $\Psi=\mathtt{T}(\Phi,\boldsymbol{q})$ achieved by non-evaporating policies. Let $\boldsymbol{t}\in \mathcal{Q}(s)$ and $\pi$ non-evaporating such that there exists $\boldsymbol{r} $ satisfying $\boldsymbol{r,t} = \mathtt{P}(\pi|\MDP^\pi,\Psi)$.
There exists $\mathbf{x},\mathbf{y}\in V_{n}(s)$ satisfying
\[
\forall \phi \in \mathcal{PF}(\Phi),~\mathcal{D}_t(\phi) \leq k \Rightarrow~y_\phi\geq t_\phi-\varepsilon,
\]
which we denote $clip(\mathbf{y},\leq k) \geq clip(\mathbf{t},\leq k)$, and 
\[
\forall \Psi \in \mathcal{SF}(\Phi),~\mathcal{D}_t(\Psi) \leq k \Rightarrow~ x_\Psi=r_\Psi,
\]
or in other words, $clip(\mathbf{x},\leq k) = clip(\mathbf{r},\leq k)$. 
}
\end{indhyp}


    \begin{itemize}
        \item \textbf{Induction $\mathcal{I}_1$, case $k=0$:}
             We consider $\Phi$ such that the total depth of $\Phi$ is zero, i.e. $\mathcal{D}_t(\Phi)=0$. $\Phi\in$\CPCTL\ does not involve any temporal operator, i.e. 
            \[
            \Phi = \left( \bigwedge_{k=1}^{l_1} b_{j_k}\right) \wedge \left( \bigwedge_{k=l_1+1}^{l_2} \neg b_{j_k}\right). 
            \]
            Let $\mathcal{S}_\Phi=\llbracket \Phi \rrbracket$ be such that $s \in \mathcal{S}_\Phi$ if and only if $\mathcal{M}^s \models \Phi$. After one step of the algorithm, for all $s \in \mathcal{S}_{\Phi}$, $(\mu,\nu)\in V_1(s)$, where $\mu_j = 1$ if and only if there exists $k\leq l_2$ such that $j=j_k$, and moreover the valuation of the formula $\Phi$ is set to one, i.e. $\mu_{j_\Phi}=1$. Indeed, the vector $(\mu,\nu)$ satisfies the Bellman condition:
            \begin{enumerate}
                \item For all $\Psi=b_{j_k}$, $b_{j_k}\in L(s) $.
                \item For all $\Psi=\neg b_{j_k}$, $b_{j_k}\notin L(s)$.
                \item For $\Phi = \bigwedge_k \Psi_k$, then $$\mu_{j(\Phi)}= \prod_{k=1}^{l_1} \mu_{j(b_{j_k})} \cdot \prod_{k=l_1+1}^{l_2} \mu_{j(\neg b_{j_k})}=1$$.
            \end{enumerate}

        \item \textbf{Induction $\mathcal{I}_1$, $\mathcal{I}_1(k)\Rightarrow \mathcal{I}_1(k+1)$:} 

    \noindent \textbf{First case:}

            We assume that $\Phi=\Phi_0$ is of the form
                \[
                \Phi_0 = \left( \bigwedge_{j=1}^{k_1} \Phi_{j}\right) \wedge \left( \bigwedge_{j=k_1+1}^{k_2} \Phi_{j}\right),
                \]
                where
                \[
                \forall 1\leq j\leq k_1,~\Phi_{j}=\mathbb{P}_{\geq p}(\phi_j), 
                \]
                and 
                \[
                \forall k_1<j\leq k_2,~\Phi_j=b_j \text{ or }\Phi_j=\neg b_j \text{ for }b_j\in AP.
                \]

                Using a similar argument as in \textbf{base case}, it is easy to check that for every profile $\mathbf{v},\mathbf{w} \in V_n(s)$ , we also have $\mathbf{x},\mathbf y \in V_n(s)$ with 

                \[
                \left\{
                \begin{aligned}
                    & x_j = \max(v_j,1) \text{ if } b_j\in L(s) \text{ and } \Phi_j=b_j \text{ for } k_1<j\leq k_2,\\
                    & x_j = \max(v_j,1) \text{ if } b_j\notin L(s) \text{ and } \Phi_j=\neg b_j \text{ for } k_1<j\leq k_2,\\
                    &x_j = v_j \text{ otherwise,}\\
                    &\mathbf{y}=\mathbf{w}.
                \end{aligned}
                \right.
                \]
                
                If $s \not\in S_\Psi$, then $r_{\Psi}=0$ and there is nothing more to prove. Otherwise, let $s \in \mathcal{S}_\Psi$.
                By definition, 
                \[
                \MDP_\pi^s \models \mathtt{T}(\Phi,\boldsymbol{q}).
                \]
                And therefore, for all $j\leq k_1$,
                \[
                \MDP_\pi^s \models \Psi_j= \mathtt{T}(\Phi_j,\boldsymbol{q}).
                \]
                $\Phi_j$ then satisfies Slater's generalized assumption, and by assumption, since for all $j$, $\mathcal{D}_t(\Phi_j) \leq k$, there exists $\boldsymbol{x,y} \in V_n(s)$ for $n$ large enough, satisfying
                \[
                clip(\boldsymbol{x,y},\leq k) \geq clip(\boldsymbol{r,t},\leq k).
                \]
                Additionally, one can choose 
                \[
                \forall k_1<  j\leq k_2,~x_{j}=1.
                \]
                The tuple $\boldsymbol{v,w}$ defined as 
                \[
                \forall j\neq 0, v_j=x_j,~v_{0} =1,~\forall j,~w_j=y_j,
                \]
                satisfies the Bellman's condition. Indeed, all the conditions imposed by the subformulas $\phi_j$ and $\Phi_j$ for $j\neq 0$ are already satisfied, and the condition imposed by $\Phi_{0}$ is of the form
                \[
                1=v_{0} = \prod_{j=1}^{k_1} v_{j} \cdot \prod_{j=k_1+1}^{k_2} v_{j} = 1. 
                \]

  \noindent \textbf{Case two, $\Phi=\mathbb{P}_{\geq p}(\phi),~\phi=\Phi_1 \mathbf{W}(\Phi_1 \wedge \Phi_2)$:}
        
Let $\phi=\Phi_1 \mathbf{W}(\Phi_1 \wedge \Phi_2)$. We denote $\psi=\Psi_1 \mathbf{W} (\Psi_1 \wedge \Psi_2)$ the corresponding Slater's modified formulas. By induction, for $n$ large enough, there exists $\boldsymbol{x,y} \in V_n(s)$ such that 
\[
clip(\boldsymbol{x,y},\leq k) \geq clip(\boldsymbol{r,t},\leq k).
\]

We distinguish three cases. 

\noindent \textbf{(i) $\mathcal{M}_\pi^s \not\models \Psi_1$:} 
This case is trivial, as the new condition is of the form $x_1 \geq 0.$

\noindent \textbf{(ii) $\mathcal{M}_\pi^s \models \Psi_1 \wedge \Psi_2$}: then 
    $t_1 \geq q_1$ and $t_2 \geq q_2$. There exists $\boldsymbol{x,y}$ satisfying $clip(\boldsymbol{x,y},\leq k) \geq clip(\boldsymbol{r,t},\leq k)$, which implies in particular that $x_1 \geq q_1 \geq p_1$ and $x_2 \geq q_2\geq p_2$. In particular, the new profile with $x_0=x_\Phi=1$ and $y_0=1$ satisfies the Bellman's condition.

\noindent \textbf{(iii) $\mathcal{M}_\pi^s \models \Psi_1$ and $\mathcal{M}_\pi^s \not\models \Psi_2$}. This case is done separately, by induction.
\end{itemize}

\end{proof}
\begin{proof}\textbf{Of (iii).}

\begin{indhyp}
\noindent \textbf{Induction assumption $\mathcal{I}_2(k)$:}
    We consider 
    \[
    \Phi = \mathbb{P}_{\geq p_{j_0}} (\phi_{j_0}),~\phi_{j_0}= \Phi_1\mathbf{W}(\Phi_1\wedge \Phi_2),~\Phi_{1} = \bigwedge_{l=1}^{n_1} \Phi_{1,j_l},~\Phi_{2} = \bigwedge_{l=1}^{n_2} \Phi_{2,k_l},
    \]
    
    
For any $\Phi_l$ subformula of $\Phi$, we denote $\Psi_l$ the corresponding subformula of $\Psi=\mathtt{T}(\Phi,\boldsymbol q)$, and we define 
\[
\varphi_k = \Psi_1 \mathbf{U}^{\leq k} \left( \mathbb{P}_{=1}[\mathbf{G}\Psi_1] \vee \Psi_2 \right).
\]
Let $s \in \mathcal{S}$ and $\boldsymbol{t} \in \mathcal{Q}(s)$. Let $\pi$ be a policy such that there exists $\boldsymbol{ r}$ satisfying $\boldsymbol{r},\boldsymbol{t}= \mathtt{P}(\pi|\mathcal{M}^s,\Psi)$. There exists $n_0$ such that for $n\geq n_0$, there exists $\boldsymbol{y}$ satisfying

\[
\left\{
\begin{aligned}
    &\forall j\neq j_0,~y_j \geq t_j-\varepsilon,\\
    &y_{j_0} \geq \mathbb{P}(\varphi_k|\pi,s)-\varepsilon,
\end{aligned}
\right.
\]
and $(\Xi(\Phi,s)(\boldsymbol{y}),\boldsymbol{y})\in V_n(s)$.

\end{indhyp}

The base case directly follows the definition of the Bellman operator as well as by Lemma 1 and Lemma 6.

We now assume $\mathcal{I}_2(k)$ to be true and will show  $\mathcal{I}_2(k+1)$.

Let $\boldsymbol{r,t}$ and $\pi$ be defined as in the induction assumption $\mathcal{I}_2$. The policy $\pi$ satisfies
\[
\MDP_{\pi} \models \mathbb{P}_{\geq t_j}[\psi_j],~\forall j.
\]
where $\psi_j$ is the path formula associated to the state formula $\Psi_j$. 
Denote $\delta$ the distribution chosen by $\pi$ over the actions starting at $s$, and with $s^i$ the successors of $s$ that are reachable in one step following $\pi$ from $s$, we denote $\boldsymbol{r}^i,\boldsymbol{t}^i=\mathtt{P}(\pi|\mathcal{M}^{ss^i},\mathtt{T}(\Phi,\boldsymbol{q}))$, where we abuse notation by writing $\mathtt{P}(\pi|\mathcal{M}^{ss^i},\mathtt{T}(\Phi,\boldsymbol{q}))$ for $\mathtt{P}(\pi'|\mathcal{M}^{s^i},\mathtt{T}(\Phi,\boldsymbol{q}))$ with $\pi'(\rho)=\pi(s\rho)$.  Then, for any 
\[
\Psi_{j}= \mathbb{P}_{\geq q_j} \left[ \psi_j \right],~ \psi_j= \left( \bigwedge_{i=1}^{m_{j_1}} \Psi_{j_i} \right) \mathbf{W} \left[\left( \bigwedge_{i=1}^{m_{j_1}} \Psi_{j_i} \right)\land \left( \bigwedge_{i=1}^{m_{j_2}} \Psi_{k_i} \right)\right],
\]
we have 
\[
\mathbb{P}(\psi_j|s,\pi) = \left\{
\begin{aligned}
    &0,~\text{ if } \exists i_0,~\mathbb{P}(\psi_{j_{i_0}}|\pi,s) < q_{j_{i_0}},\\
    &1,~\text{ if } \forall i\leq m_{j_1}, \mathbb{P}(\psi_{j_i}|\pi,s) \geq q_{j_i},\\
    &\text{ and }~\forall i\leq m_{j_2}, \mathbb{P}(\psi_{k_i}|\pi,s) \geq q_{k_i},\\
    &\sum_{a}\sum_{s^i} P(s^i|s,a)\delta(a) \mathbb{P}(\psi_j|\pi',s^i)~\text{ otherwise}.
\end{aligned}
\right.
\]
With our notations, 
\[
t_j = \left\{
\begin{aligned}
    &0,~\text{ if } \exists i_0,~t_{j_{i_0}} < q_{j_{i_0}},\\
    &1,~\text{ if } \forall i\leq m_{j_1}, t_{j_i} \geq q_{j_i},\\
    &\text{ and }~\forall i\leq m_{j_2}, t_{k_i} \geq q_{k_i},\\
    &\sum_{a}\sum_{s^i} P(s^i|s,a)\delta(a) t_j^i \text{ otherwise}.
\end{aligned}
\right.
\]
By definition, the points $\boldsymbol t^i$ satisfy $\boldsymbol t^i \in \mathcal{Q}(s^i)$, so using the Induction Hypothesis 2 $\mathcal{I}_2$, there exists $\boldsymbol x^i,\boldsymbol y^i \in V_n(s^i)$, such that 
\[
\left\{
\begin{aligned}
    &\boldsymbol y^i \geq_{\neq j_0} \boldsymbol t^i - \varepsilon \boldsymbol{1},\\
    & y_{j_0}^i \geq \mathbb{P}(\varphi_k|\pi',s^i) - \varepsilon. 
\end{aligned}
\right.
\]
We will therefore show that there exists a point $\boldsymbol y$ that satisfies our requirements. We distinguish the different cases. We start by showing that we can select appropriate values for the other values of $j$, meaning the subformulas $j\neq j_0$. We do it by induction.

\begin{indhyp}
\indent\textbf{Induction Hypothesis 3:} If $t_j \geq q_j$, we can choose $y_j\geq t_j-\varepsilon > p_j$.
\end{indhyp}
We start with the atomic propositions and obtain the result in this case. We now consider a subformula $\phi_j$, and we have shown that for any subformula $\phi_{j_i}$ of $\phi_j$, if we have $t_{j_i} \geq q_{j_i}$, we can choose $y_{j_i}\geq t_{j_i}-\varepsilon$.

We look at the different cases provided by the Bellman operator and choose the distribution $\delta$.
\begin{itemize}
    \item If $\exists i_{0}$ such that $y_{j_{i_0}} < p_{j_{i_0}}$, we have to choose $y_j=0$. By \textbf{induction assumption 3}, it implies that $t_{j_{i_0}}<q_{j_{i_0}}$, so $t_j=0$.
    \item If for all $i\leq m_{j_1}$, $y_{j_{i}} \geq p_{j_{i}}$ and for all $i\leq m_{j_2}$, $y_{k_i} \geq p_{k_{i}}$, then we can choose $y_{j}=1$, so the induction step also holds.
    \item Otherwise $y_j$ must satisfy
    \[
    y_j\leq \sum_{a}\sum_{s^i} P(s^i|s,a)\delta(a) y_{j}^i.
    \]
    Since we have
    \[
    t_j\leq \sum_{a}\sum_{s^i} P(s^i|s,a)\delta(a) t_{j}^i,
    \]
    and $y_{j}^i \geq t_j^i-\varepsilon$, we can choose $y_j=t_j-\varepsilon.$
\end{itemize}
This concludes the proof of \textbf{Induction 3}. We hence have the existence of an element $\boldsymbol y$ such that $\boldsymbol x, \boldsymbol y\in V_n^{k+1}(s)$, with $\boldsymbol x = \Xi(\Phi,s)(\boldsymbol y)$, satisfying \textbf{Induction 3}. We now consider the last coordinate $y_1$. Since this value does not intervene in the constraints satisfied by the values corresponding to the subformulas, we can choose any $y_1$ as long as it satisfies the constraints imposed by the Bellman operator. We will now show \textbf{Induction 2}, and recall that by \textbf{Induction assumption 2}, each $\boldsymbol y^i$ satisfies $y_{j_0}^i \geq \mathbb{P}(\varphi_{k}|s^i,\pi')$. 
The component $y_{j_0}$ has to satisfy
\begin{itemize}
    \item $y_{j_0}=0$ if there exists $i_0$, $y_{j_{i_0}}\leq p_{j_{i_0}}$. By \textbf{induction 2}, this means that $t_{j_{i_0}}< q_{j_{i_0}}$, so choosing $y_{j_0}=0$ satisfies our assumption in that case, since $ \mathbb{P}(\varphi_{k+1}|s,\pi)=0.$
    \item In the second case $y_{j_0}=1$, there is nothing to prove.
    \item Finally, if $\forall i<n_1$, $y_{j_i} \geq p_{j_i}$ and there exists $i_0$, $y_{k_{i_0}}< p_{k_{i_0}}$. By \textbf{Induction 2},  we also have $t_{k_{i_0}}<q_{k_{i_0}}$. In that case, since $\MDP_{\pi}(s) \models \Psi_2$, and $\MDP_{\pi}(s)\not\models \Psi_3$, we have 
    \[
    \mathbb{P}(\varphi_{k+1}|s,\pi) = \sum_{a}\sum_{s^i} \delta(a) P(s^i|s,a) \mathbb{P}(\varphi_k|\pi',s^i).
    \]
    Finally, we can choose 
    \[
    \begin{aligned}
    &y_{j_0}  = \sum_{a}\sum_{s^i} \delta(a) P(s^i|s,a) y_{j_0}^i \\
    &\geq \sum_{a}\sum_{s^i} \delta(a) P(s^i|s,a) \left( \mathbb{P}(\varphi_k|\pi',s^i) - \varepsilon \right) \\
    &\geq \mathbb{P}(\varphi_{k+1}|s,\pi) - \varepsilon.
    \end{aligned},
    \]
    which satisfies \textbf{Induction 2}.

    We therefore obtain the following. 
\[
y_{j_0} \geq \mathbb{P}(\varphi_\infty|s,\pi)-\varepsilon,\quad \varphi_\infty = \Psi_1 \mathbf{U} (\mathbb{P}_{=1} [\mathbf{G}\Psi_1]\vee \Psi_2)
\]
In particular, for any $\pi$ non-evaporating, we obtain 
\[
y_{j_0} \geq \mathbb{P}(\Psi_1 \mathbf{W} (\Psi_1 \wedge \Psi_2)|s,\pi)-\varepsilon=\mathbb{P}(\psi|\pi,s)-\varepsilon.
\]

\end{itemize}

Therefore, for any $s\in \mathcal{S}$, for any $\boldsymbol t\in \mathcal{Q}(s)$, for any non-evaporating policy $\pi$ such that there exists $\boldsymbol r$ satisfying $ \boldsymbol{r,t} = \mathtt{P}(\pi|\MDP^s,\Psi)$, there exists $n(s,\varepsilon)$ such that $n\geq n(s,\varepsilon)$ implies
\[
clip(\boldsymbol{x},(\boldsymbol{y}+\varepsilon\boldsymbol{1}),\leq k+1) \geq clip( \mathtt{P}(\pi|\MDP^{s},\Psi),\leq k+1),
\]
where $\Xi(\Phi,s)(\boldsymbol{y}),\boldsymbol{y}\in V_n(s)$. Since the family of value frontiers $V_n$ are indexed by a finite amount of elements $s\in \mathcal S$ and they are non-decreasing, they converge uniformly toward their limit. Hence, one can choose a single $n_{k+1}$ such that for all $s \in \mathcal{S}$, for all $\boldsymbol t\in \mathcal{Q}(s)$, then any set $V_{n}$ for $n\geq n_{k+1}$ satisfies the requirements, which shows $\mathcal{I}_{1}(k+1)$ and concludes the proof. 



Since $V_n$ is a non-decreasing set in a compact domain, the value-frontier converges uniformly toward its limit, which concludes \textbf{Induction} $\mathcal{I}_1$. 
    
\end{proof}

\subsection{Close non-evaporating policy}

\begin{lemma}\label{lemma-tree-almost-1}\textbf{(Finite tree decomposition)}
    Let $\Phi$ be a \CPCTL\ formula of the form $\Phi=\mathbb{P}_{\geq p}(\phi)$ and let $\MDP$ be a Markov Chain. For any $\eta>0$ and $\varepsilon>0$, there exists a depth $h$ such that with $T$ the tree rooted in $s_0$ of depth $h$, $\mathcal{F}$ the set of leaves, $\mathcal{F}_{\neq 0}$ and $\mathcal{F}_{=0}$ defined as
    \[
    \mathcal{F}_{\neq 0} = \{\, s \in T,\ \exists \xi,\ \xi[0] = s_0,\ \xi[h(\varepsilon)] = s \, ,~\forall j\leq h,~\xi[j] \models \Phi_1\},
    \]
    \[
    \mathcal{F}_{=0} = \{\, s \in T,\ \exists \xi,\ \xi[0] = s_0,\ \xi[h(\varepsilon)] = s \, ,~\exists j\leq h,~\xi[j] \not\models \Phi_1\},
    \]
    we have
    \[
    \mathbb{P}\left( \left\{ s'\in \mathcal{F}_{\neq 0},~\mathbb{P}(\phi|s')\leq 1-\eta \right\} \right) \leq \varepsilon.
    \]
    
\end{lemma}

\begin{proof}
Let $\Phi = \mathbb{P}_{\geq p}(\phi)$ be a \CPCTL\ formula, and let $\MDP$ be a Markov Chain.

\begin{itemize}
    \item \textbf{Case 1:} $\phi = \mathbf{G}\Phi_1$.  
    The Bellman equation is given by
    \[
    \mathbb{P}(\mathbf{G}\Phi_1 \mid s) =
    \begin{cases}
        \sum_{s'} P(s' \mid s)\, \mathbb{P}(\mathbf{G}\Phi_1 \mid s') & \text{if } s \models \Phi_1,\\
        0 & \text{otherwise.}
    \end{cases}
    \]
    Let $p_0(s) = \mathbb{P}(\mathbf{G}\Phi_1 \mid s)$.  
    If $p_0 = 1$, the literal projection lemma applies directly.  
    Otherwise, consider the alternative formula
    \[
    \Psi_\varepsilon = \mathbb{P}_{\geq p_0(s_0) + \varepsilon}(\mathbf{G}\Phi_1),
    \]
    which is a safe formula such that $\mathcal{M} \not\models \Psi_\varepsilon$.  
    Therefore, by~\cite{katoen2014probably}, there exists a finite tree rooted at $s_0$ that witnesses the violation of $\Psi_\varepsilon$.  
    Let $h$ denote the depth of the smallest such tree $T$.

    We define
    \[
    \mathcal{F}_h = \{\, s \mid \exists \xi,\ \xi \models \mathbf{G}\Phi_1,\ \xi[0] = s_0,\ \xi[h] = s \,\},
    \]
    and
    \[
    \mathcal{F}_0 = \{\, s \in T \mid s \not\models \Phi_1,\ \exists \xi,\ \xi[0] = s_0,\ \exists i>0,\ \xi[i] = s \,\}.
    \]
    Construct $\mathcal{T}$ by replacing each element of $\mathcal{F}_0$ with a leaf having identical labels, and let $\mathcal{F}^*$ denote the set of leaves of $\mathcal{T}$.

    For each non-leaf node $s \in \mathcal{T}$, we have $s \models \Phi_1$.  
    By the law of total probability applied to the subtree $\mathcal{T}$ of $\MDP$, we obtain
    \[
    \begin{aligned}
    \mathbb{P}(\mathbf{G}\Phi_1 \mid s)
        &= \sum_{s' \in \mathcal{F}} P(s' \mid s)\, \mathbb{P}(\mathbf{G}\Phi_1 \mid s') \\
        &= \sum_{s' \in \mathcal{F}_h} P(s' \mid s)\, \mathbb{P}(\mathbf{G}\Phi_1 \mid s') + \sum_{s' \in \mathcal{F}_0} P(s' \mid s)\, \mathbb{P}(\mathbf{G}\Phi_1 \mid s') \\
        &= \sum_{s' \in \mathcal{F}_h} P(s' \mid s)\, \mathbb{P}(\mathbf{G}\Phi_1 \mid s').
    \end{aligned}
    \]

    Let $\eta > 0$, and define
    \[
    D_\eta = \{\, s' \in \mathcal{F}_h \mid \mathbb{P}(\mathbf{G}\Phi_1 \mid s') \leq 1 - \eta \,\}.
    \]
    Then
    \[
    \mathbb{P}(\mathbf{G}\Phi_1 \mid s)
        = \sum_{s' \in D_\eta} P(s' \mid s)\, \mathbb{P}(\mathbf{G}\Phi_1 \mid s')
        + \sum_{s' \in \mathcal{F}_h \setminus D_\eta} P(s' \mid s)\, \mathbb{P}(\mathbf{G}\Phi_1 \mid s').
    \]
    We have
    \[
    \sum_{s' \in \mathcal{F}_h \setminus D_\eta} P(s' \mid s)\, \mathbb{P}(\mathbf{G}\Phi_1 \mid s') \leq \mathbb{P}(\mathcal{F}_h \setminus D_\eta),
    \qquad
    \sum_{s' \in D_\eta} P(s' \mid s)\, \mathbb{P}(\mathbf{G}\Phi_1 \mid s') \leq \mathbb{P}(D_\eta)\,(1-\eta),
    \]
    and also
    \[
    \mathbb{P}(D_\eta) + \mathbb{P}(\mathcal{F}_h \setminus D_\eta) \leq p_0 + \varepsilon.
    \]
    Consequently,
    \[
    p_0 \leq \mathbb{P}(\mathbf{G}\Phi_1) \leq \mathbb{P}(D_\eta)\,(1-\eta) + (p_0 + \varepsilon - \mathbb{P}(D_\eta)),
    \]
    which implies
    \[
    \mathbb{P}(D_\eta) \leq \frac{\varepsilon}{\eta}.
    \]
    The result follows by choosing $\eta = \sqrt{\varepsilon}$.

    \item \textbf{Case 2:} $\phi = \Phi_1 \mathbf{W}(\Phi_1 \wedge \Phi_2)$.  
    Let $p_0 = \mathbb{P}(\phi \mid s_0)$, and consider a finite tree $\mathcal{T}$ witnessing $\MDP \not\models \mathbb{P}_{\geq p_0 + \varepsilon}(\Phi_1 \mathbf{W} (\Phi_1 \wedge \Phi_2))$.
    We adapt the previous argument by defining the following sets of leaves:
    \[
    \begin{aligned}
    \mathcal{F}_0 &= \{\, s \in T \mid s \not\models \Phi_1,\ \exists \xi,\ \xi[0] = s_0,\, \exists k,\ \xi[k] = s,\, \forall j < k,\ \xi[j] \models \Phi_1 \,\},\\[2mm]
    \mathcal{F}_1 &= \{\, s \in T \mid s \models \Phi_1 \wedge \Phi_2,\ \exists \xi,\ \xi[0] = s_0,\, \exists k,\ \xi[k] = s,\, \forall j < k,\ \xi[j] \models \Phi_1 \,\},\\[2mm]
    \mathcal{F}_h &= \mathrm{leaves}(\mathcal{T}) \setminus (\mathcal{F}_0 \cup \mathcal{F}_1).
    \end{aligned}
    \]
    By the total probability law,
    \[
    p_0 = \mathbb{P}(\phi \mid s)
        = \sum_{s' \in \mathcal{F}_h} P(s' \mid s)\, \mathbb{P}(\phi \mid s')
        + \sum_{s' \in \mathcal{F}_0} P(s' \mid s)\, \mathbb{P}(\phi \mid s')
        + \sum_{s' \in \mathcal{F}_1} P(s' \mid s)\, \mathbb{P}(\phi \mid s'),
    \]
    hence
    \[
    p_0 = \sum_{s' \in \mathcal{F}_h} P(s' \mid s)\, \mathbb{P}(\phi \mid s') + \sum_{s' \in \mathcal{F}_1} P(s' \mid s).
    \]

    Since $\mathcal{T}^*$ is a witness, we have $\mathbb{P}(\mathcal{T}_0) \geq 1 - p_0 - \varepsilon$.  
    For $\eta > 0$, define
    \[
    \alpha(\eta) = \mathbb{P}(\{\, s' \in \mathcal{F}_h \mid \mathbb{P}(\phi \mid s') \leq 1 - \eta \,\}), \quad
    \beta(\eta) = \mathbb{P}(\{\, s' \in \mathcal{F}_h \mid \mathbb{P}(\phi \mid s') \geq 1 - \eta \,\}), \quad
    \gamma(\eta) = \mathbb{P}(\mathcal{T}_1).
    \]
    We have $\alpha(\eta) + \beta(\eta) + \gamma(\eta) \leq p_0 + \varepsilon$, thus
    \[
    p_0 \leq \alpha(\eta)(1-\eta) + (\beta(\eta) + \gamma(\eta)) \leq \alpha(\eta)(1-\eta) + (p_0 + \varepsilon - \alpha(\eta)).
    \]
    Choosing $\eta = \sqrt{\varepsilon}$ yields the desired conclusion.
\end{itemize}
\end{proof}

\begin{definition}\textbf{(Total Variational distance between policies)}
For any two memoryful policies $\pi$, $\pi'$ on an MDP $\MDP$, we define the total variational distance $d_{TV}(\pi,\pi')$ between $\pi$ and $\pi'$ as the Total Variational distance between the path measures $\mathbb P_{\pi}$ and $\mathbb P_{\pi'}$ induced by $\pi$ and $\pi'$ respectively, \emph{i.e.} as $$\sup_{A\in \mathcal F} | \mathbb P_{\pi}(A) - \mathbb P_{\pi'}(A) |,$$ where $\mathcal F$ is the $\sigma$-algebra of the path measure space, generated by the cylinder sets.
\end{definition}

\begin{lemma}\label{leamma-weak-continuity}\textbf{(Weak Upward Continuity of safe \PCTL\ for the Total Variational distance)}
    Let $\MDP$ be an MDP, and $n\in \mathbb N$. For any $\kappa>0$, there exists $\epsilon$ such that for any $\Phi,\Psi$ safe formulas of size smaller than $n$ with probability thresholds $\boldsymbol p$ and $\boldsymbol q$ respectively such that $\Psi = \mathtt{T}(\Phi, \boldsymbol{q})$, $\boldsymbol p< \boldsymbol q$, and $\min{(\boldsymbol q-\boldsymbol p)}\geq\kappa$, for any two policies $\pi,\pi'$ with $d_{TV}(\pi,\pi')\leq\epsilon$, we have \[
    \MDP_{\pi} \models \Psi \Rightarrow \MDP_{\pi'} \models \Phi. 
    \]
\end{lemma}
\begin{proof}
    Let $\kappa>0$. We do the proof by induction in $n$.

    \begin{itemize}
        \item \textbf{(Base case:)} Let $\Phi,\Psi$ be conjunctions or literals and their negations. In that case, $\Phi=\Psi$, and the satisfaction of the formula only depends on the labels of the initial state.
        \item \textbf{(Induction:)} 
        Suppose by induction that for any $\Phi,\Psi$ safe formulas of size smaller than $n$ with probability thresholds $\boldsymbol p$ and $\boldsymbol q$ respectively such that $\Psi = \mathtt{T}(\Phi, \boldsymbol{q})$, $\boldsymbol p< \boldsymbol q$, and $\min{(\boldsymbol q-\boldsymbol p)}\geq\kappa$, for any two policies $\pi,\pi'$ with $d_{TV}(\pi,\pi')\leq\eta$, we have \[
        \MDP_{\pi} \models \Psi \Rightarrow \MDP_{\pi'} \models \Phi. 
        \] Let $\Phi,\Psi$ of size $n+1$ with probability thresholds $\boldsymbol p$ and $\boldsymbol q$ respectively such that $\Psi = \mathtt{T}(\Phi, \boldsymbol{q})$, $\boldsymbol p< \boldsymbol q$, and $\min{(\boldsymbol q-\boldsymbol p)}\geq\kappa$, and let $\epsilon=\frac{\eta\kappa}{4}$, and suppose that $d_{TV}(\pi,\pi')\leq \epsilon$. Let $R=\{\rho \text{ minimal} \mid d_{TV}(\pi_\rho,\pi'_\rho)\geq\eta\}$, let $R_1=\{\rho \text{ minimal}\mid \sup_{A}\mathbb P_{\pi_\rho}(A)-\mathbb P_{\pi'_\rho}(A)\geq\eta\}$, and let $R_2=\{\rho \text{ minimal}\mid \sup_{A}\mathbb P_{\pi'_\rho}(A)-\mathbb P_{\pi_\rho}(A)\leq\eta\}$. There exists $j\in\{1,2\}$ such that $\mu_{\pi'}(R_j)\geq \frac{\mu_{\pi'}(R)}{2}$.
        \begin{itemize}
            \item Suppose that $j=1$. Then, for any $\rho\in R$, there exists a set of measurable paths $A_\rho\in S^\star$ such that $\mathbb P_{\pi_\rho}(A_\rho)-\mathbb P_{\pi'_\rho}(A_\rho)\geq \eta$. However, since $d_{TV}(\pi,\pi')\leq \epsilon$, we have $$\left|\sum_{\rho\in R_1} \pi(\rho)\mathbb P_{\pi_\rho}(A_\rho)-\pi'(\rho)\mathbb P_{\pi'_\rho}(A_\rho)\right|\leq \epsilon.$$
            Furthermore \begin{gather*}
                \sum_{\rho\in R_1} \pi(\rho)\mathbb P_{\pi_\rho}(A_\rho)-\pi'(\rho)\mathbb P_{\pi'_\rho}(A_\rho)=
                \sum_{\rho\in R_1} \pi(\rho)\mathbb P_{\pi_\rho}(A_\rho)-\pi'(\rho)\mathbb P_{\pi_\rho}(A_\rho)+ \pi'(\rho)\mathbb P_{\pi_\rho}(A_\rho) -\pi'(\rho)\mathbb P_{\pi'_\rho}(A_\rho)=\\
                \sum_{\rho\in R_1} (\pi(\rho)-\pi'(\rho))\mathbb P_{\pi_\rho}(A_\rho)+ \pi'(\rho)(\mathbb P_{\pi_\rho}(A_\rho) -\mathbb P_{\pi'_\rho}(A_\rho)).
            \end{gather*}
            But $$\sum_{\rho\in R_1} (\pi(\rho)-\pi'(\rho))\mathbb P_{\pi_\rho}(A_\rho)=\int_{t=0}^1 \sum_{\{\rho\in R_1\}} (\pi(\rho)-\pi'(\rho))\mathds  {1}_{\{\mathbb P_{\pi_\rho}(A_\rho)>t\}} dt,$$ and we thus have, since $\sum_{\rho\in R_1} |(\pi(\rho)-\pi'(\rho))|\leq 2$, $$\left|\sum_{\rho\in R_1} (\pi(\rho)-\pi'(\rho))\mathbb P_{\pi_\rho}(A_\rho)\right|\leq \int_{t=0}^1 \left|\sum_{\{\rho\in R_1\}} (\pi(\rho)-\pi'(\rho))\mathds  {1}_{\{\mathbb P_{\pi_\rho}(A_\rho)>t\}}\right| dt \leq \epsilon,$$ since for any countable set $E$, $|\sum_{\{\rho\in E\}} (\pi(\rho)-\pi'(\rho))|\leq d_{TV}(\pi,\pi')$. 
            
            In addition, $\sum_{\rho\in R_1}\pi'(\rho)(\mathbb P_{\pi_\rho}(A_\rho) -\mathbb P_{\pi'_\rho}(A_\rho))\geq \mathbb P_{\pi'}(R_1)\alpha$ by definition of $R_1$.
            As a consequence, we have $|\epsilon-P_{\pi'}(R_1)\eta|\leq \epsilon$, and $P_{\pi'}(R_1)\leq \frac{\kappa}{2}$. As a consequence, $P_{\pi'}(R)\leq \kappa$.
            \item The case $j=2$ is similar.
        \end{itemize}
         Thus, for any history that is not in $R \cdot S^\star$, by induction, for any subformula $\Psi'$ of $\Psi$ with probability thresholds $\boldsymbol q'$, $\mathcal M_{\pi}\models \Psi'\implies \mathcal M_{\pi'}\models \Phi'$, where $\Phi'=\mathtt{T}(\Phi, \boldsymbol{p'})$. As a consequence, by definition of PCTL semantics, since $\mathbb P_{\pi'}(R\cdot S^\star)\leq \kappa$, we have $\MDP_{\pi} \models \Psi \Rightarrow \MDP_{\pi'} \models \Phi$.
    \end{itemize}
\end{proof}

\begin{definition}\textbf{(Induced Safe Policy)}
    Let $\pi$ be a policy of an MDP $\MDP$, let $\Phi$ be a propositional formula, let $V$ be the set of states $s$ of $\MDP$ such that there exists $\sigma$ with $\mathbb P(\mathbf G\Phi\mid \sigma,s)=1$, and for any $s\in V$, let $\alpha(s)$ be an action such that the support of $\alpha(s)$ is included in $V$. Furthermore, suppose that the initial state of $\MDP$ is in $V$. Then we define $\text{Safe}(\pi,\Phi)$ as a policy $\pi'$ such that for any history $\rho\in V^\star$, $$\pi'(\rho)(a)=\begin{cases}
        0 & \text{ if } \text{Supp}(a) \not\subseteq V\\
        \pi(\rho)(a) + 1-\sum_{\{b|\text{Supp}(a) \not\subseteq V\}} \pi(\rho)(b) & \text{ if } a=\alpha(s) \\
        \pi(\rho)(a) & \text{otherwise.}
    \end{cases}$$
\end{definition}

\begin{lemma}\label{lemma-safe-TV}\textbf{(Close Induced Safe Policy)}
    For any MDP $\mathcal M$, for any policy $\pi$ of $\mathcal M$, for any $\epsilon>0$, there exists $\eta$ such that, for any propositional formula $\Phi$, if $\mathbb P(\mathbf G\Phi \mid \pi)>1-\eta$, then $\pi$ induces a safe policy $\text{Safe}(\pi,\Phi)$ and we have $d_{TV}(\pi,\text{Safe}(\pi,\Phi))<\epsilon$. Furthermore, if $\pi$ is finite-memory, then $\text{Safe}(\pi,\Phi)$ is finite-memory as well.
\end{lemma}
\begin{proof}
     We let $V$ be the set of the states $s$ of $\MDP$ such that there exists a policy $\pi$ with $\mathbb P(\Phi|\pi, s)\neq 1$. For any state $s$ that is not in $V$, for any policy $\pi$, there exists a history $\rho$ of length less than the number of states $n$ of $\MDP$ that $\mathbb P^s_{\pi}(\rho)>0$ and $\last{\rho}\not\models\Phi$. As a consequence, if we let $d$ denote the smallest non-zero transition probability of $\MDP$, for any $s$ not in $V$, we have $\sup_{\pi}\mathbb P(\Phi|\pi, s)\leq 1-d^n$. 
     
     We now suppose w.l.o.g. that $\epsilon<1$, we let $\eta=\epsilon d^n$, we let $\pi$ be a policy such that $\mathbb P(\Phi \mid \pi)>1-\eta$ and we let $R$ be the set of all histories $\rho$ such that $\last{\rho}\notin V$ and that are minimal among such histories. We have $\mathbb P_{\pi}(R)d^n\leq \eta$, which implies that $\mathbb P_{\pi}(R)\leq \epsilon$. For any measurable set of paths $A$, by definition of $\text{Safe}(\pi,\Phi)$, we have $$|\mathbb P_{\text{Safe}(\pi,\Phi)}(A)-\mathbb P_{\pi}(A)|\leq \mathbb P_{\pi}(R)\leq \epsilon,$$
     which concludes the proof.
\end{proof}

\begin{theorem}\textbf{(Close finite-memory policy).} Let $\MDP$ be a MDP, $\Phi$ 
be a \CPCTL\ formula, $\boldsymbol p$ be its probability threshold, and $\pi$ be a policy such that $\MDP_\pi \models_S \Phi$. For any $\eta\geq 0$, there exists a finite-memory $\pi'$ such that
\[
\mathbb{P}\left( \Phi | \mathcal M_{\pi'} \right)\geq \mathbb{P}\left( \mathtt{T}(\Phi,\boldsymbol p - \eta)| \mathcal M_{\pi} \right).
\]
\end{theorem}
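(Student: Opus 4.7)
The plan is to proceed by induction on the structure of $\Phi$, using strict satisfaction as a ``budget'' that can be partially spent at each level of nesting in order to replace the possibly history-dependent policy $\pi$ by a finite-memory approximation while preserving the desired probabilistic guarantees.

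For the base case, if $\Phi$ is a literal or a boolean combination of literals, then the satisfaction of $\Phi$ depends only on the labelling of the initial state, so $\pi$ can be replaced by any memoryless policy and the statement holds trivially. For the conjunctive step $\Phi=\Phi_1\land\Phi_2$, the two induction hypotheses yield finite-memory policies $\pi'_1,\pi'_2$, and a standard product construction on their memory states produces a single finite-memory policy whose induced Markov chain simultaneously satisfies both weakened conjuncts. The main case is therefore $\Phi=\mathbb P_{\geq p}[\Phi_1\mathbf{W}(\Phi_1\land\Phi_2)]$ with strict satisfaction buffer $\boldsymbol q-\boldsymbol p>0$, which I split into a small $\eta$ from this buffer for the outer operator and recurse on $\Phi_1,\Phi_2$ with the remaining margin.

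In this main case I would first invoke Lemma~\ref{lemma-tree-almost-1} (finite tree decomposition) applied to the Markov chain $\mathcal M_\pi$ with accuracy parameters calibrated by $\eta$, obtaining a depth $h$ such that the total $\pi$-probability of leaves in $\mathcal F_{\neq 0}$ at which $\mathbb P(\phi\mid s')\leq 1-\eta'$ is arbitrarily small (for a suitably chosen $\eta'$ much smaller than the buffer). I then define $\pi'$ by playing $\pi$ on histories of length at most $h$, and at each leaf $s'$ switching to a memoryless continuation chosen as follows: if the leaf lies in $\mathcal F_{=0}$ (meaning $\Phi_1$ has failed or $\Phi_1\land\Phi_2$ has been achieved) the continuation is irrelevant or trivially correct; otherwise the leaf is ``good'', and Lemma~\ref{lemma-safe-TV} produces from $\pi$'s continuation at $s'$ a finite-memory policy $\mathrm{Safe}(\pi_{s'},\alit{\Phi_1})$ that satisfies $\mathbb P_{\geq 1}[\mathbf G\alit{\Phi_1}]$, and hence $\Phi$ itself by the Corollary of Lemma~\ref{lemma-alit-W}, at total-variation distance $\leq\epsilon$ from $\pi_{s'}$. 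Since the tree has only finitely many leaves and each continuation has finite memory, the overall $\pi'$ is finite-memory. Weak upward continuity (Lemma~\ref{leamma-weak-continuity}), applied with the remaining slack $\boldsymbol q-\boldsymbol p-\eta$, then shows that replacing $\pi$ by $\pi'$ in the nested subformulas only loses at most $\eta$ in each threshold, so the weaker formula $\mathtt T(\Phi,\boldsymbol p-\eta)$ is preserved everywhere inside the tree.

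The main obstacle is the \emph{propagation of strictness across nesting levels}: at each leaf the induction hypothesis must be applied to $\Phi_1$ and $\Phi_2$, but only after checking that $\pi$ still strictly satisfies these subformulas from that leaf with a buffer that is uniformly bounded away from zero, since Lemma~\ref{leamma-weak-continuity} requires $\min(\boldsymbol q-\boldsymbol p)\geq \kappa>0$. This is where I would use the strict satisfaction assumption most carefully: the buffer $\boldsymbol q-\boldsymbol p$ is distributed across the (finite) total depth of $\Phi$, so that a fraction proportional to the inverse of the nesting depth is allocated at each level, ensuring that at every recursive call the hypotheses of Lemmas~\ref{lemma-safe-TV} and~\ref{leamma-weak-continuity} are met and the finite-memory approximation at the current level does not destroy satisfaction at deeper levels. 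Combining the inductive approximations, using total-variation continuity to control error propagation through the tree, and concluding by the finiteness of the tree and of each leaf's continuation yields a single finite-memory policy $\pi'$ with $\mathbb P(\Phi\mid \mathcal M_{\pi'})\geq \mathbb P(\mathtt T(\Phi,\boldsymbol p-\eta)\mid \mathcal M_{\pi})$, as required.
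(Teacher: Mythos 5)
Your plan matches the paper's proof essentially step for step: a budget-splitting induction over subformulas, the finite tree decomposition of Lemma~\ref{lemma-tree-almost-1} to cut at depth $h$, the case split at leaves with $\mathrm{Safe}(\cdot,\alit{\Phi_1})$ handling the almost-sure-$\mathbf G$ leaves, and weak upward continuity in total-variation distance (Lemma~\ref{leamma-weak-continuity}) to control the loss on the thresholds. The one point to tighten is at the ``good'' leaves: $\mathrm{Safe}(\pi_{s'},\alit{\Phi_1})$ is only finite-memory when its input already is, so finite memory there must come from recursing the induction hypothesis on the sub-MDP restricted to $\alit{\Phi_1}$ (as the paper does), not from Lemma~\ref{lemma-safe-TV} itself.
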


\begin{proof}
    Let $F$ be a set of formulas close by the subformula partial order, $n_1$ and $n_2$ be respectively the maximal total depth of formulas in $F$ and the number of maximal formulas of this depth (for the subformula partial order). We prove the result by induction on $F$ equipped with the lexicographic order $(n_1,n_2)> (m_1,m_2)$ if and only if $n_1>m_1$ or $n_1=m_1$ and $n_2>m_2$.

    \begin{itemize}
        \item \textbf{Induction assumption:} Let $F$ be a set of formulas, all subformula of some $\Phi$ with threshold $\boldsymbol p$, and close by the subformula partial order. For any MDP $\MDP$ and policy $\pi$ of $\MDP$, for all $\eta>0$, for all $F'
        \subsetneq F$, there exists $\pi'$ finite-memory such that for every $\Psi\in F'$, $\mathcal M_{\pi}\models \mathtt{T}(\Psi,\boldsymbol p+ \eta)\Rightarrow \mathcal M_{\pi'} \models \Psi$.

        \item \textbf{Induction step, $\bigwedge$:} Let $F=\{\Psi^1,\dots,\Psi^m\}$. Without loss of generality, we suppose that $\Psi^1$ is a maximal formula in $F$ of maximal depth, and first assume that it has the form
        \[
        \Psi^1 = \bigwedge_{j=1}^l \Psi_j^1 = \bigwedge_{j=1}^l \left[ \left( \bigwedge_{k=1}^{l_1-1} b_k^j \right) \wedge \left( \bigwedge_{k=l_1}^{l_2} \neg b_k^j \right)\right].
        \]
        We define $\tilde F= (F\setminus \{ \Psi^1 \}) \cup \{ \Psi_1^1,\dots,\Psi_l^1 \}$. Either $\Psi_1$ was the only formula of its depth, and the maximal depth has been decreased by one. Or, the maximal depth in $\tilde F$ is the same, but the amount of formulas of this depth has been decreased by one. In both cases, $\tilde F < F$ and the induction hypothesis applies. The result automatically follows by induction.
        
        \item \textbf{Induction step, $\mathbf W$ :} We now consider $F$ of the form $F= \{ \Psi,\Phi_1,\dots,\Phi_m\}$, where $\Psi$ is a maximal formula of maximal total depth, and has the form

        \[
        \Psi = \Psi_1 \mathbf{W} (\Psi_1 \wedge \Psi_2),\quad \Psi_1 = \bigwedge_{j=1}^{n_1} \Psi_1^j,~\Psi_2=\bigwedge_{j=1}^{n_2} \Psi_2^j.
        \]
        By lemma \ref{lemma-tree-almost-1}, for any $\eta_1$,$\eta_2$, there exists a finite tree $T$ rooted at the start $s_0$ of depth $h$ such that for all maximal (finite) history $\xi$ in $T$ except on a set of measure smaller or equal to $\eta_1$, we have 
        \[
        \left\{
        \begin{aligned}
            &(i)\quad \text{either }\xi,\pi\not\models \Psi_1 \\
            &(ii)\quad \text{or either }\xi, \pi \models \Psi_1 \wedge \Psi_2 \text{ and }\forall k,~\xi[k],\tilde \pi \models \Psi_1.\\
            &(iii)\quad \text{or }\mathbb{P}(\mathbf G\Psi_1|\xi)\geq 1-\eta_2.
        \end{aligned}
        \right.
        \]
        We denote $\Xi_1$ the set of maximal histories of $T$ satisfying $(i)$, $\Xi_2$ the set of maximal histories satisfying $(ii)$ that are not in $\Xi_1$, $\Xi_3$ the set of maximal histories satisfying $(iii)$ that are not in $\Xi_1\cup \Xi_2$, and $\Xi_4$ the set of all the other maximal histories.

        We take $\eta_1$ given by Lemma \ref{leamma-weak-continuity} for $\kappa=\eta/2$, $\epsilon$ given by Lemma \ref{leamma-weak-continuity} for $\kappa=\eta/4$, take $\eta_2$ given by Lemma \ref{lemma-safe-TV}, and let $\mathcal M_1$ be the MDP obtained from $\mathcal M$ by removing all states that do not satisfy $\alit{\Psi_1}$. Applying the induction hypothesis to all histories $\xi$ in $\Xi_1\cup \Xi_2$, MDP $\mathcal M^\xi$, $F\setminus \Psi$ and $\eta/2$ yields finite-memory policies $\pi_\xi$. Applying the induction hypothesis to all histories $\xi_3$, MDP $\mathcal M_1^{\text{Safe}(\pi,\Psi_1)}$, 
        $\mathtt{T}(\Psi,\boldsymbol p-\eta/4)$, and $\eta/4$, yields policies finite-memory policies $\pi_\xi$. Then, if we let $\pi_1$ be the policy obtained from $\pi$ be replacing policy $\pi$ by policies $\pi_\xi$ for every $\xi$ in $\Xi_1\cup\Xi_2\cup\Xi_3$, we have \[
        \mathbb{P}\left( \Phi | \mathcal M_{\pi_1} \right)\geq \mathbb{P}\left( \mathtt{T}(\Phi,\boldsymbol p - \eta/2)| \mathcal M_{\pi} \right).
        \]
        for any $\Phi\subseteq F$.
        We now let $\pi'$ be any finite-memory policy equal to $\pi_1$ on all paths that are prefixes or suffixes of histories in $\Xi_1\cup\Xi_2\cup\Xi_3$. Lemma \ref{leamma-weak-continuity} concludes the proof. 
        \end{itemize}
\end{proof}

\begin{lemma}\label{lem:finitemem}
    For any finite Markov Chain $\MDP$, for all $s$,
    \[
    \left\{
    \begin{aligned}
        &\forall \Phi \text{ state formula},~s \models \Phi \Leftrightarrow s \models \mathtt{FD}(\Phi),\\
        &\forall \phi \text{ path formula},~\mathbb{P}(\phi|s) = \mathbb{P}(\mathtt{FD}(\phi)|s).
    \end{aligned}
    \right.
    \]

    As a consequence, the lemma holds for Markov Decision Processes equipped with finite-memory policies.

\end{lemma}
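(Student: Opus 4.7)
The plan is a strong structural induction on the formula. The base cases $\Phi = b$ and $\Phi = \neg b$ are immediate since $\mathtt{FD}$ acts as the identity. The conjunction case $\Phi_1\wedge \Phi_2$ is purely compositional via the induction hypothesis, and the probabilistic operator case $\mathbb P_{\geq p}(\phi)$ follows from the path-formula statement of the induction hypothesis: the equality $\mathbb P(\phi\mid s)=\mathbb P(\mathtt{FD}(\phi)\mid s)$ immediately yields $s\models \mathbb P_{\geq p}(\phi) \Leftrightarrow s\models \mathbb P_{\geq p}(\mathtt{FD}(\phi))$. Hence the only substantive case is the continuing weak-until $\phi = \Phi_1\mathbf W(\Phi_1\wedge \Phi_2)$.

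For this case, by the induction hypothesis I have $\llbracket \Phi_i\rrbracket_\MDP = \llbracket \mathtt{FD}(\Phi_i)\rrbracket_\MDP$ for $i\in\{1,2\}$, so I may work with $\Phi_1,\Phi_2$ directly. It suffices to establish, for every state $s$ of the finite chain $\MDP$, the equality
\[
\mathbb{P}\bigl(\Phi_1\mathbf W(\Phi_1\wedge \Phi_2)\mid s\bigr)
= \mathbb{P}\bigl(\Phi_1 \mathbf U[\mathbb P_{=1}(\mathbf G\Phi_1)\vee (\Phi_1\wedge\Phi_2)]\mid s\bigr).
\]
Denote by $X$ and $Y$ the two path sets above, and let $A = \{\xi : \xi\models \Phi_1\mathbf U(\Phi_1\wedge\Phi_2)\}$. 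Trivially $A\subseteq X\cap Y$, so it is enough to prove that $X\setminus A$ and $Y\setminus A$ coincide up to a null set. If $\xi\in X\setminus A$, then $\xi\models \mathbf G\Phi_1$; by the classical finite-Markov-chain argument, almost every infinite path enters a bottom strongly connected component (BSCC) in finite time, and any BSCC reached by a $\mathbf G\Phi_1$-path is entirely contained in $\llbracket \Phi_1\rrbracket$, so every state $s'$ of it satisfies $\mathbb P_{=1}(\mathbf G\Phi_1)$. Thus almost every path of $X\setminus A$ belongs to $Y$. Conversely, any $\xi\in Y\setminus A$ reaches, while remaining in $\llbracket \Phi_1\rrbracket$, some witness state $s'\models \mathbb P_{=1}(\mathbf G\Phi_1)$; the suffix from $s'$ then satisfies $\mathbf G\Phi_1$ almost surely, so almost every path of $Y\setminus A$ lies in $X$.

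The MDP consequence follows by the standard product construction: a finite-memory policy $\pi$ with memory set $M$ induces a Markov chain $\MDP_\pi$ whose state space can be taken as the finite set $S\times M$, so the chain-level result applies verbatim. The main obstacle I anticipate is the BSCC characterisation step, namely verifying that $\{s':\mathbb P(\mathbf G\Phi_1\mid s')=1\}$ coincides almost surely with the set of states from which every reachable BSCC lies inside $\llbracket \Phi_1\rrbracket$. This is a classical reachability fact, but it must be invoked carefully because $\Phi_1$ is itself a nested \CPCTL\ formula rather than an atomic proposition; fortunately the induction hypothesis has already reduced $\Phi_1$ to its exact satisfaction set in $\MDP$, after which the qualitative analysis on the induced subchain on $\llbracket \Phi_1\rrbracket$ concludes the argument.
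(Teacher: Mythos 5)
Your proof is correct and follows essentially the same route as the paper's: a structural induction in which the only non-trivial case is $\phi=\Phi_1\mathbf W(\Phi_1\wedge\Phi_2)$, reduced (after using the induction hypothesis to identify $\llbracket\Phi_i\rrbracket$ with $\llbracket\mathtt{FD}(\Phi_i)\rrbracket$) to showing that on a finite chain almost every $\mathbf G\Phi_1$-path eventually reaches a state satisfying $\mathbb P_{=1}(\mathbf G\Phi_1)$ while staying in $\llbracket\Phi_1\rrbracket$, and conversely. The only cosmetic difference is the final qualitative step: you invoke the BSCC decomposition, whereas the paper argues directly that a path confined forever to states with $\mathbb P(\mathbf G\Phi_1)<1$ revisits some such state infinitely often and therefore almost surely triggers a $\Phi_1$-violation; the two arguments are interchangeable.
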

\begin{proof}
We recall the definition of $\mathtt{FD}:$
\[
    \mathtt{FD}:
    \left\{
    \begin{aligned}
        &b\in AP\mapsto b,\\
        &\neg b\in AP\mapsto \neg b,\\
        &\Phi_1 \wedge \Phi_2 \mapsto \mathtt{FD}(\Phi_1)\wedge \mathtt{FD}(\Phi_2),\\
        &\Phi_1 \mathbf W (\Phi_1 \wedge \Phi_2 ) \mapsto \mathtt{FD}(\Phi_1) \mathbf{U} \big[ \mathbb{P}_{=1}(\mathbf G \Phi_1) \vee (\mathtt{FD}(\Phi_1)  \wedge \mathtt{FD} (\Phi_2)) \big],\\
        &\mathbb{P}_{\geq p} (\phi) \mapsto \mathbb{P}_{\geq p} (\mathtt{FD}(\phi)),
    \end{aligned}
    \right.
    \]

and show by induction over the total depth of $\Phi\in$ \CPCTL\ (resp. $\phi\in$ \CPCTL) a state (resp. path) formula the following:  For any finite Markov Chain $\MDP$, for all $s$,
\[
\left\{
\begin{aligned}
    &s \models \Phi \Leftrightarrow s \models \mathtt{FD}(\Phi),\\
    &\mathbb{P}(\phi|s) = \mathbb{P}(\mathtt{FD}(\phi)|s).
\end{aligned}
\right.
\]

\begin{itemize}
    \item If the total depth of $\Phi$ is $0$, then $\Phi=\texttt{FD}(\Phi)$.
    \item If $\Phi = \Phi_1 \wedge \Phi_2$, for any Markov Chain $\MDP$, for all $s$, $s \models \Phi_1 \Leftrightarrow s \models \mathtt{FD}(\Phi_1)$ and $s \models \Phi_2 \Leftrightarrow s \models \mathtt{FD}(\Phi_2)$. Hence, for all $s$, $s \models \Phi_1 \wedge \Phi_2 \Leftrightarrow s \models \mathtt{FD}(\Phi_1) \wedge \texttt{FD}(\Phi_2)$.
    \item If $\phi = \Phi_1 \mathbf{W} (\Phi_1 \wedge \Phi_2)$. We consider a finite Markov Chain $\MDP$ and a state $s$. Let $\xi$ be a path emerging from $s$. By induction, we have for any $j\in \mathbb{N}$,
    \[
    \xi[j]\models \Phi_k \Leftrightarrow \xi[j] \models \texttt{FD}(\Phi_k),~k=1,2.
    \]
    Hence, for any such $\xi$, 
    \[
    \xi \models \mathtt{FD}(\Phi_1) \mathbf{U} \big[ \mathbb{P}_{=1}(\mathbf G \Phi_1) \vee (\mathtt{FD}(\Phi_1)  \wedge \mathtt{FD} (\Phi_2)) \big] \Leftrightarrow \xi \models \Phi_1 \mathbf{U} \big[ \mathbb{P}_{=1}(\mathbf G \Phi_1) \vee (\Phi_1  \wedge \Phi_2) \big]
    \]
    Now, it is clear that $\xi \models \Phi_1 \mathbf{U} (\mathbb{P}_{=1}(\mathbf{G}\Phi_1))$ implies $\xi \models \Phi_1 \mathbf{W} (\Phi_1 \wedge \Phi_2) $ since it implies $\mathbf{G} \Phi_1$. Similarly, $\Phi_1 \mathbf{U}(\Phi_1 \wedge \Phi_2)$ implies $\Phi_1 \mathbf{W}(\Phi_1 \wedge \Phi_2)$. Conversely, let $\xi\models \Phi_1 \mathbf{W} (\Phi_1 \wedge \Phi_2)$. Either $\xi \models \Phi_1 \mathbf{U}(\Phi_1 \wedge \Phi_2)$, in which case $\xi \models \Phi_1 \mathbf{U} \big[ \mathbb{P}_{=1}(\mathbf G \Phi_1) \vee (\Phi_1  \wedge \Phi_2) \big]$. Or, $\xi \models \mathbf{G} \Phi_1$. With 
    \[
    u_s = \mathbb{P}(\mathbf{G}\Phi_1|s),
    \]
    we define 
    \[
    U_\MDP = \sup \mathcal{S}_{\neq 1},\quad  \mathcal{S}_{\neq 1}=\{ u_s \neq 1,~s\in \mathcal{S} \}.
    \]
    Since the MDP is finite, either $ \mathcal{S}_{\neq 1}= \emptyset$, in which case every path satisfies both formulas, or $U_{\MDP} = \alpha <1$.

    Hence, either $\xi$ reaches $s_*\in  \mathcal{S}_{\neq 1}$, which means that $\xi \models \Phi_1 \mathbf{U} (\mathbb{P}_{=1}(\mathbf{G}\Phi_1))$, or for each $n\in \mathbb{N}$, $\mathbb{P}(\mathbf{G}\Phi|\xi[n])\leq \alpha$. Assume that $\xi \models \mathbf{G}\Phi_1$. Consider $\varphi(n)$ a set of indices such that $\xi[\varphi(n)]=\xi[\varphi(n+1)]=s_*$. Either on all paths looping at $s_*$, $\Phi_1$ holds at every intermediate state, in which case $s_*\in \mathcal{S}\setminus S_{\neq 1}$, or there exists a path with non-zero probability such that $\neg \Phi_1$ holds, in which case $\xi$ belongs to a set of zero-measure, the set of paths looping indefinitely from $s_*$ to $s_*$ and never choosing this option. The countable union, indexed by $\mathcal{S}$, of such zero-measure sets is of measure zero.

\end{itemize}
\end{proof}

\end{document}